\newcommand{\commentout}[1]{}
\newtheorem{thm}{Theorem}[section]
\newtheorem{theorem}[thm]{Theorem}
\newtheorem{lem}[thm]{Lemma}
\newtheorem{prop}[thm]{Proposition}
\newtheorem{cor}[thm]{Corollary}
\newtheorem{ex}[thm]{Example}
\newtheorem{rmk}[thm]{Remark}
\newtheorem{lemma}[thm]{Lemma}
\newtheorem{corollary}[thm]{Corollary}
\newcommand{\nwc}{\newcommand*}
\nwc{\ben}{\begin{equation*}}
\nwc{\bea}{\begin{eqnarray}}
\nwc{\beq}{\begin{eqnarray}}
\nwc{\bean}{\begin{eqnarray*}}
\nwc{\beqn}{\begin{eqnarray*}}
\nwc{\beqast}{\begin{eqnarray*}}
\nwc{\eal}{\end{align}}
\nwc{\een}{\end{equation*}}
\nwc{\eea}{\end{eqnarray}}
\nwc{\eeq}{\end{eqnarray}}
\nwc{\eean}{\end{eqnarray*}}
\nwc{\eeqn}{\end{eqnarray*}}
\theoremstyle{remark}
\nwc{\nn}{\nonumber}
\nwc{\mb}{\mathbf}
\nwc{\ml}{\mathcal}
\newcommand{\lt}{\left}
\newcommand{\rt}{\right}
\nwc{\vep}{\varepsilon}
\nwc{\ep}{\epsilon}
\nwc{\vrho}{\varrho}
\nwc{\orho}{\bar\varrho}
\nwc{\vpsi}{\varpsi}
\nwc{\lamb}{\lambda}
\nwc{\om}{\omega}
\nwc{\Om}{\Omega}
\nwc{\al}{\alpha}
\nwc{\sgn}{\mbox{\rm sgn}}
\nwc{\IA}{\mathbb{A}} %algebraic
\nwc{\bi}{\mathbf{i}}
\nwc{\ba}{\mathbf{a}}
\nwc{\bmb}{\mathbf{b}}
\nwc{\bo}{\mathbf{o}}
\nwc{\IB}{\mathbb{B}}
\nwc{\IC}{\mathbb{C}} %complex
\nwc{\ID}{\mathbb{D}} %Dedekind
\nwc{\IM}{\mathbb{M}} %Dedekind
\nwc{\IP}{\mathbb{P}} %Dedekind
\nwc{\II}{\mathbb{I}} %Dedekind
\nwc{\IE}{\mathbb{E}} %Euklides
\nwc{\IF}{\mathbb{F}} %finite field
\nwc{\IG}{\mathbb{G}} %Gauss
\nwc{\IN}{\mathbb{N}} %natural
\nwc{\IQ}{\mathbb{Q}} %rational
\nwc{\IR}{\mathbb{R}} %real
\nwc{\IT}{\mathbb{T}} %torus
\nwc{\IZ}{\mathbb{Z}} %integers
\nwc{\cE}{{\ml E}}
\nwc{\cP}{{\ml P}}
\nwc{\cQ}{{\ml Q}}
\nwc{\cL}{{\ml L}}
\nwc{\cX}{{\ml X}}
\nwc{\cW}{{\ml W}}
\nwc{\cZ}{{\ml Z}}
\nwc{\cR}{{\ml R}}
\nwc{\cV}{{\ml V}}
\nwc{\cT}{{\ml T}}
\nwc{\crV}{{\ml L}_{(\delta,\rho)}}
\nwc{\cC}{{\ml C}}
\nwc{\cO}{{\ml O}}
\nwc{\cA}{{\ml A}}
\nwc{\cK}{{\ml K}}
\nwc{\cB}{{\ml B}}
\nwc{\cD}{{\ml D}}
\nwc{\cF}{{\ml F}}
\nwc{\cS}{{\ml S}}
\nwc{\cM}{{\ml M}}
\nwc{\cG}{{\ml G}}
\nwc{\cH}{{\ml H}}
\nwc{\bk}{{\mb k}}
\nwc{\bn}{{\mb n}}
\nwc{\bp}{{\mb p}}
\nwc{\bz}{\mb z}
\nwc{\bl}{{\mb l}}
\nwc{\bj}{{\mb j}}
\nwc{\bs}{{\mb s}}
\nwc{\by}{\mathbf{h}}
\nwc{\bZ}{\mathbf{Z}}
\nwc{\bF}{\mathbf{F}}
\nwc{\bE}{\mathbf{E}}
\nwc{\bV}{\mathbf{V}}
\nwc{\bY}{\mathbf Y}
\nwc{\br}{\mb r}
\nwc{\pft}{\cF^{-1}_2}
\nwc{\bU}{{\mb U}}
\nwc{\bG}{{\mb G}}
\nwc{\bg}{\mathbf{g}}
\nwc{\mbf}{\mathbf{f}}
\nwc{\mbe}{\mathbf{e}}
\nwc{\be}{\mathbf{e}}
\nwc{\ind}{\operatorname{I}}
\nwc{\mbx}{\mathbf{f}}
\nwc{\bb}{\mathbf{g}}
\nwc{\xmax}{f_{\rm max}}
\nwc{\xmin}{f_{\rm min}}
\nwc{\suppx}{\hbox{\rm supp} (\mbf)}
\nwc{\cI}{\IZ^2_N}
\nwc{\chis}{{\chi^{\rm s}}}
\nwc{\chii}{{\chi^{\rm i}}}
\nwc{\pdfi}{{f^{\rm i}}}
\nwc{\pdfs}{{f^{\rm s}}}
\nwc{\pdfii}{{f_1^{\rm i}}}
\nwc{\pdfsi}{{f_1^{\rm s}}}
\nwc{\thetatil}{{\tilde\theta}}
\nwc{\red}{\color{red}}
\nwc{\blue}{\color{blue}}
\nwc{\prox}{\hbox{prox}}
\nwc{\diag}{\hbox{\rm diag}}
\nwc{\supp}{{\hbox{\rm supp}}}
\nwc{\sloc}{J_{\rm f}}
\nwc{\bu}{{\mb u}}
\nwc{\bv}{{\mb v}}
\nwc{\cU}{\mathcal{U}}
\nwc{\cN}{\mathcal{N}}
\nwc{\bN}{\mathbf{N}}
\nwc{\mbm}{\mathbf{m}}
\nwc{\bw}{\mathbf{w}}
\nwc{\bom}{\mathbf{w}}
\nwc{\bt}{\mathbf{t}}
\nwc{\z}{y}
\nwc{\cY}{\mathcal{Y}}
\nwc{\bM}{\mathbf{M}}
\nwc{\half}{{1\over 2}}
\nwc{\Sf}{S_{\rm f}}
\nwc{\Jf}{J_{\rm f}}
\nwc{\nul}{\hbox{\rm null}_\IR}
\nwc{\spanR}{\hbox{\rm span}_\IR}
\nwc{\Arg}{\hbox{\rm Arg~}}
\nwc{\fdr}{S_{\rm f}}
\nwc{\phase}[1]{\exp\lt[i\measured #1\rt]}
\nwc{\im}{{\rm i}}
\nwc{\cle}{\preccurlyeq}
\nwc{\lb}{\llbracket}
\nwc{\rb}{\rrbracket}
\nwc{\modpi}{{{\rm mod}\,2\pi}}
\nwc{\tphi}{{{\phi}_0}}
\nwc{\mpc}{\,\mbox{MPC($\gamma$)}\,\,}
\begin{document}

\centerline{\blue {\em Inverse Problems} {\bf 36} (2020) 045005 }
 \title{
Blind Ptychography: Uniqueness \& Ambiguities 
}

\author{Albert Fannjiang 
 \address{
Department of Mathematics, University of California, Davis, California  95616, USA. Email:  {\tt fannjiang@math.ucdavis.edu}
} \and Pengwen Chen
\address{ Applied Mathematics, National Chung Hsing University, Taichung 402, Taiwan. Email: {\tt pengwen@nchu.edu.tw}}
}

\maketitle 

\begin{abstract} Ptychography with an unknown mask and object is analyzed for general ptychographic measurement schemes
that are strongly connected and possess an anchor. 

Under a mild constraint on the mask phase,
it is proved that the masked object estimate must be the product of a block phase factor
and the true masked object.  This  local uniqueness manifests itself in the 
phase drift equation that determines the ambiguity at different locations connected by ptychographic shifts. 

The proposed mixing schemes effectively connects the ambiguity throughout the whole domain such that a distinct ambiguity
profile arises and consequently possess 
the  global uniqueness  that the block phases have an affine profile and
that the object and mask can be simultaneously recovered up to a constant scaling factor and
an affine phase factor.

\end{abstract}

%\begin{keywords} Ptychography, phase retrieval, affine phase ambiguity, uniqueness.
%\end{keywords}
% \begin{AMS}49K35, 05C70,  90C08\end{AMS}

\section{Introduction}

  \begin{figure}[t]
\begin{center}
%\subfigure[TCB]{
\includegraphics[width=10cm]{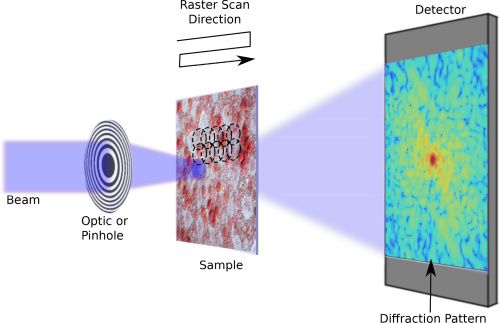}
%}\hspace{-0.6cm}
\caption{Simplified ptychographic setup showing a Cartesian grid used for the overlapping raster scan positions.
Adapted with permission from  \cite{parallel} \copyright The Optical Society.
}
\label{fig0}
\end{center}
\end{figure}

Ptychography is the scanning version of coherent diffractive imaging (CDI) \cite{CDI} that 
acquires  multiple diffraction patterns through the scan of a localized illumination on an extended object (Fig. \ref{fig0}). The redundant information in the overlap between adjacent illuminated spots is then exploited  to improve phase retrieval methods  \cite{Pfeiffer,Nugent,Rod08}. Ptychography originated in electron microscopy \cite{Hoppe1, Hoppe2, EM0, EM1,EM2,EM-ptych1,EM-ptych2} and has been successfully  implemented with X-ray, optical  and terahertz waves  \cite{ PIE104,PIE204,DM08, probe09, ptycho10,terahz,FROG}.   
Recently ptychography  has been extended to the Fourier domain \cite{FPM13, Yang14, add1}.  In Fourier ptychography,  illumination angles are scanned sequentially with a programmable array source with the diffraction pattern measured at each angle.

Ptychographic CDI has its origin in a concept developed for the crystallographic phase problem: Hoppe \cite{Hoppe1} pointed out that if one can make the Bragg peaks of crystalline diffraction patterns interfere, information about their relative phases can be obtained and therefore suggested to use a localized illumination instead of the usual extended plane wave. 
Due to the Fourier convolution theorem, the crystal's diffraction peaks in the resulting far-field pattern are then convolved with the Fourier transform of the localized illumination. When the extent of the illumination is shrunk to about the same order of magnitude as the crystalline unit cell, this leads to overlap between adjacent Bragg peaks and thus the desired interferences. While these interferences already allow to determine the relative phases, the twin-image ambiguity remains. Hoppe showed that an unambiguous result can be obtained by recording another diffraction pattern at a slightly shifted position of the localized illumination.  Hoppe \cite{Hoppe2} further discussed the extension of ptychography to non-periodic objects and
 the possibility of scanning transmission electron diffraction microscopy. 
 
  An important development in ptychography since the work of Thibault {\em et al.}  \cite{DM08,probe09} is the potential of simultaneous 
recovery of the object and the illumination (blind ptychography).
Blind ptychographic reconstruction  is affected by many factors such as
the type of illumination and  the amount of overlap between adjacent illuminations. In practice, numerical reconstruction with the widely used
algorithm, the extended Ptychographic Iterative Engine (ePIE), and its variants typically require 
60-70\% overlap between adjacent illuminations \cite{overlap,ePIE09,rPIE17} (see Section \ref{sec:con} for more discussion). 
The convergence of numerical reconstruction  is monitored with the residual of the ptychographic data or the difference between successive estimates  \cite{ePIE09, DM08,Fie08,ML12,Waller15, Hesse}. 

Even in the noiseless case, however, numerical convergence does not necessarily 
imply recovery of the mask and the object.  To ensure that a vanishing residual (data fitting) implies a vanishing reconstruction error in the noiseless case, we need a theory of uniqueness of solution. To be sure, a completely blind ptychography or phase retrieval is untenable. 

First of all,  even with a complete prior information of the mask/illumination, we have  shown in a recent work \cite{ptych-unique}   that twin-image ambiguity 
does arise if  the Fresnel number of the commonly used Fresnel illumination takes on certain values, resulting in poor
reconstruction and hinting on
the benefits of avoiding symmetry and increasing complexity of the mask.  
A simple way to avoid symmetry and increase complexity is to use a random mask for illumination.  Random masking is  a form of coded aperture and  has found applications in many imaging modalities and significant improvements
on imaging qualities  \cite{AH, Alm, BWW, rand, rpi, meta1, meta3,ptycho-rpi, random-aperture, diffuser, RCM, random-coding, meta2, Spread, Fucai1,Fucai2,Horisaki1,Horisaki2, rPIE17, optimal, ptych-unique,March16}.

For nonptychographic phase retrieval, the capability of a randomly coded aperture in removing all the ambiguities, including the translation and twin-image ambiguities, was rigorously analyzed in \cite{unique}.  
Moreover,  uniqueness theory for blind phase retrieval with a plain and a randomly coded diffraction pattern has been
developed in \cite{pum} which  assumes slight prior knowledge
about the phase range of the random mask. In other words, with  a plain and a randomly coded
diffraction pattern one can uniquely and simultaneously determine both the unknown object and the unknown mask. 
In contrast, in blind ptychography we work with just one unknown mask which is more challenging. 
As random masks are typically harder to calibrate (but easier to fabricate) than a deterministic mask, blind ptychography and phase retrieval is particularly useful when a random mask is used. 

This paper concerns the uniqueness question for blind ptychography with a randomly phased mask under certain prior information. 
We exhibit examples to show these priors are in some sense necessary. 
Moreover, we aim to characterize a general class of measurement schemes that avoid the pitfalls of the regular raster scan shown in Figure \ref{fig0} (see Examples \ref{ex6} and \ref{ex5}).

\subsection{Inherent ambiguities}

Let us begin with  two inherent ambiguities to blind ptychography.
 
% \subsection{Inherent ambiguities}

Let $\lb k,l\rb$ denote the integers
between and including the integers $k$ and $l$. Let $\cM^{0}:=\IZ_m^2=\lb 0,m-1\rb^2$ be the initial window area, i.e.  the support of the mask $\mu^{0}$. 
 Let $\cM$ be the object domain containing the support of the discrete object $f$. 
 %For convenience, we set  $\cM=\IZ_n^2$ for some integer $ n>m.$

Let $\cT$ be the set of all shifts, including $(0,0)$,  involved  in the ptychographic measurement. 
 Denote by $\mu^\bt$ the $\bt$-shifted probe for all $\bt\in \cT$ and $\cM^\bt$ the domain of
$\mu^\bt$. Let $f^\bt$ the object restricted to $\cM^\bt$.
\commentout{and  $\mb{\rm Twin}(f^\bt)$ the twin image of $f^\bt$ in $\cM^\bt$ defined as
\[
\mb{\rm Twin}(f^\bt)(\bn)=\bar f^\bt(\bN+2\bt-\bn),\quad\bn\in \cM^\bt,\quad\bN=(n,n).
\]
}
We  refer to each $f^\bt$ as a part of $f$ and write $f=\vee_\bt f^\bt$ {  where $\vee$ is the ``union" of functions consistent over their common support set}. In ptychography, the original object is broken up into a set of overlapping object parts, each of which produces a $\mu^\bt$-coded diffraction pattern.  
The totality of the coded diffraction patterns is called the ptychographic measurement data.  Let $\nu^{0}$ (with $\bt=(0,0)$) and $g=\vee_\bt g^\bt$ be any pair
 of the probe and the object estimates producing  the same ptychography data as $\mu^{0}$ and $f$, i.e.
 the diffraction pattern of $\nu^\bt\odot g^\bt$ is identical to that of $\mu^\bt\odot f^\bt$ where
 $\nu^\bt$ is the $\bt$-shift of $\nu^{0}$ and $g^\bt$ is the restriction of $g$ to $\cM^\bt$. 
 For simplicity, we assume  the periodic boundary condition on $\cM$ (i.e. discrete torus). 
The periodic boundary condition refers to the measurement scheme when the mask crosses over the boundaries of 
the object domain $\cM$ and 
should not be taken as 
the assumption of $f$ being a periodic object. The latter implies the former but not vice versa.

Consider the probe and object estimates
\beq
\label{lp1}
\nu^{0}(\bn)&=&\mu^{0}(\bn) \exp(-\im a -\im \br\cdot\bn),\quad\bn\in\cM^{0}\\
\label{lp2} g(\bn)&=& f(\bn) \exp(\im b+\im \br\cdot\bn),\quad\bn\in \cM
\eeq
for any $a,b\in \IR$ and $\br\in \IR^2$.  For any $\bt$, we have the following
calculation
\beqn
\nu^\bt(\bn)&=&\nu^{0}(\bn-\bt)\\
&=&\mu^{0}(\bn-\bt) \exp(-\im \br\cdot(\bn-\bt))\exp(-\im a)\\
&=&\mu^\bt(\bn) \exp(-\im \br\cdot(\bn-\bt))\exp(-\im a)
\eeqn
and hence for all $\bn\in \cM^\bt, \bt\in\cT$
\beq
\label{drift2}
\nu^\bt(\bn) g^\bt(\bn)&=&\mu^\bt(\bn)f^\bt(\bn) \exp(\im(b-a))\exp(\im \br\cdot\bt). 
\eeq
Since
for each $\bt$, $\nu^\bt\odot g^\bt$ is the phase  factor $ \exp(\im(b-a))\exp(\im \br\cdot\bt)$ times $\mu^\bt\odot f^\bt$ {  where $\odot$ is the entry-wise (Hadamard) product},  $g$ and $\nu^{0}$ produce the same ptychographic data as $f$ and $\mu^{0}$.  This  holds
true regardless of the set $\cT$ of shifts and the mask. 

In addition to the affine phase ambiguity \eqref{lp1}-\eqref{lp2},  a scaling factor ($g=c f, \nu^{0}=c^{-1} \mu^{0}, c>0$) is inherent to any blind ptychography. However,  when the mask is exactly known (i.e. $\nu^{0}=\mu^{0}$ up to a constant phase factor), $\br=0$ and $c=1$ so neither ambiguity can occur. 

In addition, for the regular raster scan (Fig. \ref{fig0}), it is well known that
blind ptychography is susceptible to many other artifacts \cite{probe09}.  For a complete analysis
of these ambiguities, the reader is referred to Ref. \cite{raster}.

A crucial question then is, Under what conditions are the scaling factor and the affine phase ambiguity 
the only ambiguities in blind ptychography? We aim to answer this question in this paper.

  Briefly and informally, we summarize the results as follows.

\subsection{Contributions}

 \begin{figure}[t]
\begin{center}
\subfigure[]{\includegraphics[width=5cm]{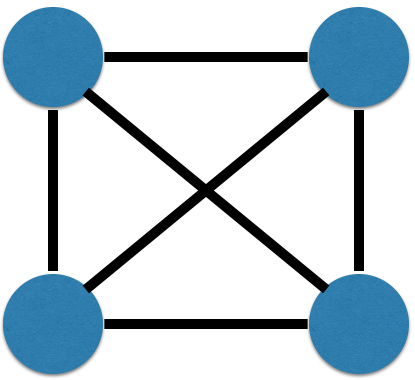}}\hspace{2cm}
\subfigure[]{\includegraphics[width=5cm]{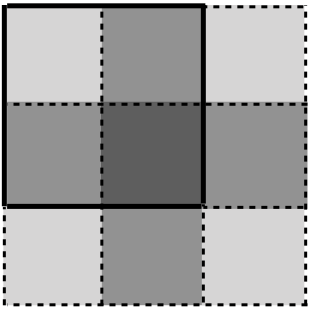}}
\caption{A complete undirected graph (a) representing four connected object parts (b) where the grey level indicates the number of coverages by the mask in four scan positions.
}
\label{fig:graph}
\end{center}
\end{figure}

\commentout{%07/10/2019

 Let the initial mask domain $\cM^0:=\IZ_m^2=\{(k,l): k,l=0,\cdots, m-1\}$ be the support of the mask function $\mu^0$.  Let $\cT$ be the set of all shifts, including $(0,0)$,  involved  in the ptychographic measurement.

 Denote by $\mu^\bt$ the $\bt$-shifted mask for all $\bt\in \cT$ and $\cM^\bt$ the domain of
$\mu^\bt$. Let $\cM:=\cup_{\bt\in \cT}\cM^\bt$.   Let $f^\bt$ the object restricted to $\cM^\bt$ and  $\mb{\rm Twin}(f^\bt)$ the twin image of $f^\bt$ in $\cM^\bt$. 
We can write $f=\vee_\bt f^\bt\subseteq\cM$ and refer to each $f^\bt$ as a part of $f$. In ptychography, the original object is broken up into a set of overlapping object parts, each of which produces a coded diffraction pattern (coded by $\mu^\bt$).  

The totality of the $\bt$-labelled coded diffraction patterns is called the ptychographic measurement data in this paper.  Let $\nu^0$ (with $\bt=(0,0)$) and $g=\vee_\bt g^\bt$ be any pair
 of the mask and the object estimates producing  the same ptychography data as $\mu^0$ and $f$, i.e.
 the diffraction pattern of $\nu^\bt\odot g^\bt$ is identical to that of $\mu^\bt\odot f^\bt$ where
 $\nu^\bt$ is the $\bt$-shift of $\nu^0$ and $g^\bt$ is the restriction of $g$ to $\cM^\bt$. 
 For convenience, $\mu^\bt, f^\bt,\nu^\bt, g^\bt $ assume the value zero outside of $\cM^\bt$. 
}

The first basic requirement of our method is the strong connectivity property of the object with respect to the measurement scheme. 
It is useful to think of connectivity in graph-theoretical terms (Fig. \ref{fig:graph}): Let the ptychographic experiment be represented by
a complete graph $\Gamma$ whose notes correspond to $\{f^\bt:\bt\in \cT\}$. Given any positive integer $s$, an edge between two nodes  corresponding to $f^\bt$ and $f^{\bt'}$ is $s$-connective if 
\beq
\label{r100}
%\red{What about $f^{\bt'}$?}
|\cM^\bt\cap \cM^{\bt'}\cap\supp(f)|\ge s
\eeq
where $|\cdot|$ denotes the cardinality.
In the case of full support (i.e. $\supp(f)=\cM$),  \eqref{r100} becomes
$|\cM^\bt\cap \cM^{\bt'}|\ge s$.
An $s$-connective reduced graph $\Gamma_s$ of $\Gamma$ consists
of all the nodes of $\Gamma$ but only the $s$-connective edges.  Two nodes are adjacent (and neighbors) in $\Gamma_s$ iff they are $s$-connected. A chain in $\Gamma_s$ is a sequence of nodes  such that 
two successive nodes are adjacent. In a simple chain all the nodes are distinct. Then the object parts $\{ f^\bt:\bt\in \cT\}$ are  $s$-connected if and only if $\Gamma_s$ is a connected graph, i.e. every two nodes is connected by a chain of $s$-connective edges. Loosely speaking, an object is strongly connected w.r.t. the ptychographic scheme if $s\gg 1$.

The second requirement is the existence of an {\em anchoring} part. Informally speaking, an object part $f^\bt$ is an anchor if
its support touches four sides of $\cM^\bt$ (Figure \ref{fig:corn}). Specifically, an object part $f^\bt$ is an anchor if  $f^\bt$ has a tight support in $\cM^\bt$, i.e. 
\beq
\label{tight}
\mb{\rm Box}[\supp(f^\bt)]=\cM^\bt
\eeq 
where $\mb{\rm Box}[E]$ stands for the box hull,  the smallest
rectangle containing $E$ with sides parallel to $\be_1=(1,0)$ or $\be_2=(0,1)$. An object part does not have a tight support if and only it has a loose support. Clearly, $f^\bt$ has a tight support if and only if $\mb{\rm Twin}(f^\bt)$ does  since  $\mb{\rm Box}[\supp(f^\bt)]=\mb{\rm Box}[\supp(\mb{\rm Twin}(f^\bt))]+\mbm$ for some $\mbm$. In the case $\supp(f)=\cM$, any object part is an anchor. 
For an extremely sparse object such as shown in Figure \ref{fig:corn}, the anchoring assumption can pose a challenge. 

 \begin{figure}[t]
\begin{center}
\includegraphics[width=12cm]{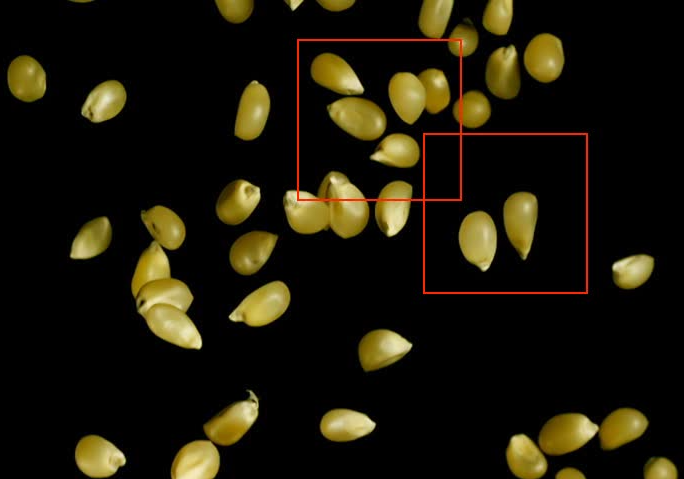}
\caption{Sparse objects such as this image of corn grains, where the dark area represents zero pixel value, can be challenging to ptychographic measurements.  The two red-framed blocks are not connected even though they overlap. The object part in the lower-right block is not an anchor since the object support does not touch the four sides of the block while
the object part in the upper-left block is an anchor. Indeed, the two corn grains at the lower-left and upper-right corners alone of the latter block
suffice to create a tight support. 
}
\label{fig:corn}
\end{center}
\end{figure}

For the unknown mask,  we need some prior information called the {\em mask phase constraint} (MPC):  
 \begin{quote}\em 
%Suppose that $\mu^0(\bn)$  are independently and continuously distributed nonvanishing random variables 
The mask estimate $\nu^0$ has  the property
$\Re(\bar \nu^0\odot \mu^0)>0$ at every pixel (where $\odot$ denotes the component-wise product and the bar denotes the complex conjugate). 
 \end{quote}
{   See Figure \ref{fig:MPC}. MPC can be relaxed as $|\arg[\nu^0(\bn)/\mu^0(\bn)]|<\pi/2$ for {\em sufficiently large percentage} of $\bn$. 
 For simplicity of presentation, however, we shall work with the technically simplifying version as above. 
 
Even with the perfect knowledge of the mask amplitude, MPC allows a large relative error 
\[
\sqrt{{1\over \pi}\int^{\pi/2}_{-\pi/2} |e^{\im\phi}-1|^2 d\phi}=\sqrt{2(1-{2\over \pi})}\approx 0.8525 %>\sqrt{{2\over 3}}.
\]
when $\arg[\nu^0]$ is selected randomly and uniformly in the interval $|\arg[\nu^0(\bn)/\mu^0(\bn)]|<\pi/2$. }

 \begin{figure}
 \centering
 \includegraphics[width=7cm]{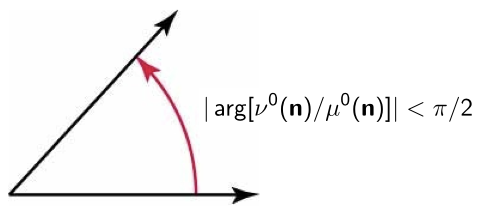}
\caption{$\nu^0$ satisfies MPC if $\nu_0(\bn)$ and $\mu^0(\bn)$ form an acute angle for all $\bn$.}
\label{fig:MPC}
\end{figure}

For any strongly connective scheme under the assumptions of MPC and anchoring, 
we prove the local uniqueness result for
blind ptychography  (Theorem \ref{thm:u} and \ref{thm:many}) that with high probability (exponentially close to 1 in $s$)  in  the random selection of $\mu^0$, 
\beq
\label{7.1}
\nu^{\bt}\odot g^\bt=e^{\im\theta_{\bt}}\mu^{\bt}\odot f^\bt,\quad \bt\in \cT, 
\eeq
for some constants $\theta_\bt\in \IR$ (called block phases)
 if $g$ and $\nu^\bt$ produce the same diffraction pattern as $f$ and $\mu^\bt$ for all $\bt\in \cT$. 
 As shown by  Examples \ref{ex0} and \ref{ex3.1}, both MPC and the anchoring assumption are in some sense necessary for \eqref{7.1} to hold. 

We refer to the ambiguity equation \eqref{7.1} as the {\em local uniqueness} property since $\theta_\bt$ may be more complicated than just an affine
 profile, $\theta_0+\bt\cdot\br,$  for some $\br\in \IR^2$,  as in \eqref{drift2}.  
Indeed, the affine phase ambiguity \eqref{lp1}-\eqref{lp2}  means that the relation \eqref{7.1} with
an affine profile in $\theta_\bt$ is the best to hope for. On the other hand, we say that the {\em global uniqueness}
holds if the affine phase ambiguity and the scaling factor ambiguity are the only ambiguities. We say that  a  ptychographic 
scheme is {\em complete} for a given object if the global uniqueness holds.

The ambiguity equation \eqref{7.1} can be transformed into the phase drift equation which plays the key role
in our theory. Consider the object ambiguity represented by
 \beqn
h(\bn)&\equiv& \ln g(\bn)-\ln f(\bn),\quad\forall \bn\in \cM, %&=&\im \theta_k -\ln \alpha^k(\bn-\bt_k)-\im \phi^k(\bn-\bt_k)\\
%\nn &=&\im \theta_{k}-\ln\alpha(\bn)-\im \phi(\bn) \mod \im 2\pi,\quad \forall \bn \in \cM^0,
\eeqn
provided that both $f$ and $g$ are non-vanishing. 
The phase drift equation 
 \beq
 \label{200.5}
 h(\bn+\bt)-h(\bn+\bt')=\im \theta_{\bt}-\im\theta_{\bt'} \mod \im2\pi,\quad \forall \bn\in \cM^{0},\quad \forall \bt,\bt'\in \cT
 \eeq
 equates the difference in the object ambiguity in different blocks with the phase drift in the block phase.

 \commentout{% 07/10/2019
{\bf Affine phase ambiguity.}
For any $\br\in \IZ^2, a,b\in \IR$, let  \beq
\label{lp1}
\nu^0(\bn)&=&\mu^0(\bn) \exp(-\im a-\im \br\cdot\bn),\quad\bn\in\cM^0\\
\label{lp2} g(\bn)&=& f(\bn) \exp(\im b+\im \br\cdot\bn),\quad\forall\bn\in \cM.
\eeq
For any $\bt$, we have the following
calculation
\beqn
\nu^\bt(\bn)&=&\nu^0(\bn-\bt)\\
&=&\mu^0(\bn-\bt) \exp(-\im \br\cdot(\bn-\bt))\exp(-\im a)\\
&=&\mu^\bt(\bn) \exp(-\im \br\cdot(\bn-\bt))\exp(-\im a).
\eeqn
Since, by the above calculation, 
\beq
\label{drift2}
\nu^\bt(\bn) g^\bt(\bn)&=&\mu^\bt(\bn)f^\bt(\bn) \exp(\im \br\cdot\bt)\exp(\im(b-a)),\quad \forall\bt\in \cT, \bn\in \cM^\bt, 
\eeq
$g^\bt$ and $\nu^\bt$ produce the same diffraction pattern as $f^\bt$ and $\mu^\bt$ for all $\bt$. 

In addition to the affine phase ambiguity \eqref{lp1}-\eqref{lp2}, another ambiguity,  a scaling constant factor ($g=c f, \nu^0=c^{-1} \mu^0, c>0$),
is also inherent to any blind ptychography as can easily be checked. 
}

Most important,   we show that the mixing schemes, introduced here for the first time, ``mix" the ambiguity so completely that 
a distinct  ambiguity  profile (affine phase plus scaling factor) arises and the global uniqueness holds true (Theorem \ref{thm:mix}). 
The mixing schemes include the special case of 
small perturbations of the regular raster scan (Theorems \ref{thm7.4} and \ref{thm7.5}). 
On the other hand, while the global uniqueness fails for  the regular raster scan, the block phases nevertheless have an affine profile 
(Proposition \ref{prop:raster}).

The rest of the paper is organized as follows.
In Section \ref{sec:ptych}, we formulate the basic building block of the
ptychographic measurement and  discuss ambiguities in standard phase retrieval
with one coded diffraction pattern. 
 In Section \ref{sec:two} we consider the ptychography with two overlapping diffraction patterns and 
prove the local uniqueness for the masked object (Theorem \ref{thm:u}). We then extend the local uniqueness to the multi-part ptychography (Theorem \ref{thm:many}). In Section \ref{sec:example} we demonstrate with examples that
the prior information of MPC and anchoring is necessary for the local uniqueness result (Examples \ref{ex0} and \ref{ex3.1}). In Section \ref{sec:block}, we develop the phase drift equation that holds the key to the global uniqueness result. In Section \ref{sec:raster}, we exhibit additional ambiguities associated
with the regular raster scan (Examples \ref{ex6} and \ref{ex5}) and prove that the block phases of the raster scan must have an affine
profile (Proposition \ref{prop:raster}).  In Section \ref{sec:simple}, we prove the global uniqueness theorems for
 the perturbed raster scans with the overlap ratio greater than $50\%$ (Theorems \ref{thm7.4} and \ref{thm7.5}). 
 In Section \ref{sec:blind}, we give an example showing that the minimum overlap ratio 50\% is necessary for the perturbed raster scans to be
 ptychographically complete 
 and introduce the mixing schemes which are ptychographically complete and whose block phases must have an affine profile (Theorem \ref{thm:mix}). 
We conclude in Section \ref{sec:con} and discuss a few practical implications of our theory. 
A preliminary version of this paper was presented in \cite{blind-ptycho}.

\section{Coded diffraction pattern}\label{sec:ptych}

We start with  the set-up of coded diffraction patterns \cite{Miao}. 

Let $f^0$ be a part of  the unknown object $f$ restricted to the initial block $\cM^0=\IZ^2_m, m<n,$ and let the Fourier transform of $f^0$ be written as 
\[
F(e^{-\im 2\pi\bw})=\sum_{\bk\in \cM^0} e^{-\im 2\pi \bk\cdot\bw} f^0(\bk),\quad \bw=(w_1,w_2). 
\]

Under the Fraunhofer 
approximation, the diffraction pattern  can be written as  \beq
|F(e^{-\im 2\pi\bw})|^2= \sum_{\bk \in \widetilde\cM^0}\lt\{\sum_{\bk'\in \cM^0} f^0(\bk'+\bk)\overline{f^0(\bk')}\rt\}
   e^{-\im 2\pi \bk\cdot \bom},\quad \bom\in [0,1]^2\label{auto}
   \eeq
   where
    \begin{equation*}
 \widetilde \cM^0 = \{ (k_1,k_2)\in \IZ^2: -m+1\le k_1 \le m-1, -m+1\le k_2\leq m-1 \} 
 \end{equation*}
 and $f^0$ assumes the value zero  outside of
 $\cM^0$. 
 Here and below the over-line notation means
complex conjugacy. 

The expression in the brackets  in \eqref{auto} is the autocorrelation function of $f^0$ and
the summation over $\bn$ takes the form of Fourier transform on  the enlarged  grid $\widetilde\cM^0$.
Hence sampling $|F|^2$ 
 on the grid 
\beq\label{L}
\cL = \Big\{(w_1,w_2)\ | \ w_j = 0,\frac{1}{2 m-1},\frac{2}{2m-1},\cdots,\frac{2m-2}{2m-1}\Big\}
\eeq
provides sufficient information to recover the autocorrelation function.

A randomly coded diffraction pattern measured with the mask $\mu^0$ is 
the diffraction pattern for 
the  {\em masked object} $
\tilde f^0(\bn) =f^0(\bn) \mu^0(\bn)$ 
where the mask function $\mu^0$ is a finite array of random variables.   
The masked object is also called the {\em exit wave} in the parlance of optics literature. 
In other words, a coded diffraction pattern is just the plain diffraction pattern of
a masked object. 

We assume  {randomness} in the phases $\theta$ of the mask function 
$
\mu^0(\bn)=|\mu^0|(\bn)e^{\im \theta(\bn)}
$
where  $\theta(\bn)$ are independent, continuous real-valued random variables. In other words, each $\theta(\bn)$ is independently distributed with a probability density function $p_\gamma$ supported on $(-\gamma\pi,\gamma\pi]$ with a constant $ \gamma\in [0,1]$. Continuous phase modulation can be experimentally realized with various
techniques such as spread spectrum phase modulation 
\cite{Spread}.

We also require that  $|\mu^0|(\bn)\neq 0,\forall \bn\in \cM^0$ (i.e. the mask is transparent). This is necessary for unique reconstruction of the object
as any opaque pixels of the mask would block the transmission of the object information. 
%By absorbing $|\mu^0(\bn)|$ into the object function we can assume, without loss of generality, that $|\mu^0(\bn)|=1,\forall\bn$, i.e.  a {\em phase} mask. With a proper choice of the normalizing constant  $c$, a phase mask then gives rise to an  {\em isometric}  

%\section{Ambiguities  in the masked object}\label{sec:one}

First we review the case of a plain diffraction pattern ($\mu^0\equiv 1$).  

\begin{prop}

\label{prop1}\cite{Hayes}
Let the $z$-transform 
$F(\bz) = \sum_{\bn} f^0(\bn) \bz^{-\bn}$ 
 be
given by
\beq
F(\bz)=\alpha \bz^{-\mbm} \prod_{k=1}^p F_k(\bz),\quad  \mbm\in \IN^2, \quad \alpha\in \IC\label{21}
\eeq
where $F_k, k=1,\dots,p,$ are 
non-monomial irreducible polynomials. Let $G(\bz)$ be
the $\bz$-transform of another finite array $g^0(\bn)$. 
%vanishing outside ${\mathbf 0}\leq \bn\leq \bN$.
Suppose 
$|F(e^{-\im 2\pi\bw})|=|G(e^{-\im 2\pi\bw})|,\forall \bw\in [0,1]^2$. Then \beq
\label{21'}
G(\bz)=|\alpha| e^{\im \theta} \bz^{-\bp}
\lt(\prod_{k\in I} F_k(\bz)\rt)
\lt(\prod_{k\in I^c} \overline{F_k(1/\bar{\bz})}\rt),\quad\mbox{for some}\quad \bp\in \IN^2,\, \theta\in \IR, 
\eeq
 where $I$ is a subset of $\{1,2,\dots,p\}$. 
\end{prop}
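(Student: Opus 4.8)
The plan is to convert the Fourier-magnitude identity into an identity of Laurent polynomials and then invoke unique factorization in two variables; the ambiguities recorded in \eqref{21'} will emerge as precisely the freedom in distributing irreducible factors between a polynomial and its conjugate-reversal.

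First I would introduce the conjugate-reversal $F^\star(\bz):=\overline{F(1/\bar\bz)}$ and note that on the torus $|z_1|=|z_2|=1$, where $1/\bar z_j=z_j$, one has $F^\star(e^{-\im2\pi\bw})=\overline{F(e^{-\im2\pi\bw})}$, so that $|F(e^{-\im2\pi\bw})|^2=F(e^{-\im2\pi\bw})F^\star(e^{-\im2\pi\bw})$, and similarly for $G$. The hypothesis $|F|=|G|$ on $[0,1]^2$ thus gives $FF^\star=GG^\star$ as functions on the torus. Since $f^0,g^0$ are finite arrays, all of $F,F^\star,G,G^\star$ are Laurent polynomials, and the monomials $\bz^{\bk}$ are orthonormal, hence linearly independent, as functions on the torus; therefore the functional equality lifts to an identity
\[
F(\bz)F^\star(\bz)=G(\bz)G^\star(\bz)
\]
in the Laurent polynomial ring $\IC[z_1^{\pm1},z_2^{\pm1}]$.

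Next I would use that $\IC[z_1^{\pm1},z_2^{\pm1}]$ is a UFD whose units are exactly the monomials $c\,\bz^{\bk}$, $c\in\IC^\times$. Inserting the given factorization $F=\al\,\bz^{-\mbm}\prod_{k=1}^p F_k$, a short computation gives $F^\star=\bar\al\,\bz^{\mbm}\prod_k F_k^\star$; since conjugate-reversal sends non-monomial irreducibles to non-monomial irreducibles, the monomials cancel to leave $FF^\star=|\al|^2\prod_{k=1}^p F_kF_k^\star$. Factoring $G=\beta\,\bz^{-\bp}\prod_j G_j$ into irreducibles likewise yields $GG^\star=|\beta|^2\prod_j G_jG_j^\star$. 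Matching the two sides by unique factorization, every irreducible factor $G_j$ of $G$ must be associate (equal up to a monomial unit) to some $F_k$ or some $F_k^\star$, and comparing multiplicities — each $F_k$ and each $F_k^\star$ occurring once in $FF^\star$, counted with multiplicity — forces $G$ to contain, for each $k$, exactly one member of the pair $\{F_k,F_k^\star\}$. This produces an index set $I\subseteq\{1,\dots,p\}$ and the shape $\prod_{k\in I}F_k\prod_{k\in I^c}F_k^\star$ up to a monomial unit. Finally, because $g^0$ is supported on $\IN^2$ the residual monomial must be $\bz^{-\bp}$ with $\bp\in\IN^2$ (the translation ambiguity), and comparing $|\beta|^2\prod\cdots=|\al|^2\prod\cdots$ on the torus pins the scalar to $|\al|e^{\im\theta}$ for some $\theta\in\IR$ (the global phase), giving exactly \eqref{21'}.

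The main obstacle is the bookkeeping in the factor-matching step: one must accommodate possible coincidences among the $F_k$ and the $F_k^\star$ — repeated irreducible factors, and factors that are self-reversal-symmetric or associate to another factor's reversal — so that the prescription ``choose one of $\{F_k,F_k^\star\}$ for each $k$'' is justified at the level of multisets of associates rather than naively. The conceptual heart, by contrast, is simply unique factorization over $\IC[z_1^{\pm1},z_2^{\pm1}]$: this is what distinguishes the two-dimensional problem from the one-dimensional one, where every polynomial splits into linear factors and creates exponentially many conjugation choices, and it is why a generic non-monomial $F$ is irreducible, collapsing \eqref{21'} to only the global-phase, translation, and conjugate-reversal (twin-image) ambiguities.
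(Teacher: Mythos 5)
The paper never proves this proposition: it is quoted verbatim from \cite{Hayes}, so there is no internal proof to compare against, and the relevant benchmark is the standard argument in the literature. Your proposal is essentially that argument, and it is correct: lift $|F|=|G|$ on the torus to the ring identity $FF^\star=GG^\star$ in $\IC[z_1^{\pm 1},z_2^{\pm 1}]$ via linear independence of monomials, then exploit that this Laurent ring is a UFD whose units are exactly scalar multiples of monomials and that $\star$ permutes non-monomial irreducibles. Two points are worth making explicit if you write it up. First, a non-monomial irreducible of $\IC[z_1,z_2]$ is not divisible by $z_1$ or $z_2$, hence remains prime (and a non-unit) after localizing at the monomials; this is what licenses factor-matching in the Laurent ring. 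Second, the bookkeeping you flag as the main obstacle does close, by an orbit argument on associate classes: $\star$ is an involution on classes of irreducibles; for a $\star$-fixed class with multiplicity $a$ in $F$ and $c$ in $G$, the identity forces $2c=2a$, so $c=a$ and any index-wise choice realizes it; for a swapped pair $\{q,q^\star\}$ with multiplicities $(a,b)$ in $F$ and $(c,d)$ in $G$, it forces $c+d=a+b$, and choosing $F_k$ at $x$ of the $a$ indices in the class of $q$ and at $y$ of the $b$ indices in the class of $q^\star$ gives $q$-multiplicity $x+(b-y)$, so one needs $x-y=c-b$, which is solvable with $0\le x\le a$, $0\le y\le b$ precisely because $0\le c\le a+b$. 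With that lemma every admissible $G$ has the index-wise form $\prod_{k\in I}F_k\prod_{k\in I^c}F_k^\star$ up to a unit, your torus-modulus comparison pins the unit's scalar to $|\alpha|e^{\im\theta}$, and the claim $\bp\in\IN^2$ follows from the support convention on $g^0$ as you indicate. So this is a faithful reconstruction of the cited proof rather than a new route; the only substantive addition needed is the orbit lemma above.
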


\begin{rmk}\label{rmk1}
The undetermined monomial factor $\bz^{-\bp}$ in \eqref{21'} corresponds to the 
translation invariance of the Fourier intensity data while 
the altered factors $\overline{F_k(1/\bar{\bz})}$ corresponds to the conjugate inversion invariance
of the Fourier intensity data (see Corollary \ref{cor1} below). 
The conjugate inversion of $f^0$, called the twin image, in $\cM^0$ is defined by  $\mb{\rm Twin} (f^0) (\bn)= \bar f^0((m,m)-\bn)$. 

\end{rmk}

Next consider  a random mask $\mu^0$ and assume that $f^0$ is not a linear  object. An object is a linear 
object  if its support  is a subset of  a line. 
We recall a result in \cite{unique} that the $z-$transform of the non-line masked object $\tilde f^0(\bn) =f^0(\bn) \mu^0(\bn)$
is irreducible, up to a monomial. 
\begin{prop}\cite{unique}
Suppose $f^0$ is not a linear object and let $\mu^0$ be the phase mask with phase at each point
 continuously and independently distributed. 
Then with probability one
the $z$-transform of the masked object $\tilde f^0 =f^0\odot \mu^0$ does not have any non-monomial irreducible
polynomial factor. 
\label{prop2}
\end{prop}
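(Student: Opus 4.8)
The plan is to reduce the claim to a genericity property of bivariate polynomials and then to an elementary measure argument on the torus of mask phases. Throughout, write $S=\supp(f^0)$, put $N=|S|$, and regard the coefficient vector $c=(c_\bn)_{\bn\in S}$, $c_\bn:=f^0(\bn)\mu^0(\bn)$, as a point of $\IC^N$. Since $|\mu^0|(\bn)\neq0$ on $S$, every $c_\bn\neq0$; moreover $c_\bn=r_\bn e^{\im\theta(\bn)}$ with fixed moduli $r_\bn:=|f^0(\bn)|\,|\mu^0|(\bn)>0$ and independent continuous phases $\theta(\bn)$. After multiplying by a suitable monomial $\bz^{\bmb}$, which changes neither reducibility nor the collection of non-monomial factors, we may treat $\tilde F(\bz)=\sum_{\bn\in S}c_\bn\bz^{-\bn}$ as a genuine polynomial $P_c$ with support $S$. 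The goal is to show that, with probability one in the $\theta(\bn)$, $P_c$ admits no factorization into two non-monomial polynomials, i.e.\ it is irreducible up to a monomial.

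First I would describe the bad set. Let $\cR\subseteq\IC^N$ be the locus of $c$ for which $P_c$ factors as a product of two non-monomial polynomials. Bounding the factor degrees by $\deg P_c$, the multiplication map sending a pair of lower-degree polynomials to their product is a morphism of projective varieties and hence has Zariski-closed image; as $\cR$ is a finite union of such images (traced on the support $S$), it is a Zariski-closed subset of $\IC^N$.

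The decisive point is that $\cR$ is \emph{proper}, i.e.\ $\cR\neq\IC^N$; equivalently, some polynomial with support in $S$ is irreducible up to a monomial. This is exactly where the hypothesis that $f^0$ is not a line object enters: since $S$ is not contained in any line, it contains three non-collinear points $\bp_0,\bp_1,\bp_2$, and after dividing by $\bz^{\bp_0}$ the corresponding trinomial becomes $1+\bz^{\ba}+\bz^{\bmb}$ with $\ba=\bp_1-\bp_0$, $\bmb=\bp_2-\bp_0$ linearly independent, so its Newton polytope is the two-dimensional triangle $\mathrm{conv}(0,\ba,\bmb)$. By Ostrowski's theorem the Newton polytope of a product is the Minkowski sum of the Newton polytopes of the factors; a triangle has no pair of parallel edges, so it is neither a sum of two segments nor of a segment and a polygon, and the only two-dimensional Minkowski summands are homothetic triangles. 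Combining this with the fact that all non-vertex coefficients of the trinomial vanish (so the edge and vertex polynomials of any factorization are rigidly constrained) forces every factorization to be trivial, exhibiting a point outside $\cR$ and proving $\cR$ is a proper subvariety. I expect this step to be the main obstacle, because a triangle can be Minkowski-decomposable, e.g.\ $\mathrm{conv}(0,(2,0),(0,2))=\mathrm{conv}(0,(1,0),(0,1))+\mathrm{conv}(0,(1,0),(0,1))$, so one must use the sparsity of the trinomial through its face polynomials to exclude the residual homothetic-triangle cases rather than relying on Newton-polytope indecomposability alone.

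Finally I would run the measure argument. Being a proper subvariety, $\cR$ lies in the zero set $\{h=0\}$ of some nonzero polynomial $h$ in $(c_\bn)_{\bn\in S}$. Substituting $c_\bn=r_\bn e^{\im\theta(\bn)}$ turns $h$ into the finite Fourier series $H(\theta)=\sum_\alpha h_\alpha\bigl(\prod_\bn r_\bn^{\alpha_\bn}\bigr)e^{\im\sum_\bn\alpha_\bn\theta(\bn)}$; distinct multi-indices $\alpha$ give orthogonal characters and all $r_\bn\neq0$, so $H\equiv0$ would force every $h_\alpha=0$, i.e.\ $h\equiv0$. Hence $H\not\equiv0$, its zero set is contained in the zero set of a nonzero real-analytic function and so has Lebesgue measure zero on the phase torus. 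Since each $\theta(\bn)$ has a density $p_\gamma$ and the phases are independent, their joint law is absolutely continuous with respect to Lebesgue measure, whence $\Pr[c\in\cR]\le\Pr[H(\theta)=0]=0$. Therefore, with probability one, $P_c$—and hence the $z$-transform of $\tilde f^0$—is irreducible up to a monomial.
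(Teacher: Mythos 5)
The paper itself never proves this proposition --- it is imported wholesale from \cite{unique} --- so your argument has to stand on its own. Its skeleton is the right one (cover the ``reducible up to a monomial'' locus by a proper algebraic subvariety, then use absolute continuity of the phase law; your closing step, with the linear independence of the characters $e^{\im\sum_\bn\alpha_\bn\theta(\bn)}$ and the fact that a nonzero real-analytic function has a null zero set, is correct). The fatal problem is the assertion that the locus $\cR$ of coefficient vectors whose polynomial splits into two non-monomial factors is Zariski-closed. It is not: take $S\supseteq\{(0,0),(1,0),(0,1),(1,1)\}$ and $P_t=(z_1+t)(z_2+1)$; for every $t\neq 0$ this lies in $\cR$, but the limit $P_0=z_1(z_2+1)$ does not, since its only nontrivial factorizations are monomial times irreducible. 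The projective multiplication maps you invoke do have closed images, but those images contain all products with a monomial factor --- their union is the \emph{whole} coefficient space, because $P=1\cdot P$ --- and once you excise the monomial loci to get $\cR$, closedness is lost. Hence $\cR$ is merely constructible, and for a constructible subset of $\IC^N$ there is a dichotomy: either its Zariski closure is proper, or it contains a dense Zariski-open set. Exhibiting a single point outside $\cR$ (your trinomial) cannot distinguish these two cases, since that point could simply sit in the proper closed complement of the open set. So the decisive implication ``one irreducible trinomial $\Rightarrow$ $\cR\subseteq\{h=0\}$'' is unproven, and it is essentially equivalent to the theorem you are trying to prove (genericity of irreducibility for sparse non-collinear supports).

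The gap is repairable, but the repair changes what your witness must accomplish. Work on the torus (all $c_\bn\neq 0$, which is where the probability measure lives), so the Newton polytope of $P_c$ is exactly $\Delta:=\mathrm{conv}(S)$. By Ostrowski, any factorization into two non-monomials induces a Minkowski decomposition $\Delta=A+B$ with neither summand a point; there are finitely many such decompositions, and the bad set on the torus is covered by the finitely many \emph{genuinely closed} sets $\mathrm{Im}(\mu_{A,B})\cap\IP(W_S)$, where $\mu_{A,B}$ multiplies polynomials supported in $A$ and in $B$ and $W_S$ is the span of $\{\bz^\bn:\bn\in S\}$. This restores closedness, so now one good witness per decomposition suffices --- but an arbitrary non-collinear trinomial is \emph{not} such a witness: being irreducible, it belongs to $\mathrm{Im}(\mu_{A,B})$ exactly when a translate of its exponent triangle fits inside $A$ or inside $B$, which certainly happens if the three chosen points are close together inside a large $S$. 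One must instead choose $\bp_0,\bp_1,\bp_2$ so that their triangle attains the full width of $\Delta$ in two linearly independent directions (e.g.\ the unique maximizer and minimizer of a generic linear functional on $S$, plus any third point off their line); since widths are additive under Minkowski sum, such a triangle fits in no proper summand, and the argument closes. Alternatively the entire genericity step can be outsourced to Bertini's second theorem: the linear system spanned by $\{\bz^\bn:\bn\in S\}$ on the torus has no fixed component and, precisely because $S$ is non-collinear, is not composed with a pencil, so its generic member is irreducible. Finally, note that your trinomial irreducibility claim --- the other load-bearing step --- is true but only sketched: ``rigidly constrained'' edge and vertex data is not yet a proof (compare $1+z_1^2+z_2^2$, whose Newton triangle \emph{is} decomposable and where ruling out the homothetic factorization is an actual computation); a clean proof pulls back the line $1+u+w=0$ under the isogeny $\bz\mapsto(\bz^{\ba},\bz^{\bmb})$ and checks that its monodromy surjects onto the deck group, so the preimage curve stays connected.
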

A similar result can be proved for masks whose phases are discrete random variables 
by using more advanced tools from algebraic geometry (e.g.
\cite{Bodin}, Proposition 4.1).

The following corollary is what we will need for proving the local uniqueness theorems. 
\begin{cor}\label{cor1} Under the assumptions of Proposition \ref{prop2}, if another masked object $\tilde g^0:=\nu^0g^0$  produces
the same diffraction pattern as $\tilde f^0=\mu^0f^0$, then for some $\bp$ and $\theta$
\beq\label{r1}
 \tilde f^0(\bn+\bp)&=& e^{-\im \theta}\tilde g^0(\bn) \quad\mbox{or}\quad
 e^{\im \theta}\, \mb{\rm Twin}(\tilde g^0)(\bn)
\eeq
for all $ \bn \in \cM^0$.
\end{cor}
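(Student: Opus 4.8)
The plan is to deduce the corollary directly from the two preceding results by reducing Hayes' general factorization ambiguity to the case of a single irreducible factor. First I would record that ``produces the same diffraction pattern'' means $|\tilde G(e^{-\im 2\pi\bw})| = |\tilde F(e^{-\im 2\pi\bw})|$ for all $\bw \in [0,1]^2$, where $\tilde F$ and $\tilde G$ are the $z$-transforms of $\tilde f^0 = \mu^0 f^0$ and $\tilde g^0 = \nu^0 g^0$: equality of the sampled intensities on $\cL$ recovers the autocorrelation via \eqref{auto}, hence the full magnitude function. Thus the hypotheses of Proposition~\ref{prop1} are satisfied with $F = \tilde F$ and $G = \tilde G$.

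The decisive simplification comes from Proposition~\ref{prop2}: since $f^0$ is not a line object, with probability one $\tilde F$ admits no nontrivial factorization into non-monomial irreducible pieces, i.e.\ it is irreducible up to a monomial. Hence in the factorization \eqref{21} of $F = \tilde F$ the number of irreducible factors is exactly $p = 1$, say $\tilde F(\bz) = \alpha\,\bz^{-\mbm}F_1(\bz)$ with $F_1$ the single non-monomial irreducible factor.

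Now I would apply \eqref{21'} with $p = 1$, so that the index set $I \subseteq \{1\}$ leaves only two possibilities. If $I = \{1\}$, then $\tilde G(\bz) = |\alpha|\,e^{\im\theta}\bz^{-\bp}F_1(\bz)$, and dividing by $\tilde F(\bz) = \alpha\,\bz^{-\mbm}F_1(\bz)$ gives $\tilde G(\bz) = e^{\im\theta'}\bz^{\mbm-\bp}\tilde F(\bz)$ for a real $\theta'$. Since multiplication of a $z$-transform by a monomial is a pure translation of the coefficient array, this reads $\tilde f^0(\bn + (\mbm-\bp)) = e^{-\im\theta'}\tilde g^0(\bn)$, the first alternative in \eqref{r1}. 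If instead $I = \emptyset$, then $\tilde G(\bz) = |\alpha|\,e^{\im\theta}\bz^{-\bp}\,\overline{F_1(1/\bar\bz)}$; because $\bz \mapsto \overline{F_1(1/\bar\bz)}$ implements the coefficient-reversal-and-conjugation that defines $\mb{\rm Twin}(\cdot)$ (Remark~\ref{rmk1}), this equals, up to a monomial shift and a unimodular constant, the twin image of $\tilde f^0$, which yields the second alternative.

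The one step requiring care is the monomial bookkeeping: I must track the exponents $\mbm$ and $\bp$ together with the reflection center so as to confirm that the translated array $\bz^{\mbm-\bp}\tilde F$ and the conjugate-inverted array $\overline{F_1(1/\bar\bz)}$ are indeed supported inside $\cM^0$ (making $\bp$ a legitimate shift on the block and fixing the center $(m,m)$ in $\mb{\rm Twin}$), and to absorb $\arg\alpha$ consistently into the phase constant. This is routine; the substantive content is structural, namely that generic irreducibility of the masked object collapses the combinatorial family of Hayes ambiguities to exactly the translation/twin-image dichotomy of \eqref{r1}.
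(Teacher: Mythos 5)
Your proposal is correct and follows essentially the same route as the paper's proof: invoke Proposition \ref{prop2} to force $p=1$ in the Hayes factorization \eqref{21}, then read off the two alternatives of \eqref{21'} as translation versus twin image via Remark \ref{rmk1}. Your version merely makes explicit what the paper compresses (the oversampling step recovering the full magnitude, the case split $I=\{1\}$ versus $I=\emptyset$, and the monomial/phase bookkeeping), so it is a more detailed rendering of the identical argument.
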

\begin{proof} Let $\tilde F$ and $\tilde G$ be the $z$-transforms of $\tilde f^0$ and $\tilde g^0$, respectively. 
By Proposition  \ref{prop2} and \eqref{21'}, 
\beqn
\tilde G(\bz)=e^{\im \theta} \bz^{-\bp}
\tilde F(\bz)\quad\mbox{or}\quad e^{\im \theta} \bz^{-\bp}
\overline{\tilde F(1/\bar{\bz})},\quad\mbox{for some}\,\,\bp,\, \theta\,\,\mbox{and all}\,\,\bz. 
\eeqn
which after substituting $\bz=\exp{(-\im 2\pi \bw)}$ becomes 
\beqn
\tilde G(e^{-\im 2\pi\bw})=e^{\im \theta} e^{\im \bw\cdot\bp}
\tilde F(e^{-\im 2\pi\bw})\quad\mbox{or}\quad e^{\im \theta} e^{\im \bw\cdot\bp}
\overline{\tilde F(e^{-\im 2\pi\bw})},\quad\mbox{for some}\,\,\bp,\, \theta\,\,\mbox{and all}\,\,\bz. 
\eeqn

Note that $\tilde G(e^{-\im 2\pi\bw})$ and $\tilde F(e^{-\im 2\pi\bw})$ are the Fourier transforms of $\tilde g^0$ and
$\tilde f^0$, respectively. 
Therefore in view of Remark \ref{rmk1} we have 
\beqn
\tilde g^0(\bn)&=& e^{\im \theta} \tilde f^0(\bn - \bp)\quad\mbox{\rm or}\quad  e^{\im \theta}\, \mbox{\rm Twin}(\tilde f^0)(\bn-\bp),\quad\forall \bn \in \cM^0, 
\eeqn
which is equivalent to \eqref{r1}. 
\end{proof}

\section{Local uniqueness}\label{sec:two} \label{sec:multi}

First let us consider two-part ptychography where $\cM=\cM^0\cup \cM^\bt$.

We need 
two pieces of prior information: one on the mask phase and the anchoring assumption
on an object part.

 {\bf  Mask Phase Constraint (MPC):}
{\em  Let $\mu^0$ be a nonvanishing random  mask with phase at each pixel distributed 
continuously and independently according to a probability density function $p_\gamma$  nonvanishing in $(-\gamma \pi,\gamma\pi]$ with a constant $\gamma\le 1$. 

 Let 
\beq
\label{2mask}
\alpha(\bn)\exp[\im \phi(\bn)]=\nu^0(\bn)/\mu^0(\bn), \quad \alpha(\bn)>0,\quad \forall \bn \in \cM^0.
\eeq
We say 
that $\nu^0$ satisfies \mpc if, for all $\bn\in \cM^0$ and some constant $\phi_0$ 
 \beq
\label{u1}
&|\phi(\bn)-\phi_0|\le \delta\pi&  \mod 2\pi,
%&|\phi(\bn)-\phi_0|< \pi/2  & \mod 2\pi, \quad\mbox{\rm if}\quad \gamma> 1/2.\label{u2}
 \eeq
 where 
 \beq
 \label{u2}
\delta<\min\lt(\gamma,1/2\rt).
\eeq
}

The larger $\gamma$ is, the more phase diversity there is in the mask; the
larger $\delta$ is, the weaker the \mpc is as a constraint. When $\gamma>1/2$, \mpc can be written simply as
\beq
\label{mpc2}
\Re(\bar\nu^0(\bn)\mu^0(\bn))>0,\quad\forall \bn\in \cM^0.
\eeq

We demonstrate the necessity of \mpc in Example \ref{ex0}.

The following theorem gives sufficient conditions of the local uniqueness for 2-part ptychography. 

\begin{theorem}
\label{thm:u}
Let $f^0$ and $f^\bt$ be  a non-linear objects. Suppose that an arbitrary object $g=g^0\vee g^\bt$, where $g^0$ and $g^\bt$ are defined on $\cM^0$ and $\cM^\bt$, respectively,  and an arbitrary mask $\nu^0$ defined on $\cM^0$ produce the same ptychographic data as $f$ and $\mu^0$. Moreover, suppose that $\nu^0$ satisfies \mpc and that $f^0$ and  $g^0$ are an anchor, i.e.
\beq
\label{anchor2}
\mb{\rm Box}[\supp(f^0)]=\mb{\rm Box}[\supp(g^0)]=\cM^0.
\eeq

Let  \beq
 \label{3.2.1}
s=\min\{|S_0|, |S_0'|\}\ge 2
\eeq
where \[
S_0=\cM^0\cap \cM^\bt\cap \supp(f^0), \quad S_0'=\cM^0\cap \cM^\bt\cap \supp( \mb{\rm Twin}(f^0)).
\]

Then for some constants $\theta_0, \theta_\bt\in \IR,$ the following relations
 \beq
\label{masked1}
\nu^0\odot g^0&=&e^{\im\theta_0}\mu^0\odot f^0\\
\nu^\bt\odot g^\bt&=&e^{\im \theta_\bt}\mu^\bt\odot  f^\bt 
\label{masked2}
\eeq
hold true
with  probability at least 
\beq
\label{high}
1-c^s,\quad c<1, 
\eeq
where 
 the positive constant $c$ depends only on $\delta,\gamma, p_\gamma$ in \mpc. 
\end{theorem}
\begin{rmk}
The anchoring assumption  can be relaxed to that of object support constraint (OSC) (see Appendix \ref{sec:osc}). 
\end{rmk}
The proof of Theorem \ref{thm:u} is given in Appendix \ref{sec:proof}.

\commentout{
As commented below Assumption II, if  $\mb{\rm Box}[\supp(f^0)]=\cM^0$, then
OSC becomes null. So we have the following corollary. 
\begin{cor} \label{thm:u0} Suppose that
$f^0$  has a tight box hull in $\cM^0$. 
Then \eqref{masked1}-\eqref{masked2} hold with probability at least $1-c^s$, for some constant $ c\in (0,1)$,  with 
\beq
\label{sss}
s=|\cM^0\cap \cM^\bt\cap\supp(f^0)|\wedge |\cM^0\cap \cM^\bt\cap \supp(\mb{\rm Twin}(f^0))|.
\eeq
\end{cor}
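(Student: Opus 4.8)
The plan is to derive Corollary \ref{thm:u0} as an immediate specialization of Theorem \ref{thm:u}: the tight box hull hypothesis makes the Object Support Constraint vacuous, so it suffices to run the theorem with the smallest admissible shift set and read off how the connectivity parameter $s$ simplifies. No new analytic estimate is required; the whole argument is a reduction of definitions.

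First I would verify that $\mb{\rm Box}[\supp(f^0)]=\cM^0$ forces OSC to hold with the singleton $T_0=\{(0,0)\}$. Substituting the tight box hull into \eqref{shifts2}, the case $\mbm=(0,0)$ is exactly \eqref{osc2}, namely $\supp(g^0)\subseteq\cM^0$ or $\supp(\mb{\rm Twin}(g^0))\subseteq\cM^0$, which holds automatically since $g^0$ is supported in $\cM^0$. The only delicate point --- and the nearest thing to an obstacle --- is to confirm that this null condition licenses the \emph{singleton} $T_0$ rather than merely placing $(0,0)$ in $T_0$: for $\mbm\neq(0,0)$ a tight box hull forces $\supp(f^0)$ to reach all four sides of $\cM^0$, so $\cM^0-\mbm$ cannot contain a full translate of that support inside $\cM^0$, and no nonzero shift is admissible. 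Hence $T_0=\{(0,0)\}$ indeed satisfies OSC.

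Next I would evaluate \eqref{3.2.1} at $T_0=\{(0,0)\}$. Since $T_0$ is a singleton, the minimum over $\mbm,\mbm'\in T_0$ collapses to the single index $\mbm=(0,0)$, and the definitions give $S_0((0,0))=\cM^0\cap\cM^\bt\cap\supp(f^0)$ and $S'_0((0,0))=\cM^0\cap\cM^\bt\cap\supp(\mb{\rm Twin}(f^0))$, so $s$ reduces precisely to the quantity in \eqref{sss} (with the tacit requirement $s\ge 2$ inherited from the theorem). With OSC now automatic, MPC still imposed on $\nu^0$, and $f^0$ non-line --- guaranteed because a tight box hull in the two-dimensional block $\cM^0$ cannot sit inside a line, while the analogous condition on $f^\bt$ is carried over from Theorem \ref{thm:u} --- the theorem applies verbatim and delivers \eqref{masked1}--\eqref{masked2} with probability at least $1-c^s$.
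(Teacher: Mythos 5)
Your core reduction is exactly the paper's own route: the paper derives this corollary in one line by observing that a tight box hull makes OSC null, so that Theorem \ref{thm:u} applies with $T_0=\{(0,0)\}$ and \eqref{3.2.1} collapses to \eqref{sss}. Your handling of the ``delicate point'' is in fact more explicit than the paper's remark, and it is sound with one caveat worth stating: OSC with the singleton $T_0$ is \emph{not} automatic for an arbitrary function supported in $\cM^0$ (a $g^0$ with small support in the interior would violate it); it becomes automatic only because the equal-data hypothesis plus Corollary \ref{cor1} force $\supp(g^0)$ to be a translate of $\supp(f^0)$ or of $\supp(\mb{\rm Twin}(f^0))$, and it is to \emph{such} translates that your ``touches all four sides'' argument applies. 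Your phrase ``a full translate of that support'' gestures at this, but the appeal to Corollary \ref{cor1} should be made explicit, since that is where the data enter.

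There is, however, one genuinely false subsidiary claim: a tight box hull does \emph{not} imply that $f^0$ is a non-line object. The object supported on the diagonal $\{(k,k):k=0,\dots,m-1\}$ is a line object whose box hull is all of $\cM^0$, since $\mb{\rm Box}[\,\cdot\,]$ is by definition the smallest \emph{axis-parallel} rectangle. So the non-line hypothesis on $f^0$ cannot be deduced from the tight-box-hull assumption; it must be inherited from Theorem \ref{thm:u}, exactly as you do for $f^\bt$. This matters because the non-line property is what licenses Proposition \ref{prop2} (irreducibility of the $z$-transform of the masked object), the engine behind Corollary \ref{cor1}; a diagonal line object genuinely breaks that step. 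The fix is trivial---read the corollary, as the paper does, as a specialization of Theorem \ref{thm:u} with all hypotheses other than OSC retained, and delete your parenthetical derivation---but as written that sentence asserts something false.
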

}

%\section{Multi-part ptychography}\label{sec:multi}

Theorem \ref{thm:u} can be readily extended  to the case of multi-part ptychography as follows.

 Let $\cT=\{\bt_k \in \IZ^2: k=0,\dots,Q-1\}$ denote the set of all shifts in a ptychographic measurement. Let $\cM^k\equiv\cM^{\bt_k}$ and $f^k\equiv f^{\bt_k}$.
 
We say that $f^k$ and $f^l$ are $s$-connected if
\beq
\label{s-conn}
%\min\lt\{|\cM^k\cap \cM^{l}\cap\supp(f^k)|, |\cM^k\cap \cM^{l}\cap \supp(\mb{\rm Twin}(f^k))|\rt\}\ge s\ge 2
|\cM^k\cap \cM^{l}\cap\supp(f)|\ge s\ge 2
\eeq
(cf. \eqref{3.2.1}) and that $\{ f^k: k=1,\cdots, Q-1\}$ are $s$-connected if there is an $s$-connected chain between any
two elements.

\begin{thm}\label{thm:many} 
Let $\{f^k, k=0,\cdots,Q-1\} $ be $s$-connected and every $f^k$ is a non-linear part.
  %Let \beq\nn
%s_{\rm min}=\min_{k=0,\dots,q-2}\min\Big\{|\cM^k\cap \cM^{k+1}\bigcap\supp(f^k)|, |\cM^k\cap \cM^{k+1}\cap \supp(\mb{\rm Twin}(f^k))|\Big\}.  
%\eeq

Suppose that an arbitrary object $g=\bigvee_k g^k$, where $g^k$ are defined on  $\cM^k$,  and a mask $\nu^0$ defined on $\cM^0$ produce the same ptychographic data as $f$ and $\mu^0$.   Suppose that $\nu^0$ satisfies \mpc and hence
\beq
\label{pgamma}
p:=\max_{a\in \IR}\mbox{\rm Pr} \{\Theta\in (a-2\delta\pi, a+2\delta\pi]\}<1
\eeq
with $\Theta$ distributed according to the probability density function $p_\gamma \star p_\gamma.$

In addition, suppose that for some ${\ell_0}\in \{0,1,\dots,Q-1\}$
$f^{\ell_0}$ and $g^{\ell_0}$ are an anchor or more generally 
\beq
\label{3.101}
\nu^{\ell_0}\odot g^{\ell_0}=e^{\im\theta_{\ell_0}}\mu^{\ell_0}\odot f^{\ell_0}. 
\eeq
Then  with probability at least $1-2Qp^{s}$, we have
\beq
\label{3.100}
\nu^{k}\odot g^k=e^{\im\theta_{k}}\mu^{k}\odot f^k,\quad k=0,\dots,Q-1, 
\eeq
for some constants $\theta_k\in \IR.$ 

\end{thm}
\commentout{
\begin{rmk} For \eqref{3.101},  either the OSC (Theorem \ref{thm:u}) or
the tight box hull constraint (Corollary \ref{thm:u0}) on $f^{\ell_0}$ suffices. 

\end{rmk}
}
The proof of Theorem \ref{thm:many} is given in Appendix \ref{sec:many}. 

\commentout{The uniqueness expressed in  \eqref{3.100} is local in the sense that the undetermined  block phase can vary from block
to block arbitrarily, resulting in an uncontrollable phase drift as  
\eqref{3.100} relies entirely on
two-part connectivity.  In Sections \ref{sec:simple} and \ref{sec:blind}, we account for the interaction among multiple object parts
 to derive  global uniqueness. 
 }

\section{Ambiguities without  \mpc or anchoring assumption }\label{sec:example}

The first  example shows that  \eqref{masked1}-\eqref{masked2} may fail in the absence of \mpc.

\begin{ex}\label{ex0} Let $\cM=\IZ_m\times \IZ_n. $
Let $m=2n/3$ and $\bt=(m/2, 0)$. Evenly partition $f^0$ and $f^\bt$ into two  parts as 
$f^0=[f^0_0,f^0_{1}]$ and $f^\bt=[f^1_{0},  f^1_1]$ 
with the overlap $f^0_{1}=f^1_{0}$ where $f^i_j\in \IC^{m\times m/2},\, i,j=0,1$. Likewise, partition the mask as
$\mu^0=[\mu^0_0,\mu^0_1], \mu^\bt=[\mu^1_0,\mu^1_1]$  where $\mu^\bt$ is just the $\bt$-shift of $\mu^0$,
i.e. $\mu^\bt(\bn+\bt)=\mu^0(\bn)$. 

Suppose $f^0_0=f^1_1$ and consider the mask estimate $\nu^0=\mbox{\rm Twin}(\mu^0)$ and the following object estimate: Let 
\beqn
g^0&=&\mbox{\rm Twin}(f^0)=[g^0_0,g^0_1]\\
 g^\bt&=&\mbox{\rm Twin}(f^\bt)=[g^1_0,g^1_1]
 \eeqn
  where
$g^0_1=g^1_0$ due to $f^0_0=f^1_1$, i.e. $g=g^0\vee g^\bt$ is a well-defined object. 
The mask estimate $\nu^0$ violates  \mpc because 
\[
{\mbox{\rm Twin}(\mu^0)(\bn)\over \mu^0(\bn)}={\bar\mu^0(\bN-\bn)\over\mu^0(\bn)},\quad\bn\in\cM^0,
\]
has the maximum phase range $(-2\gamma\pi,2\gamma\pi]$. 

Clearly we have 
\beqn
\nu^0\odot g^0&=&\mbox{\rm Twin}(\mu^0\odot f^0)\\
\nu^\bt\odot g^\bt &=&\mbox{\rm Twin}(\mu^\bt\odot f^\bt)
\eeqn
so $\nu^0$ and $g$ produce the same ptychographic data as do $\mu^0$ and $f$ but
violate \eqref{masked1}-\eqref{masked2} since in general
\beqn
e^{\im\theta_0}\mu^0\odot f^0&\neq&\mbox{\rm Twin}(\mu^0\odot f^0)\\
e^{\im\theta_\bt}\mu^\bt\odot f^\bt &\neq&\mbox{\rm Twin}(\mu^\bt\odot f^\bt)
\eeqn
for any $\theta_0,\theta_\bt\in \IR$. 

\commentout{%09/05/2019
The same ptychographic ambiguity  persists for the different set-up with $m=n$ and $\bt=(m/2, 0)$
for the object parts $f^0=[f^0_0,f^0_{1}]$ and $f^\bt=[f^1_{0},  f^1_1]$ where the periodic boundary condition is imposed on $\IZ_n^2$. The periodic boundary condition implies
$f^0_0=f^1_1$ by definition. The above construction of the object and the estimates carries over here. 
}

\end{ex}

The  next example  illustrates the translational and twin-like ambiguities associated with a loose object support (non-anchor).  
\begin{ex}\label{ex3.1}
Assume the same set-up as in Example \ref{ex0} with the additional prior $f^0_0=f^1_1=0$. 

Let $
\nu^0=\mu^0, \nu^\bt=\mu^\bt$
 and $g^0=[g^0_{0}, 0], g^\bt=[0, g^1_{1}]$  where
\beqn
g^0_{0}&= &f^0_{1}\odot \mu^0_1/\mu^0_0,\\
g^1_{1}&=& f^1_{0}\odot \mu^1_0/\mu^1_1.
\eeqn
Clearly, $g=[g^0_{0}, 0, g^1_{2}]$ is different from $f=[0,f^0_{1}, 0]$.

It is straightforward to check that for $\mbm=(m/2,0)$
\beqn
g^0(\bn)\nu^0(\bn)=f^0(\bn+\mbm) \mu^0 (\bn+\mbm),\quad\bn\in \cM^0\\
g^\bt(\bn)\nu^\bt(\bn)=f^\bt(\bn-\mbm) \mu^\bt (\bn-\mbm),\quad\bn\in \cM^\bt
\eeqn
and hence $g^0\odot\mu^0$ and $ g^\bt \odot\mu^\bt$ produce  the same  diffraction patterns
as $f^0\odot\mu^0$ and $f^\bt\odot\mu^\bt$ for any $\nu^0$. In particular, by setting $\nu^0=\mu^0$, we satisfy MPC 
with $\delta=0$. 

On the other hand, for $\mbm\neq 0$ and any $\theta_0,\theta_\bt\in \IR$,  
\beqn
e^{\im\theta_0} f^0\odot \mu^0 &\neq& f^0(\cdot+\mbm) \odot \mu^0 (\cdot+\mbm)\\
e^{\im\theta_\bt} f^\bt\odot \mu^\bt&\neq & f^\bt(\cdot-\mbm) \odot \mu^\bt (\cdot-\mbm)
\eeqn
in general and  hence \eqref{masked1}-\eqref{masked2} are violated. 

For the twin-like ambiguity, consider the same set-up with
\beq
\label{3.26}
g^0(\bn)&= &\bar f^0(\bN-\bn)\bar\mu^0(\bN-\bn)/\mu^0(\bn),\quad\forall\bn\in \cM^0\\
\label{3.27}g_{\bt}(\bn)&= &\bar f^\bt(\bN+2\bt-\bn)\bar \mu^\bt(\bN+2\bt-\bn)/\mu^\bt(\bn), \quad\forall\bn\in \cM^\bt. 
\eeq
Clearly, $g=[g^0_{0}, 0, g^1_{2}]$ is different from  $f=[0,f^0_{1}, 0]$ but because  
\beqn
g^0(\bn)\nu^0(\bn)&=&\bar f^0(\bN-\bn) \bar \mu^0 (\bN-\bn),\quad\bn\in \cM^0\\
g^\bt(\bn)\nu^\bt(\bn)&=&\bar f^\bt(\bN+2\bt-\bn) \bar\mu^\bt (\bN+2\bt-\bn),\quad\bn\in \cM^\bt,
\eeqn
 $g^0\odot\mu^0$ and $ g^\bt \odot\mu^\bt$, as twin images,  produce  the same  diffraction patterns
as $f^0\odot\mu^0$ and $f^\bt\odot\mu^\bt$ for any $\nu^0$. In particular, by setting $\nu^0=\mu^0$, we satisfy MPC 
with $\delta=0$.

On the other hand, 
\eqref{masked1}-\eqref{masked2} fail to hold  since for any $\theta_0,\theta_\bt\in \IR$, 
\beqn
e^{\im \theta_0} f^0\odot \mu^0&\neq&\bar f^0(\bN-\cdot) \odot\bar \mu^0 (\bN-\cdot)\\
e^{\im\theta_\bt} f^\bt\odot\mu^\bt&\neq &\bar f^\bt(\bN+2\bt-\cdot) \odot \bar\mu^\bt (\bN+2\bt-\cdot)
\eeqn
in general.

\commentout{
Both above constructions violate \eqref{masked1}-\eqref{masked2} but satisfy the OSC \eqref{shifts2} with 
\[
T_0=\Big\{(a,0): a=0,\dots, m/2\Big\}.%\quad \cS'= \Big\{(a,0): a=-m/3,\dots, 0\Big\}  
\]
On the other hand, if  $f^{0}_{1}, f^{1}_{0}$ are non-vanishing, then it can be verified 
that 
$s=0$, consistent with
the fact that the probability for ambiguity is one as shown in the above construction. 

If we enhance the precision of the support knowledge by tightening  $T_0$ by any amount $l\ge 1$ as 
\beq
\label{3.30}
T_0= \Big\{(a,0): a=0,\dots, m/2-l\Big\},%\quad \cS'=\Big\{(a,0): a=-m/3+l,\dots, 0\Big\},\quad l\ge 1
\eeq
then the above constructions  would violate the OSC \eqref{shifts2}, and be rejected. Moreover, for \eqref{3.30}, $s=ml$ with nonvanishing $f^{0}_{1}, f^{1}_{0}$ so  the probability of uniqueness is closed to one  for
$m\gg 1$ as predicted by Theorem \ref{thm:u}. 
}
 
\end{ex}

\section{Phase drift equation}\label{sec:block}

In view of Theorem \ref{thm:many},  we make simple observations and transform \eqref{3.100} into the ambiguity equation  that will be a key to subsequent development. 
\begin{lemma}
\label{cor:u}
 Let 
\beqn
%\label{2mask'}
\alpha(\bn)\exp[\im \phi(\bn)]=\nu^0(\bn)/\mu^0(\bn), \quad \alpha(\bn)>0,\quad \forall \bn \in \cM^0
\eeqn
and 
\beqn
h(\bn)&\equiv& \ln g(\bn)-\ln f(\bn),\quad\forall \bn\in \cM, %&=&\im \theta_k -\ln \alpha^k(\bn-\bt_k)-\im \phi^k(\bn-\bt_k)\\
%\nn &=&\im \theta_{k}-\ln\alpha(\bn)-\im \phi(\bn) \mod \im 2\pi,\quad \forall \bn \in \cM^0,
\eeqn
where $f$ and $g$ are assumed to be non-vanishing.
%if   $g(\bn+\bt_{k})f(\bn+\bt_{k})\neq 0.$
%If $g(\bn+\bt_{k})=0$ and $f(\bn+\bt_{k})=0$, then $h(\bn+\bt_{k})$ may not be well defined.
Suppose that
\beq
\label{200.2}
\nu^{k}\odot g^k=e^{\im\theta_{k}}\mu^{k}\odot f^k,\quad \forall k, 
\eeq
where $\theta_k$ are constants.
Then
\beq \label{200.1}
h(\bn+\bt_{k}) &=&\im \theta_{k}-\ln\alpha(\bn)-\im \phi(\bn) \mod \im 2\pi,\quad\forall\bn\in \cM^0,\eeq
and 
for all $\bn\in  \cM^{k} \cap \cM^{l}$
\beq
\label{factor2}
\alpha(\bn-\bt_l)&=&\alpha(\bn-\bt_k)\\
 \theta_k- \phi(\bn-\bt_k)&= & \theta_l- \phi(\bn-\bt_l) \mod 2\pi.  \label{factor2'}
\eeq
%If, in addition,  $\phi(\bn-\bt)=\phi(\bn)$ for some $\bn\in  \cM^0 \cap \cM^\bt\cap \supp(f)$, then $\theta_0=\theta_\bt$. 
\end{lemma}
\begin{rmk} The ambiguity  equation \eqref{200.1} 
is a manifestation of local uniqueness \eqref{3.100} and  has the immediate consequence 
 \beq
 \label{200.4}
 h(\bn+\bt_{k})-h(\bn+\bt_{l})=\im \theta_{k}-\im\theta_{l} \mod \im2\pi,\quad \forall \bn\in \cM^{0},\quad \forall k,l
 \eeq
 or equivalently 
 \beq
 \label{200.5'}
 h(\bn+\bt_{k}-\bt_{l})-h(\bn)=\im \theta_{k}-\im\theta_{l} \mod \im2\pi,\quad \forall \bn\in \cM^{l}
 \eeq
 by shifting the argument in $h$. 

We refer to  \eqref{200.4}or \eqref{200.5'} as the {\em phase drift equation} which determines  the ambiguity 
(represented by $h$)  at different locations connected by ptychographic shifts. 
\end{rmk}
\begin{proof}
The ambiguity equation  \eqref{200.1} follows immediately from \eqref{200.2} by taking logarithm on both sides.

By \eqref{200.2},  for all $\bn\in \cM^k\cap \cM^l$,
\beq\label{r101}
g(\bn)={e^{\im\theta_k}f^k(\bn )\mu^0(\bn-\bt_k)/\nu^0(\bn-\bt_k)}
&=&e^{\im \theta_l}  f^l(\bn )\mu^0(\bn-\bt_l )/\nu^0(\bn-\bt_l). 
\eeq

We obtain by taking logarithm on both sides of \eqref{r101} that  \beqn
\im\theta_l-\im\theta_k  -\ln{f^k(\bn)}+\ln{f^l(\bn)}+\ln \alpha(\bn-\bt_k)-\ln \alpha(\bn-\bt_l)+\im\phi(\bn-\bt_k)-\im\phi(\bn-\bt_l)=0\nn
\eeqn
modulo  $\im 2\pi$. This implies 
that for $\bn\in  \cM^k \cap \cM^l$
\[
\im\theta_l-\im\theta_k +\ln \alpha(\bn-\bt_k)-\ln \alpha(\bn-\bt_l)+\im\phi(\bn-\bt_k)-\im\phi(\bn-\bt_l)=0\mod \im 2\pi\nn
\]
which is equivalent to \eqref{factor2}-\eqref{factor2'}. 

\end{proof}

% shall be set to maintain eq. \eqref{200.1} whose righthand side is always well-defined since $\alpha(\bn)\neq 0$ by assumption.

\commentout{%08/14/2019
\begin{prop}
\label{well-defined} Suppose that $\cT$ forms an additive group. Then 
the relation
\beq
\label{200.2'}
h(\bn+\bt_k)
=\im \theta_k-\ln\alpha(\bn)-\im \phi(\bn) \mod \im 2\pi,\quad \forall \bn \in \cM^0,
\eeq
can be maintained consistently under the assumptions of Theorem \ref{thm:many}.

\end{prop}
\begin{proof}
To show that \eqref{200.2'} can be maintained consistently when $g(\bn+\bt_k)=f(\bn+\bt_k)=0$, we need to establish that
\beq
\label{200.23}
\im \theta_1-\ln\alpha(\bn_1)-\im \phi(\bn_1) =\im \theta_2-\ln\alpha(\bn_2)-\im \phi(\bn_2) \mod \im 2\pi
\eeq
for any $\bn_1+\bs_1=\bn_2+\bs_2$ where $\bs_1,\bs_2\in \cT$. 

Noting  $\bn_2=\bn_1-\bt$ with $\bt=\bs_2-\bs_1\in \cT$, we have by Lemma \ref{cor:u} that 
\beq
\label{200.21}\alpha(\bn_2)&=&\alpha(\bn_1)\\
\phi(\bn_2)&=&\phi(\bn_1)+\theta_\bt-\theta_0\mod 2\pi.
\eeq
Since the block phases have an affine profile, $\theta_\bt-\theta_0=\theta_2-\theta_1$  and hence
\beq
\label{200.22}\phi(\bn_2)&=&\phi(\bn_1)+\theta_2-\theta_1\mod 2\pi.
\eeq

Together \eqref{200.21} and \eqref{200.22} imply \eqref{200.23}. 

%For any $\bn\in \cM^0$ suppose $f(\bn+\bt_k)\neq 0$ for some $k$. Then $h(\bn+\bt_k)$ is uniquely defined by \eqref{200.1}. 

\end{proof}

However, we are interested in more general schemes than those forming additive groups so
we will assume that $f$ does not vanish in $\cM$ for the rest of the paper.

}

\section{Raster scan}\label{sec:raster}

To fix the idea, we set $\cM=\IZ_n^2$ for the rest of the paper. 

Note that no other assumptions than the anchoring assumption and  the connectivity conditions, \eqref{3.2.1} and \eqref{s-conn},
are imposed on the scan scheme in Theorem \ref{thm:many}. In particular, Theorem \ref{thm:many} applies to the regular raster scan which 
 is more conveniently described  in terms of  two indices: 
 For some $q\in \IN$,
  \beq
 \label{raster}
\bt_{kl}=\tau(k,l)=k\tau\be_1+l\tau\be_2, \quad k,l=0,\dots, q-1,
\eeq
where $\be_1=(1,0), \be_2=(0,1)$ and $\tau$ is the constant step size of the raster scan. For simplicity of the set-up, we also assume that $\tau=m/p=n/q$ for some integers $p,q$  so that $\bt_{ql}=\bt_{0l},\bt_{kq}=\bt_{0l}$ and the periodic boundary condition on $\IZ_n^2$ is satisfied. 

We first show that the regular raster scan gives rise to an affine profile of block phase.

\begin{prop}\label{prop:raster}Under the assumptions of Lemma \ref{cor:u}, the block phase
$\{\theta_{kl}\}$ for the raster scan \eqref{raster} has an affine profile:
\beq\label{affine}
\theta_{kl}=\theta_{00}+r_1k+r_2l
\eeq
for $r_1, r_2\in \IR.$ 
\end{prop}
\begin{rmk}
Due to the affine phase ambiguity, $r_1$ and $r_2$ are undetermined constants.
\end{rmk}
\begin{proof} By \eqref{200.4},  for all $\bn\in \cM^{00}\cap (\cM^{00}-(\tau,0))$,
\beq\label{400}
h(\bn+(\tau,0))=h(\bn)+\im\theta_{10}-\im\theta_{00}
\eeq
and hence
\beq\label{402}
h(\bn+\bt_{kl})&=&h(\bn)+\im \theta_{kl}-\im\theta_{00}\\
&=&h(\bn+(\tau,0))+\im\theta_{kl}-\im\theta_{10}.\nn
\eeq
On the other hand,  \eqref{200.4} also implies 
\beq \label{401}
h(\bn+(\tau,0)+\bt_{kl})=h(\bn+(\tau,0))+\im \theta_{kl}-\im\theta_{00}
\eeq
and by \eqref{402} 
\beq\label{403}
h(\bn+(\tau,0)+\bt_{kl})&=&h(\bn+\bt_{kl})-\im\theta_{kl}+\im\theta_{10}+\im \theta_{kl}-\im\theta_{00}\\
&=& h(\bn+\bt_{kl})+\im\theta_{10}-\im\theta_{00}\nn
\eeq
for all $\bn\in \cM^{00}\cap (\cM^{00}-(\tau,0))$. 

By induction with \eqref{403}, we have
\beq
\label{405}h(\bn+(\tau,0)+\bt_{kl})&=& h(\bn+\bt_{0l})+(k+1)\im(\theta_{10}-\theta_{00}).
\eeq
Likewise, we also have
\beq\label{406}
h(\bn+(0,\tau)+\bt_{kl})&=& h(\bn+\bt_{k0})+(l+1)\im(\theta_{01}-\theta_{00}).
\eeq
Combining \eqref{405} and \eqref{406} with \eqref{200.4}, we arrive at the desired
result \eqref{affine} with 
\beqn
r_1=\theta_{10}-\theta_{00}, \quad r_2=\theta_{01}-\theta_{00}.
\eeqn  \end{proof}

\begin{cor}\label{cor:1} 
For the raster scan \eqref{raster} with $\tau=1$, we have
\beq
\label{471}
h(\bn)&=& h(0)+\im\bn\cdot(r_1,r_2) \mod \im 2\pi,\\
\label{472}
\phi(\bn)&=&\theta_{00}-\Im[h(0)]-\bn\cdot (r_1,r_2) \mod 2\pi\\
\alpha&=&e^{-\Re[h(0)]}\label{474'}\\
\theta_{kl}&=& \theta_{00}+kr_1+lr_2,\quad k,l=0,\cdots,n-1,  \label{473}
\eeq
for all $\bn\in \IZ^2_n$ and  some $r_1,r_2\in \IR$ where $\alpha$ is given in \eqref{2mask}.  

\end{cor}

\begin{proof}
Setting $\tau=1$ in  \eqref{405}-\eqref{406}, we have the identity \eqref{471}.  

By \eqref{200.1}, 
\beq \label{200.1'}
h(\bn+\bt) &=&\im \theta_{\bt}-\ln\alpha(\bn)-\im \phi(\bn) \mod \im 2\pi,\quad \forall\bt\in \cT.\eeq

With $\bt=\bt_{00}$, \eqref{200.1'} and   \eqref{471} imply \eqref{472} and \eqref{474'}.

The relation  \eqref{473} follows from  \eqref{200.1'} 
and 
\[
h(\bn+\bt)=h(0)+\im(\bn+\bt)\cdot(r_1,r_2) 
\] 
for any $\bt\in \cT.$ Note that the argument for \eqref{473} is an independent proof from Proposition \ref{prop:raster}. 

\end{proof}

The expressions \eqref{471} and \eqref{472}  correspond to the affine phase ambiguity
while  \eqref{474'} is the scaling factor ambiguity.

Even though the global uniqueness \eqref{471}-\eqref{473} is our goal but the raster scan with $\tau=1$ has too much redundancy and  is impractical.
On the other hand,  when $\tau>1$,  there are many additional ambiguities associated with the regular raster scan, posing substantial challenge to blind ptychographic reconstruction \cite{raster}.  Two of these ambiguities are illustrated below. 

The first example shows the ambiguity induced by the affine profile of the block phase \eqref{affine}.

%\medskip
\begin{ex}\label{ex6} For $q=3, \tau=m/2$, let
 \beqn
 f&=&
\lt[\begin{matrix}
f_{00}&f_{10}&f_{20}\\
f_{01}& f_{11}  & f_{21}   \\
f_{02} & f_{12} & f_{22}
\end{matrix}\rt]\\
 g&=&
\left[
\begin{matrix}
f_{00}  & e^{\im 2\pi/3} f_{10} & e^{\im 4\pi/3} f_{20}   \\
e^{\im 2\pi/3} f_{01} &e^{\im 4\pi/3} f_{11} &f_{21}  \\
e^{\im 4\pi/3} f_{02}  & f_{12}& e^{\im 2\pi/3}  f_{22}  
\end{matrix}
\right]
\eeqn
be  the object and its reconstruction, respectively, where $f_{ij}\in \IC^{n/3\times n/3}$.  Let 
 \beqn
\mu^{kl}=
\left[
\begin{matrix}
\mu^{kl}_{00}  & \mu^{kl}_{10}    \\
\mu^{kl}_{01}&\mu^{kl}_{11}
\end{matrix}
\right],\quad \nu^{kl}=
\left[
\begin{matrix}
\mu^{kl}_{00}  &e^{-\im 2\pi/3}  \mu^{kl}_{10}    \\
e^{-\im 2\pi/3} \mu^{kl}_{01}&e^{-\im 4\pi/3} \mu^{kl}_{11}
\end{matrix}
\right], 
\eeqn
$ k,l=0,1,2,$
be the $(k,l)$-th shift of the probe and estimate, respectively, where $\mu^{kl}_{ij}\in \IC^{n/3\times n/3}$. 

Let $f^{ij}$ and $g^{ij}$ be the part of the object and estimate masked  by $\mu^{ij}$ and $\nu^{ij}$, respectively. For example, we have
\[   f^{00}=
\lt[\begin{matrix}
f_{00}&f_{10}\\
f_{01}& f_{11}
\end{matrix}\rt],\quad  f^{10}=
\lt[\begin{matrix}
f_{10}&f_{20}\\
 f_{11}  & f_{21}
\end{matrix}\rt],\quad  f^{20}=
\lt[\begin{matrix}
f_{20}&f_{00}\\
f_{21}& f_{01}
\end{matrix}\rt]
\]
{  and likewise for other $f^{ij}$ and $g^{ij}$}. 
 It is easily seen that $\nu^{ij}\odot g^{ij}=e^{\im (i+j)2\pi/3} \mu^{ij}\odot f^{ij}.$
\end{ex}

The next example illustrates the periodic artifact called the raster grid pathology. 

\medskip
\begin{ex}\label{ex5} For $q=3, \tau=m/2$ and any  $\psi\in \IC^{{n\over 3}\times {n\over 3}}$, let 
 \beqn
 f&=&
\lt[\begin{matrix}
f_{00}&f_{10}&f_{20}\\
f_{01}& f_{11}  & f_{21}   \\
f_{02} & f_{12} & f_{22}
\end{matrix}\rt]\\
 g&=&
\left[
\begin{matrix}
e^{-\im \psi}\odot f_{00}  & e^{-\im \psi} \odot f_{10} & e^{-\im \psi} \odot f_{20}   \\
e^{-\im \psi}\odot f_{01} &e^{-\im\psi} \odot f_{11} & e^{-\im\psi}\odot f_{21}  \\
e^{-\im \psi}\odot  f_{02}  & e^{-\im\psi}\odot  f_{12}& e^{-\im \psi} \odot  f_{22}  
\end{matrix}
\right]
\eeqn
be  the object and its reconstruction, respectively, where $f_{ij}\in \IC^{n/3\times n/3}$.  Let 
 \beqn
\mu^{kl}=
\left[
\begin{matrix}
\mu^{kl}_{00}  & \mu^{kl}_{10}    \\
\mu^{kl}_{01}&\mu^{kl}_{11}
\end{matrix}
\right],\quad \nu^{kl}=
\left[
\begin{matrix}
e^{\im\psi}\odot \mu^{kl}_{00}  &e^{\im \psi} \odot \mu^{kl}_{10}    \\
e^{\im \psi}\odot \mu^{kl}_{01}&e^{\im \psi} \odot \mu^{kl}_{11}
\end{matrix}
\right], 
\eeqn
$ k,l=0,1,2,$
be the $(k,l)$-th shift of the probe and estimate, respectively, where $\mu^{kl}_{ij}\in \IC^{n/3\times n/3}$. 

Let $f^{ij}$ and $g^{ij}$ be the part of the object and estimate illuminated by $\mu^{ij}$ and $\nu^{ij}$, respectively (as
in Example \ref{ex6}). It is verified easily that $\nu^{ij}\odot g^{ij}= \mu^{ij}\odot f^{ij}.$
\end{ex}

The overlap ratio of above examples is $50\%$ (since $\tau=m/2$). However, the above construction of ambiguities can be
easily extended to the raster scan with any overlap ratio.  Moreover, all other 
ambiguities for blind ptychography with the raster scan can be shown to be the combinations of the above two types of ambiguity \cite{raster}.  

On the other hand, in the case $\tau=1$ ($q=n$), the ambiguity in Example \ref{ex6} is identical to  the affine phase ambiguity \eqref{lp1}-\eqref{lp2} while the ambiguity in Example \ref{ex5} becomes
the constant phase factor inherent to any phase retrieval. 

For the rest of the paper, we  develop an approach to  characterizing a more general class of scan schemes that enjoy the global uniqueness property \eqref{471}-\eqref{473}
by leveraging the phase drift equation \eqref{200.4}-\eqref{200.5'} more effectively.   
We refer to such schemes as {\em ptychographically complete} schemes.

\commentout{%08/14/2019
%For simplicity, we impose the periodic boundary condition  on the object domain $\IZ_n^2$. 
We have the following observation  (cf. Lemma \ref{cor:u}).
\begin{corollary}\label{cor:u2}
Under the assumptions of Theorem \ref{thm:many}, 
if, in addition,  $h(\bn)=h(\bn+\bt_k)$ for some $\bn\in \cM^0$, then $\theta_k=\theta_0$. 

\end{corollary}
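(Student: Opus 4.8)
The plan is to read the conclusion directly off the displayed relation \eqref{200.4}, which was already derived from Theorem \ref{thm:many} under the standing assumption (adopted for the rest of the paper) that $f$ does not vanish on $\cM$. The essential structural fact is that in \eqref{200.4} the increment $h(\bn+\bt_k)-h(\bn)$ is shown to equal $\im\theta_k-\im\theta_0$ modulo $\im 2\pi$ for \emph{every} $\bn\in\cM^0$, and the right-hand side $\im\theta_k-\im\theta_0$ carries no dependence on $\bn$. Thus the increment is constant in $\bn$ (mod $\im 2\pi$), and its value is precisely the phase offset $\im(\theta_k-\theta_0)$ that we wish to pin down.

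First I would invoke \eqref{200.4} at the particular pixel $\bn\in\cM^0$ furnished by the hypothesis, obtaining $h(\bn+\bt_k)-h(\bn)=\im\theta_k-\im\theta_0\bmod\im 2\pi$. Substituting the assumed equality $h(\bn)=h(\bn+\bt_k)$ makes the left-hand side vanish, so $\im\theta_k-\im\theta_0\equiv 0\pmod{\im 2\pi}$, i.e.\ $\theta_k\equiv\theta_0\pmod{2\pi}$. Since the block phases enter the conclusion \eqref{3.100} only through the factors $e^{\im\theta_k}$, congruence modulo $2\pi$ is exactly the sense in which the equality $\theta_k=\theta_0$ is asserted, and the claim follows.

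There is essentially no obstacle here: the corollary is an immediate specialization of \eqref{200.4}, entirely parallel to the closing sentence of Corollary \ref{cor:u}, where the coincidence $\phi(\bn-\bt)=\phi(\bn)$ at a single pixel forced $\theta_1=\theta_2$. The only point worth verifying is that \eqref{200.4} is legitimately available at the chosen $\bn$; it is, because that relation was established for all $\bn\in\cM^0$ precisely under the non-vanishing-$f$ hypothesis in force, which guarantees that $h(\bn)$ and $h(\bn+\bt_k)$ are both well defined there.
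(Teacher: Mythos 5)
Your proposal is correct and follows exactly the route the paper intends: Corollary \ref{cor:u2} is stated as an immediate observation following the display \eqref{200.4}, and reading that relation at the given pixel $\bn$, substituting $h(\bn)=h(\bn+\bt_k)$, and concluding $\theta_k\equiv\theta_0 \pmod{2\pi}$ is precisely the argument (the paper offers no further proof). Your added remarks --- that the non-vanishing of $f$ guarantees $h$ is well defined at $\bn$ and $\bn+\bt_k$, and that equality of block phases is meant modulo $2\pi$ since they enter only through $e^{\im\theta_k}$ --- are accurate and consistent with the paper's conventions.
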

}

\section{Motivating example:  perturbed raster scan}\label{sec:simple}
 \begin{figure}
\centering
\includegraphics[width=9cm,height=3cm]{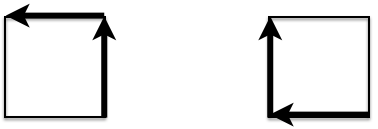}
\caption{Shortest paths (in the Manhattan distance) from the lower-right corner $(1,-1)$ to the upper-left corner $(0,0)$ in the diagrams spanned by $\bt_{kl}-\bt_{k-1,l}$ and $\bt_{k+1,l}-\bt_{kl}$. The left diagram corresponds to $\sigma_1$ in \eqref{reduce1} and the right diagram to $\sigma_2$ in \eqref{reduce2}.}
%{and the right diagram to $\sigma_3$ in \eqref{588}.  Many other paths  (not shown) are longer and  not constrained within the $3\times 3$ grid  most of which give rise to an empty $\Sigma_0$ and contribute nothing to \eqref{D55}. The dotted circles represent the grid points in the definition of $\Sigma_0$; they are all the grid points along the path except for the upper right corner/end points.  }
\label{fig2}
\end{figure}

Consider small perturbations to the raster scan:
\beq
\label{perturb}
\bt_{kl}=\tau (k,l)+(\delta^1_{kl},\delta^2_{kl}),\quad k,l=0,\dots,q-1
\eeq
where $\tau=n/q$, $\bt_{ql}=\bt_{0l},\bt_{kq}=\bt_{0l}$ (the periodic boundary condition) and $ \delta^1_{kl}, \delta^2_{kl}$ are small integers. 
Without loss of generality, we set $\delta^1_{00}=\delta^2_{00}=0$ and hence $\bt_{00}=(0,0)$ (see Fig. \ref{fig:irregular}(b)). 

We assume the {\em non-overstepping} condition that the perturbations do not change the ordering of $\{\bt_{kl}\}$, i.e.
\beq
\tau+\delta^1_{k+1,l}-\delta_{kl}^1>0,\quad  \tau+\delta^2_{k,l+1}-\delta_{kl}^2>0,  \quad k,l =0,\cdots, q-1.\label{530}
\eeq

Consider the triplet $(\bt_{k-1,l},\bt_{kl},\bt_{k+1,l})$ for any $k,l$ and let
\beq
\label{510}
\ba^1_{kl}:=(\bt_{kl}-\bt_{k-1,l})-(\bt_{k+1,l}-\bt_{kl})=2\delta^1_{kl}-\delta^1_{k-1,l}-\delta^1_{k+1,l}, 
\eeq
implying 
\beq
 &&h(\bn+  2\bt_{kl}-\bt_{k+1,l}-\bt_{k-1,l})=h(\bn+ \ba^1_{kl})\label{17.1}
\eeq
We want to reduce the lefthand side of \eqref{17.1} to $h(\bn)$ by using
\eqref{200.5'} repeatedly.

There are at least two paths for reduction:
\beq\label{reduce1}
\sigma_1: && (\bt_{kl}-\bt_{k-1,l})-(\bt_{k+1,l}-\bt_{kl})\longrightarrow \bt_{kl}-\bt_{k-1,l}\longrightarrow 0\\
\sigma_2:&&  (\bt_{kl}-\bt_{k-1,l})-(\bt_{k+1,l}-\bt_{kl})\longrightarrow -( \bt_{k+1,l}-\bt_{k,l})\longrightarrow 0\label{reduce2}
\eeq
corresponding to the two paths depicted in Fig. \ref{fig2}. 

Following $\sigma_1$, we have the identities
\beqn
h(\bn+\ba^1_{kl})&=& h(\bn+\bt_{kl}-\bt_{k-1,l})+\im\theta_{kl}-\im\theta_{k+1,l},\quad\forall \bn\in \cM^{kl}-\ba^1_{kl}\\
&=&h(\bn)+\im (2\theta_{kl}-\theta_{k-1,l}-\theta_{k+1,l})\quad \forall\bn\in \cM^{kl}-\bt_{kl}+\bt_{k-1,l}
\eeqn
implying 
\beq\label{550}
h(\bn+\ba^1_{kl})&=& h(\bn)+\im (2\theta_{kl}-\theta_{k-1,l}-\theta_{k+1,l})
\eeq
for all $\bn$ in the set
\beq
\lt[\cM^{kl}-\ba^1_{kl}\rt]\cap  \lt[\cM^{kl}-\bt_{kl}+\bt_{k-1,l}\rt].\label{551}
\eeq
On the other hand, following $\sigma_2$ we have the identities
\beqn
h(\bn+\ba^1_{kl})&=& h(\bn+\bt_{10}-\bt_{00})+\im\theta_{10}-\im\theta_{20},\quad\forall \bn\in \cM^{kl}-\ba^1_{kl}\\
&=&h(\bn)+\im (2\theta_{kl}-\theta_{k-1,l}-\theta_{k+1,l})\quad \forall\bn\in \cM^{kl}-\bt_{kl}+\bt_{k+1,l}
\eeqn
implying  \eqref{550} for all $\bn$ in the set
\beq
 \lt[\cM^{kl}-\ba^1_{kl}\rt]\cap  \lt[\cM^{kl}-\bt_{kl}+\bt_{k+1,l}\rt].\label{552}
\eeq
Combining the two routes of reduction, we have 
\beq
\label{200.10} \label{57'}
h(\bn+ \ba^1_{kl})=h(\bn)+\im ( 2\theta_{kl}-\theta_{k+1,l}-\theta_{k-1,l} )
\eeq
(modulo $\im2\pi$)
 for all $\bn$ in the set $(\cM^{kl}-\ba^1_{kl})\cap D^1_{kl}$ where
 \beq
 D^1_{kl}&:=& (\cM^{kl}-\bt_{kl}+\bt_{k-1,l}) \cup  (\cM^{kl}-\bt_{kl}+\bt_{k+1,l})\label{555}\\
 %&=& \lt[\cM^{kl}-(\tau+\delta^1_{k}-\delta^1_{k-1},0)\rt]\cup  \lt[\cM^{kl}+(\tau+\delta^1_{k+1}-\delta^1_{k},0)\rt].
 &=&\cM^{k-1,l}\cup\cM^{k+1,l}.\nn
 \eeq

\commentout{%08/22/2019

There are several routes of reduction. For example, 
we can proceed from $2\bt_{kl}-\bt_{k+1,l}-\bt_{k-1,0}$ to $0$
along the path 
\beq
\label{reduce2}
\sigma_1: \quad 2\bt_{kl}-\bt_{k+1,l}-\bt_{k-1,l}\longrightarrow 2(\bt_{kl}-\bt_{k-1,l})\longrightarrow (\bt_{kl}-\bt_{k-1,l})\longrightarrow 0.
\eeq
We can depict the path as in  the left diagram in Figure \ref{fig2} where 
$\bt_{kl}-\bt_{k-1,l}$ and $\bt_{k+1,l}-\bt_{kl}$ are the horizontal and
vertical bases, respectively. The grid points of the diagram are integer multiples of the bases and
0 is the upper-left vertex of the diagram. 

To keep track of the domain of validity of \eqref{200.5'} along the path, it is important
to distinguish the two ends of each edge: the positive end that is on the right/top end
and the negative end that is on the left/low end of the edge. The negative ends 
are distinguished and marked out by red circles in Figure \ref{fig2}. 

Along $\sigma_1$, we have
\beqn
h(\bn+ 2\bt_{kl}-\bt_{k+1,l}-\bt_{k-1,0})&=&h(\bn+ 2(\bt_{kl}-\bt_{k+1,l})+  \im\theta_{kl}-\im\theta_{k-1,l}
\\
&=&h(\bn)+2\im\theta_{kl}-\im\theta_{k+1,l}-\im\theta_{k-1,l}%\quad\forall \bn\in (\cM^{00}-(a,0)+  \bt_{10})\cap (\cM^{00}-(a,0)+  2\bt_{10})
\eeqn

Hence for all 
\beq
\label{43}\bn&\in& (\cM^{k-1,l}+\bt_{k+1,l}- \bt_{kl})\cap \cM^{k-1,l} \\
&=& (\cM^{k-1,l}+(\tau+\delta^1_{k+1}-\delta^1_k,0))\cap\cM^{k-1,l}\nn
\eeq
we have \beqn
h(\bn+ 2\bt_{kl}-\bt_{k+1,l}-\bt_{k-1,0})&=&h(\bn+ \bt_{kl}-\bt_{k+1,l})+  \im\theta_{kl}-\im\theta_{k-1,l}
\\
&=&h(\bn)+2\im\theta_{kl}-\im\theta_{k+1,l}-\im\theta_{k-1,l}%\quad\forall \bn\in (\cM^{00}-(a,0)+  \bt_{10})\cap (\cM^{00}-(a,0)+  2\bt_{10})
\eeqn
 and hence
\beq
\label{200.10} \label{57'}
h(\bn+ \ba^1_{kl})=h(\bn)+\im r_1,\quad r^1_{kl}:= 2\theta_{kl}-\theta_{k+1,l}-\theta_{k-1,l}
\eeq
modulo $\im2\pi$. 

Let us consider an alternative route of reduction:
 \beq\label{reduce1}
\sigma_2:\quad 2\bt_{kl}-\bt_{k+1,l}-\bt_{k-1,0} \longrightarrow \bt_{kl}-\bt_{k-1,l}\longrightarrow 0
\eeq
where the proper direction for the first step in applying \eqref{200.4} is reversed. Keeping  track of the domain
of validity along the path, we have 
\beqn
h(\bn+ 2\bt_{kl}-\bt_{k+1,l}-\bt_{k-1,0})&=&h(\bn+  \bt_{kl}-\bt_{k-1,l})+\im\theta_{kl}-\im\theta_{k+1,l}\\%\quad \forall \bn\in (\cM^{00}-(a,0))\cap  (\cM^{00}-\bt_{10}) \\
&=&h(\bn)+\im r^1_{kl}%2\im\theta_{10}-\im\theta_{20}-\im\theta_{00}. %\quad \forall \bn\in (\cM^{00}-(a,0))\cap  (\cM^{00}-\bt_{10})\cap \cM^{00} 
\eeqn
for all  
\beq
\label{45}
\bn&\in &(\cM^{k-1,l}-2\bt_{kl}+\bt_{k+1,l}+\bt_{k-1,0})\cap  (\cM^{k-1,l}-\bt_{kl}+\bt_{k-1,l})\cap \cM^{k-1,l}\\
 &=& (\cM^{k-1,l}-(\tau+\delta^1_k-\delta_{k-1}^1,0))\cap (\cM^{k-1,l}-(\ba^1_{kl},0))\cap \cM^{k-1,l}. \nn
 \eeq
 In other words, \eqref{200.10} holds for all $\bn$ in the set
 $D^1_{k-1,l}\cap (\cM^{k-1,l}-(\ba^1_{kl},0))\cap \cM^{k-1,l}$ 
 where
 \beq
\label{590}D_{k-1,l}^1
&:=&\lt[\cM^{k-1,l}+(\tau+\delta^1_{k+1}-\delta^1_k,0)\rt]\cup \lt[\cM^{k-1,l}-(\tau+\delta^1_k-\delta_{k-1}^1,0)\rt].
 \eeq
 }

Likewise, with
\beq
\ba^2_{kl}:= (\bt_{kl}-\bt_{k,l-1})-(\bt_{k,l+1}-\bt_{kl})=2\delta^2_{kl}-\delta^2_{k,l-1}-\delta^2_{k,l+1}\label{18.1}
\eeq
we have
 \beq
\label{prop2'}
h(\bn+ \ba^2_{kl})=h(\bn)+\im (2\theta_{kl}-\theta_{k,l+1}-\theta_{k,l-1})
\eeq
(modulo $\im2\pi$) for all $\bn$ in the set $D^2_{k,l}\cap (\cM^{k,l-1}-\ba^2_{kl})$ where
\beq
 D^2_{kl}&:= &(\cM^{kl}-\bt_{kl}+\bt_{k,l-1}) \cup  (\cM^{kl}-\bt_{kl}+\bt_{k,l+1})\label{556}\\
%  &=& \lt[\cM^{kl}-(0,\tau+\delta^2_{kl}-\delta^2_{k,l-1})\rt]\cup  \lt[\cM^{kl}+(0,\tau+\delta^2_{k,l+1}-\delta^2_{kl})\rt].\nn
&=& \cM^{k,l-1}\cup\cM^{k,l+1}. \nn
 \eeq

Repeatedly using \eqref{200.5'}, we can prove that the relation  \eqref{200.10} and \eqref{prop2'} hold respectively  in the sets
\beq
\label{59}
\bigcup_{\bt\in \cT}\lt[\bt-\bt_{kl}+(\cM^{k-1,l}\cup\cM^{k+1,l})\cap (\cM^{kl}-\ba^1_{kl})\cap\cM^{kl}\rt]
\eeq
and
\beq 
\label{58}\bigcup_{\bt\in \cT}\lt[\bt-\bt_{kl}+( \cM^{k,l-1}\cup\cM^{k,l+1})\cap (\cM^{kl}-\ba^2_{kl})\cap\cM^{kl} \rt] 
\eeq
 where the additional restriction due to the presence of $\cM^{kl}$
is to ensure the validity of applying \eqref{200.5'} (See Lemma  \ref{lem8.1} for a proof in a more general setting). 

For a special class of perturbed raster scans, precise conditions for the sets in \eqref{59}-\eqref{58} to cover $\IZ_n^2$ can be simply stated as follows.  

\begin{figure}[t]
\centering
\subfigure[Perturbed grid given by \eqref{perturb2}]{\includegraphics[width=5cm,height=4.7cm]{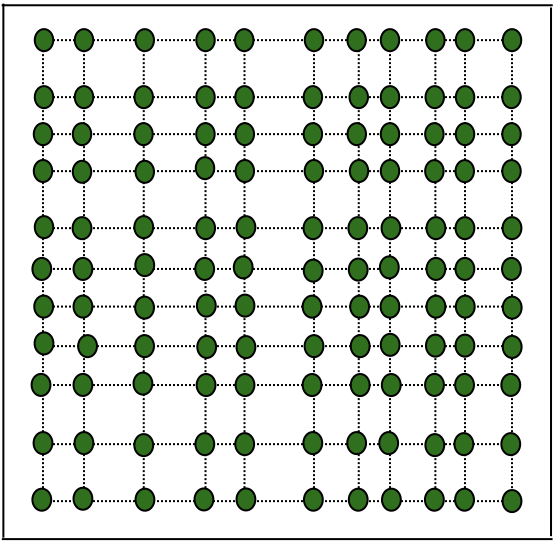}}\hspace{2cm}
\subfigure[Perturbed grid given by \eqref{perturb}]{\includegraphics[width=5cm,height=4.7cm]{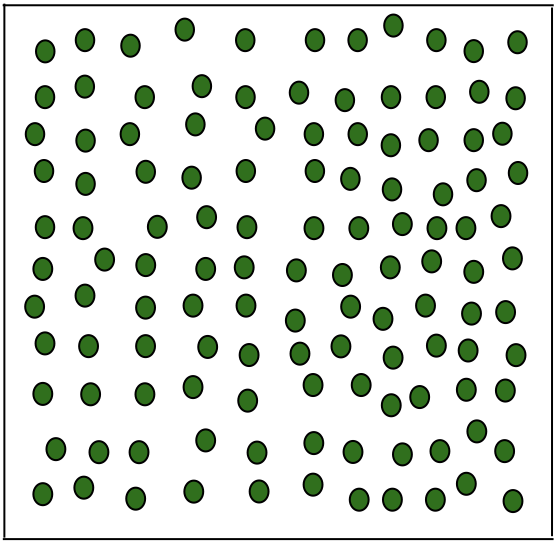}}
\caption{Two perturbed raster scans}
\label{fig:irregular}
\end{figure}

\begin{lem}\label{lem7.1} For the perturbed raster scan \eqref{perturb} with the non-overstepping condition \eqref{530}, suppose 
\beq
\label{perturb2}
\delta^1_{kl}=\delta^1_k, \quad \delta^2_{kl}=\delta^2_l,\quad\forall k,l=0,\cdots, q-1, 
\eeq
(Consequently, $\ba^1_{kl}=\ba^1_k,\ba^2_{kl}=\ba^2_l$), see Fig. \ref{fig:irregular}(a).

If for
some fixed $k,l,$
 \beq
\label{cover2}
2\tau \le m+\max\{\delta^1_{k-1}-\delta^1_{k+1}, \delta^2_{l-1}-\delta^2_{l+1}\}
\eeq
and 
\beq
\label{small1}
\label{small2}
&&\max_{i=1,2}[|a^i_{k}| +\max_{k'}\{\delta^i_{k'+1}-\delta^i_{k'}\}]\le m-\tau,
%0< e^1_{k}\le \tau+ \delta^1_{k}-\delta_{k-1}^1 \quad \mbox{or}\quad 0<-e^1_k\le \tau+\delta^1_{k+1}-\delta^1_{k}\\
%0< e^2_{l}\le \tau+ \delta^2_{l}-\delta_{l-1}^2 \quad \mbox{or}\quad 0<-e^2_l\le \tau+\delta^2_{l+1}-\delta^2_{l}.
\eeq
where 
\[
a^1_k=2\delta_k^1-\delta_{k-1}^1-\delta_{k+1}^1, \quad a^2_l=2\delta_l^2-\delta_{l-1}^2-\delta_{l+1}^2, 
\]
then each set in \eqref{59} and \eqref{58} 
\commentout{
 \beq
 \label{57}
h(\bn+(1,0))
&=&h(\bn)+\im r_1\mod \im2\pi
\eeq
}
 contains $\IZ_n^2$. 
\end{lem}
\begin{rmk}\label{rmk7.2}
For the raster scan \eqref{raster}, $a_k^1=a^2_l=0$ for all $k,l$. 

For small perturbations $\delta^1_k,\delta^2_l\ll 1$,  \eqref{small2} is satisfied and  \eqref{cover2} means an overlap ratio slightly greater than $50\%$. 
This is an improved and simplified version of  the one given in \cite{raster}. 

\end{rmk}
\begin{proof}
First \eqref{cover2} implies that the right edge of $\cM^{k-1,l}$ is no less than the left edge of $\cM^{k+1,l}$ by more than one pixel 
and that the upper edge of $\cM^{k-1,l}$ is no less than the lower edge of $\cM^{k,l+1}$ by more than one pixel. Hence
both $\cM^{k-1,l}\cup\cM^{k+1,l}$ and $ \cM^{k,l-1}\cup\cM^{k,l+1}$ are rectangles and by the non-overstepping condition \eqref{530} 
\beqn
\cM^{k-1,l}\cup\cM^{k+1,l} \supseteq \cM^{kl},\quad \cM^{k,l-1}\cup\cM^{k,l+1}\supseteq \cM^{kl}.
\eeqn

For the remaining  argument, it suffices to show that 
\beq\label{680}
\IZ_n^2\subseteq\bigcup_{\bt\in \cT}\lt[\bt-\bt_{kl}+ (\cM^{kl}-\ba^1_{k})\cap\cM^{kl} \rt],&& \IZ_n^2\subseteq\bigcup_{\bt\in \cT}\lt[\bt-\bt_{kl}+ (\cM^{kl}-\ba^2_{l})\cap\cM^{kl} \rt]. 
\eeq
To this end, since the intersection of two adjacent sets in \eqref{680} 
 \beq
 \label{710}
&&{\lt\{ \bt_{ij}-\bt_{kl}+\cM^{kl}\cap (\cM^{kl}-\ba^1_{k})\rt\}\cap \lt\{ \bt_{i+1,j}-\bt_{kl}+\cM^{kl}\cap (\cM^{kl}-\ba^1_{k}) \rt\}}\\
&& \lt\{\bt_{ij}-\bt_{kl}+\cM^{kl}\cap (\cM^{kl}-\ba^2_{l})\rt\}\cap \lt\{ \bt_{i,j+1}-\bt_{kl}+\cM^{kl}\cap (\cM^{kl}-\ba^2_{l})  \rt\}\label{712}
\eeq
are congruent to 
\beqn
&&{\lt\{\cM^{00}\cap (\cM^{00}-(a_k^1,0))\rt\}\cap \lt\{ (\tau+\delta^1_{i+1}-\delta^1_i,0)+\cM^{00}\cap (\cM^{00}-(a^1_k,0)) \rt\}}\\
&& \lt\{\cM^{00}\cap (\cM^{00}-(0,a^2_l))  \rt\}\cap \lt\{ (0,\tau+\delta^2_{j+1}-\delta^2_j)+\cM^{00}\cap (\cM^{00}-(0,a^2_l)) \rt\}, 
\eeqn 
\eqref{small1} implies that  neither set in \eqref{710}-\eqref{712} is empty for any $ i,j$. 
Therefore  \eqref{680} holds true. 
 \commentout{is no less than that of $\cM^{k-1,l}$ and
that the lower edge of $\cM^{kl}-\ba^2_l$ is no less than that of $\cM^{k,l-1}$;
in the meantime, the right edge of $\cM^{kl}-\ba^1_{k}$ is no greater than that of $\cM^{k+1,l}$
and the upper edge of $\cM^{kl}-\ba^2_l$ is no greater than that of $\cM^{k,l+1}$. 
}
\end{proof}

The following is an immediate consequence of \eqref{57'}, \eqref{prop2'} and Lemma \ref{lem7.1}. 

\begin{cor} Suppose that $f$ does not vanish in $\IZ_n^2$.  Under the assumptions of Lemma \ref{lem7.1}, if 
\beq
\label{711}
a^1_k=1, \quad a^2_l=1, \quad \mbox{for some $k,l$,}
\eeq
 then the scheme is ptychographically complete, i.e.
\beq\label{101}
h(\bn)&=&h(0)+\im \bn\cdot (r_1,r_2)\mod \im 2\pi\\
\label{472'}
\phi(\bn)&=&\theta_{00}-\Im[h(0)]-\bn\cdot(r_1,r_2) \mod 2\pi\\
\alpha&=&e^{-\Re[h(0)]}\label{474}\\
\theta_{\bt}&=&\theta_{00}+\bt\cdot (r_1,r_2)\mod 2\pi,\quad \bt\in \cT,  \label{473'}
\eeq
for all $\bn\in \IZ_n^2$ where  $r_1,r_2\in \IR$ are undetermined  constants (due to the affine phase ambiguity)
and $\alpha$ is given in \eqref{2mask}
\end{cor}

\begin{proof}
The assumption \eqref{711}, \eqref{57'}, \eqref{prop2'} and Lemma \ref{lem7.1} imply that
\[
h(\bn+\be_1)=h(\bn)+\im (2\theta_{kl}-\theta_{k+1,l}-\theta_{k-1,l}),\quad h(\bn+\be_2)=h(\bn)+\im (2\theta_{kl}-\theta_{k,l+1}-\theta_{k,l-1})
\]
\mbox{for all $\bn$ in $\IZ_n^2$} and hence \eqref{101}. 

The rest of the proof is exactly the same as that of Corollary \ref{cor:1}. In particular, \eqref{473'} follows from \eqref{101} and the phase drift equation \eqref{200.4}-\eqref{200.5'}. 
\end{proof}

More generally, we  have the  following global uniqueness theorem for the perturbed raster scan  \eqref{perturb2}. 

\begin{thm}\label{thm7.4} Suppose that $f$ does not vanish in $\IZ_n^2$. For the perturbed raster scan \eqref{perturb2} satisfying the non-overstepping condition \eqref{530} let $\{(\delta^1_{k_i},\delta^2_{l_j}):i,j\}$ be any nonempty subset of  perturbations  satisfying \eqref{cover2} and \eqref{small2} in Lemma \ref{lem7.1}. 

Let \[
a^1_{i}=2\delta^1_{k_i}-\delta^1_{k_i-1}-\delta^1_{k_i+1}, \quad
a^2_{j}=2\delta^2_{l_j}-\delta^2_{l_j-1}-\delta^2_{l_j+1}, \quad\forall i,j,
\]
 and suppose  \beq
\label{coprime0}
\gcd_{i}\lt(|a^1_{i}|\rt)=\gcd_{j}\lt(|a^2_{j}|\rt)=1 
\eeq
where \mbox{\rm gcd} denotes the greatest common divisor.  
\commentout{
 \beq
\label{cover2'}
&&2\tau\le m-\max_{i=1,2}\{\delta^i_{k_j+1}-\delta^i_{k_j-1}\}\\
\label{small2'}
&&\max_{i=1,2}[|a^i_{k_j}| +\max_{k'}\{\delta^i_{k'+1}-\delta^i_{k'}\}]\le m-\tau
%\label{small1'} &&\delta^i_{k_j+1}-\delta^i_{k_j+2}\le \tau\le m-1+\delta^i_{k_j+1}-\delta^i_{k_j+2}.
\eeq
}
Then the global uniqueness 
\eqref{101}-\eqref{473'} holds true and the scheme is ptychographically complete.  \end{thm}
\begin{proof}
The coprime condition \eqref{coprime0} implies the existence of  $c^1_{i},c^2_{j}\in \IZ$ such that
\beq
\sum_{i}c^1_{i}a^1_{i}=\sum_{j}c^2_{j} a^2_{j}=1.\label{prime}\label{coprime}
\eeq

%The non-overstepping condition \eqref{530} implies that  $\cM^{kl}$ overlaps with $\cM^{k+1,l}, \cM^{k,l+1},  \cM^{k-1,l}$ and $\cM^{k,l-1}$ for all $k,l$. 
\commentout{
Let 
\beq
\label{D1}
\Delta^1_{kl}:=2\theta_{k+1,l}-\theta_{k+2,l}-\theta_{kl}\\
\Delta^2_{kl}:=2\theta_{k,l+1}-\theta_{k,l+2}-\theta_{kl}. \label{D2}
\eeq
}

%(This observation is used to simplify the notation but not the essence of the remainder argument). 

By repeatedly using \eqref{57'} and \eqref{prop2'} we have 
\beqn
h(\bn+\be_1)&=&h\lt(\bn+(\sum_{i}c^1_{i} a^1_{i},0)\rt)\nn
= h(\bn)+\im r_1 \mod \im 2\pi\\
h(\bn+\be_2)&=&h\lt(\bn+(0,\sum_{j}c^2_{j} a^2_{j})\rt)
=h(\bn)+\im r_2 \mod \im 2\pi
\eeqn
where
\beqn
r_1=\sum_{i} c^1_{i} (2\theta_{k_i,i}-\theta_{k_i+1,i}-\theta_{k_i-1,i}),\quad r_2=\sum_{j} c^2_{j} (2\theta_{i, l_j}-\theta_{i,l_j+1}-\theta_{i, l_j-1}) \label{rr'}
\eeqn
and hence  \eqref{101}. 
%As an unintended consequence the sums in \eqref{rr'} that define $r_1$ and $r_2$ must be independent of $i$. 

\end{proof}

\commentout{%08/26/2019
The opposite case to \eqref{perturb2} is the perturbed raster scan with
\beq
\label{perturb3}
\delta^1_{kl}=\delta^1_l,\quad \delta^2_{kl}=\delta^2_k.
\eeq

}

Instead of linear shifts with uneven step sizes in \eqref{perturb2}, the general case \eqref{perturb} produces curvilinear shifts which is
more difficult to analyze. 
To state the analogous theorem for the general case  \eqref{perturb}, 
let  $ \bu_i:=(u_{i1},u_{i2}), i=1,2,$ be a $\IZ^2$-lattice basis, i.e. the four integers $u_{11}, u_{12}, u_{21}, u_{22} $ satisfy 
\beq
\label{basis}
u_{11}u_{22}-u_{12}u_{21}=1. 
\eeq
Since $u_{11}u_{22}-u_{12}u_{21}=1$, there exist integers $b_{ij},i,j=1,2,$ such that 
\[
b_{11} \bu_1+b_{12}\bu_2=\be_1=(1,0),\quad
b_{21}\bu_1+b_{22}\bu_2=\be_2=(0,1). 
\]

\begin{thm}\label{thm7.5}
Suppose that $f$ does not vanish in $\IZ_n^2$. For the perturbed raster scan \eqref{perturb} satisfying the non-overstepping condition \eqref{530}, let 
 $\{(\delta^1_{k_il_i},\delta^2_{k_jl_j}): i,j\}$ 
be any nonempty subset of perturbations such that 
\beq
\label{802}
\IZ^2_n\subseteq \bigcup_{\bt\in \cT}\lt[\bt-\bt_{k_il_i}+(\cM^{k_i-1,l}\cup\cM^{k_i+1,l})\cap (\cM^{k_il_i}-\ba^1_{k_i})\cap\cM^{k_il_i}\rt],\quad\forall i
\eeq
and
\beq 
\label{803}\IZ_n^2\subseteq \bigcup_{\bt\in \cT}\lt[\bt-\bt_{k_jl_j}+( \cM^{k_j,l_j-1}\cup\cM^{k_j,l_j+1})\cap (\cM^{k_jl_j}-\ba^2_{l_j})\cap\cM^{k_jl_j} \rt],\quad \forall j.
\eeq
Let \beq
\label{800}
\ba^1_{i}&:=&(\bt_{k_il_i}-\bt_{k_i-1,l_i})-(\bt_{k_i+1,l_i}-\bt_{k_il_i}),\quad\forall i\\
\ba^2_{j}&:=&(\bt_{k_jl_j}-\bt_{k_j,l_j-1})-(\bt_{k_j,l_j+1}-\bt_{k_jl_j}),\quad \forall j\label{801}
\eeq
and suppose that 
 \beq\label{805}
 \sum_i c^1_i \ba^1_i = \bu_1,\quad  \sum_j c^2_j \ba^2_j= \bu_2
 \eeq
 for some  $c_i^1, c^2_j\in \IZ$ where $\{\bu_1,\bu_2\}$ is a $\IZ^2$-lattice basis.
 Then the global uniqueness
\eqref{101}-\eqref{473'} holds true and the scheme is ptychographically complete.  
\end{thm}
\begin{rmk}
The conditions \eqref{802}-\eqref{803} are tedious to state in terms of the perturbations $\delta^1_{kl}, \delta^2_{kl}$ and
do not provide much insight beyond what is given in Remark \ref{rmk7.2}. 

\end{rmk}
\begin{proof}
As before, we begin with \beqn
h(\bn+\ba^1_i)&=&h(\bn)+ \im (2\bt_{k_il_i}-\bt_{k_i-1,l_i}-\bt_{k_i+1,l_i})\\
h(\bn+\ba^2_j)&=& h(\bn)+\im (2\bt_{k_jl_j}-\bt_{k_j,l_j-1}-\bt_{k_j,l_j+1})
\eeqn
$(\mod \im 2\pi)$ for all $\bn\in \IZ_n^2$
and repeatedly use  \eqref{805} to obtain
\beqn
h(\bn+\bu_1)=h(\bn)+\im\Delta_1,\quad 
h(\bn+\bu_2)=h(\bn)+\im\Delta_2
\eeqn
where
\beqn
\Delta_1&= &\sum_{i} c^1_i(2\theta_{k_il_i}-\theta_{k_i+1,l_i}-\theta_{k_i-1,l_i})\\ 
\Delta_2&=& \sum_{j} c^2_j(2\theta_{k_jl_j}-\theta_{k_j, l_j+1}-\theta_{k_j, l_j-1}). 
\eeqn

Since $u_{11}u_{22}-u_{12}u_{21}=1$, there exist integers $b_{ij},i,j=1,2,$ such that 
\[
b_{11} \bu_1+b_{12}\bu_2=\be_1,\quad
b_{21}\bu_1+b_{22}\bu_2=\be_2. 
\]
Therefore, for $j=1,2$,
\beqn
{h(\bn+\be_1)}
&=&h(\bn)+ \im b_{11}\Delta_1
+ \im b_{12}\Delta_2 \\
h(\bn+\be_2)&=& h(\bn)+ \im b_{21}\Delta_1 +\im  b_{22}\Delta_2, 
\eeqn
and  \eqref{101}-\eqref{473'} hold true.
\end{proof}

\section{Mixing schemes with three-part coupling}\label{sec:blind}

Let us begin with a simple example showing that a perturbed scan with overlap ratios less than 50\% may result in excessive ambiguities.

\begin{ex} \label{ex50} 
\begin{figure}
\centering
\includegraphics[width=8cm]{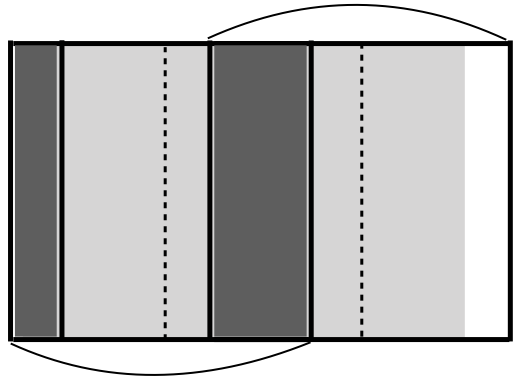}
\caption{A perturbed scan with $q=2$.  The arcs indicate the extend of the two blocks $\cM^{00}$ and $\cM^{10}$. The dotted lines mark the midlines of the two blocks. The grey area represents the object with the light grey areas being $R_{00}$ and $R_{10}$ and the dark grey areas being the overlap of the two blocks. The white area inside $\cM^{10}$ folds into the other end inside $\cM^{00}$ by the periodic boundary condition.}
\label{fig50}
\end{figure}

Let us consider the perturbed scheme \eqref{perturb2} with $q=2$ and  
\beq
\bt_{kl}=(\tau_k,\tau_l),\quad k,l=0,1,2
\eeq where $\tau_0=0,  \tau_2=n$ and 
\beq
{3m/2}<n<m+\tau_1. \label{84.0}
\eeq
The condition \eqref{84.0} is to ensure that the overlap ratio ($2-n/m$) between two adjacent blocks is less than (but can be made arbitrarily close to) $50\%$. To avoid the raster scan (which has many undesirable ambiguities \cite{raster}), we assume that $\tau_1\neq n/2$ and hence $\tau_2\neq 2\tau_1$. 
Note that the periodic boundary condition implies that $\cM^{00}=\cM^{20}=\cM^{02}=\cM^{22}$. 
Figure \ref{fig50} illustrates the relative positions of $\cM^{00}$ and $\cM^{10}$. 

First let us focus on the horizontal shifts $\{\bt_{k0}: k=0,1,2\}$. As shown in Figure \ref{fig50}, two subsets  of $\cM=\IZ^2_n$
\[
R_{00}= \lb m+\tau_1-n,\tau_1-1 \rb\times \IZ_m,\quad R_{10}= \lb m, n-1\rb\times \IZ_m
\]
 are covered only once  by $\cM^{00}$ and $\cM^{10}$
respectively due to the \eqref{84.0}.  It is straightforward to check that the conclusion of Lemma \ref{lem7.1} fails in this case. 

 %Clearly,  $R_0+\bt_{20}=R_1$. 
 Now consider the intersections 
 \beqn
 \tilde R_{10}:=R_{10}\cap (\bt_{10}+R_{00})&=&R_{10}\cap \lb m+2\tau_1-n,2\tau_1-1\rb\times \IZ_m\\
 \tilde R_{00}:= (R_{10}-\bt_{10})\cap R_{00}&=&\lb m-\tau_1, n-\tau_1-1\rb\times \IZ_m\cap R_{00}
 \eeqn
which respectively  correspond to the same region of the mask in $\cM^{10}$ and $\cM^{00}$
and let $h_1$ be any function defined on $\cM$ such that $h_1(\bn)=0$ for any $\bn\neq  \tilde R_{10}\cup \tilde R_{00}$
and $h_1(\bn+\bt_{10})=h_1(\bn)$ for any $\bn\in \tilde R_{00}$. 

Consider the object estimate
$ g(\bn)=
e^{h_1(\bn)}f(\bn)$ and 
 the mask estimate
$
\nu^{k0}(\bn):=e^{-h_1(\bn)}\mu^{k0}(\bn) $,  which is well defined because $\tilde R_{10}=\bt_{10}+\tilde R_{00}$ and
both correspond to the same region of the mask.

By the same token, we can construct a similar ambiguity function $h_2$ for the vertical shifts. 
With both horizontal and vertical shifts, we define the ambiguity function $h=h_1h_2$ and the
associated pair of mask-object estimate $
\nu^{kl}(\bn):=e^{-h(\bn)}\mu^{kl}(\bn) $ and $ g(\bn)=
e^{h(\bn)}f(\bn).$

Clearly, the mask-object pair $(\nu,g)$ produces the identical set of
diffraction patterns as $(\mu, f)$. 
Therefore this ptychographic scheme has
at least $(2\tau_1-m)^2$ or $(2n-2\tau_1-m)^2$ degrees of ambiguity dimension
depending on whether $2\tau_1<n$ or $2\tau_1>n$. 

\end{ex}

The above construction of the ambiguity function $h$ extends to a perturbed scan \eqref{perturb2}  with any $q\ge 2$ and overlap ratios less than $50\%$. 
More importantly, $\tilde R_{00}$ and $\tilde R_{10}$ illustrate  the notion of {\em singly covered invariant regions} which may be present in more general schemes of  low overlap ratio. 

A singly covered  invariant region $R$ is the union of congruent subsets $R_j\subset\cM^j$ each of which is  
covered {\em once only} by the same subset $S\subset \cM^0$ of the mask, i.e.
$R_j= \bt_j+S$ for all $j$.   As in Example \ref{ex50}, the existence of such an invariant region entails an ambiguity function $h$ that is  any function defined on $\cM$ such that $h(\bn)=0$ for any $\bn\not\in R$ and $h(\bn+\bt_j)=h(\bn)$ for all $\bn$. In other words, every component region $R_j$ is infected with the same ambiguity which is transported by the mask region $S$ from component to component. 
The ambiguity dimension equals the size of each component region $R_j$.

In what follows, we further develop the ideas in Section \ref{sec:simple} and Example \ref{ex50} and formulate uniqueness conditions for more general shifts than the perturbed raster scan \eqref{perturb}. 
For simplicity of presentation, we focus on 3-part coupling which is most
relevant in the case of perturbed raster scans.

To this end, we resort to the  single-indexed notation in Section \ref{sec:two}. 

%Without loss of generality, we assume that $\bt_0=(0,0)$ and that $\{f^1,\cdots, f^k\}$ is the neighborhood of $f^0$, consisting of  all the adjacent nodes in $\Gamma_s$. Suppose $\sum_{i=1}^k p_i\bt_i=\ba$ for some $p_i\in \IZ$ and $\ba\in \IZ^2$. 
For   two neighbors of  $f^k$, say $f^{k-1}$ and $f^{k+1}$,  suppose 
\beq
\label{400.13}
p_1(\bt_{k}-\bt_{k-1})-p_2(\bt_{k+1}-\bt_k)=\ba
\eeq
 for some $p_1, p_2\in \IN$ and $\ba\in \IZ^2$. For ease of notation, set
 \[
 \bs_1=\bt_{k}-\bt_{k-1},\quad \bs_2= \bt_{k+1}-\bt_k.
 \]
 The same analysis is applicable to 
the other case $p_1\bs_1+ p_2\bs_2=\ba$.  %Without loss of generality, we assume that $\bt_0=(0,0)$. 

There are several paths for reducing   $h(\bn+p_1\bs_1-p_2\bs_2)$ to $h(\bn)$. 
Motivated by the example of perturbed raster scan, we can represent a path
of reduction from $p_1\bs_1-p_2\bs_2$ to $0$ by a directed path
on the $\IZ^2$-lattice spanned by $\bs_1$ and $\bs_2$ as in Figure \ref{fig3} (for $p_1=2, p_2=1$). 
Figure \ref{fig3} depicts three shortest (in the Manhattan metric) paths 
\beq
\sigma_1:&& 2\bs_2-\bs_2\longrightarrow 2\bs_1\longrightarrow\bs_1\longrightarrow 0\label{810}\\
\sigma_2:&&2\bs_2-\bs_2\longrightarrow \bs_1-\bs_2\longrightarrow\bs_1\longrightarrow 0\label{820}\\
\sigma_3:&&2\bs_2-\bs_2\longrightarrow \bs_1-\bs_2\longrightarrow -\bs_2\longrightarrow 0. \label{830}
\eeq

Let $\Pi(p_1,- p_2,\bs_1,\bs_2)$ denote the set of shortest paths (in the Manhattan metric) from $(p_1,-p_2)$ to $0$ in the
lattice spanned by $\bs_1$ and $\bs_2$. 

Each path $\sigma\in \Pi(p_1,-p_2,\bs_1,\bs_2)$ gives rise to an identity  
\beq\label{r20}
h(\bn+\ba)=h(\bn+p_1\bs_1-p_2\bs_2)=h(\bn)-\im p_2(\theta_{k+1}-\theta_k)+\im p_1(\theta_k-\theta_{k-1}) 
\eeq
(modulo $\im 2\pi$)  for all $\bn$ in the set
\beq\label{840}
(\cM^{k}-\ba) \cap D_k(\sigma,\bs_1, \bs_2),\quad D_k(\sigma, \bs_1, \bs_2):= \bigcap_{(u,v) \in \sigma} (\cM^k-u\bs_1-v\bs_2)
\eeq
where $(u,v)\in \sigma$ means all the grid points in the path $\sigma$,  excluding the two end points.

%The set of validity $\Sigma_0(\sigma,\bs_1,\bs_2)$ along  the path $\sigma$ is the intersection of the shifted blocks that are $\cM^0 $ minus the vectors  corresponding to the grid points on the {\em left or lower} end of any edge on the path. 

  \begin{figure}
\centering
\includegraphics[width=15cm,height=3cm]{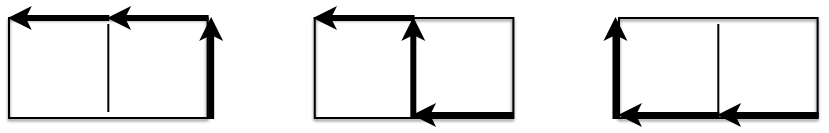}
\caption{Three shortest paths connecting $\ba$ to the origin (the upper-left corner) for $p_1=2, p_2=1$. The left diagram corresponds to $\sigma_1$ in \eqref{810}, the middle diagram to $\sigma_2$ in \eqref{820} and the right diagram to $\sigma_3$ in \eqref{830}.}  %Many other paths  (not shown) are longer and  not constrained within the  grid  most of which give rise to an empty $\Sigma_0$ and contribute nothing to \eqref{D55}.  }
\label{fig3}
\end{figure}

By repeatedly applying \eqref{200.5'} we can extend \eqref{r20} to a larger region as follows.
\begin{lem}\label{lem8.1}
The relation  \eqref{r20} holds 
 \beq
 \label{57}\label{200.112}
 h(\bn+\ba)&=&h(\bn)-\im p_2(\theta_{k+1}-\theta_k)+\im p_1(\theta_k-\theta_{k-1}),\quad\ba=p_1\bs_1-p_2\bs_2
 \eeq
 (modulo $\im 2\pi$)  holds true in the set
\beq
 \label{D55}\label{domain}
%\label{59}
\bigcup_{\bt\in \cT}\bigcup_{\sigma\in \Pi(p_1,-p_2,\bs_1,\bs_2)}\lt[\bt-\bt_k+D_k(\sigma,\bs_1,\bs_2) \cap (\cM^k-\ba)\cap\cM^k \rt]. 
\eeq
\end{lem}
%\commentout{%08/23/2019
\begin{proof}
For any fixed $\sigma$, we know from the above analysis that \eqref{57} holds true for all $\bn$ in the set \eqref{840}. 

By \eqref{200.5'},\beqn
h(\bn+\bt_l-\bt_k)&=&h(\bn)+\im\theta_l-\im\theta_k,\quad\forall\bn\in \cM^k,
\eeqn
and by \eqref{r20}
\beqn
 h(\bn+\ba+\bt_l-\bt_k)&=& h(\bn+\ba)+\im\theta_l-\im\theta_k\\
 &=& h(\bn)-\im p_2(\theta_{k+1}-\theta_k)+\im p_1(\theta_k-\theta_{k-1}) 
+\im\theta_l-\im\theta_k. 
\eeqn
Hence we have 
\beqn
h(\bn+\ba+\bt_l-\bt_k)&=& h(\bn+\bt_l-\bt_k)-\im p_2(\theta_{k+1}-\theta_k)+\im p_1(\theta_k-\theta_{k-1}).  
\eeqn
In other words, \eqref{57} is valid in the set $\bt_l-\bt_k+ \cM^k\cap (\cM^{k}-\ba) \cap D_k(\sigma,\bs_1,\bs_2)$. 
Taking the union over all shifts and paths, we obtain \eqref{D55}. 
\end{proof}
%}

 \commentout{
 We collect the preceding analysis into a lemma. 
\begin{lemma} \label{lem1} Suppose $\supp(f)=\cM$. Suppose  
\beq
\label{400.13}
p_1(\bt_{\ell_1}-\bt_{\ell_0})-p_2(\bt_{\ell_2}-\bt_{\ell_0})=\ba\in \IZ^2
\eeq
 for some $p_1, p_2\in \IZ$, $\ba\in \IZ^2$ and triplet 
$\bt_{\ell_0}, \bt_{\ell_1}, \bt_{\ell_2} \in \cT$.
Then 
\beq
%\label{200.10}
\label{200.112}
h(\bn+\ba)=h(\bn)+\im p_1(\theta_{\ell_1}-\theta_{\ell_0})-\im p_2(\theta_{\ell_2}- \theta_{\ell_0}) \mod \im 2\pi
\eeq
for all $\bn$ in the region 
\beq
 \label{D55}\label{domain}
\lefteqn{ D_{\ell_0}(\ba, \bt_{\ell_1}-\bt_{\ell_0}, \bt_{\ell_2}-\bt_{\ell_0},p_1,p_2)}\\
&:=&\bigcup_{\sigma\in \Pi(p_1,-p_2)}\bigcup_{\bt\in \cT} (\bt-\bt_{\ell_0}+\cM^{\ell_0}\cap(\cM^{\ell_0}-\ba)\cap \Sigma_{\ell_0}(\sigma,\bt_{\ell_1}-\bt_{\ell_0}, \bt_{\ell_2}-\bt_{\ell_0})) \nn
 \eeq 
 where $\Sigma_{\ell_0}(\sigma,\bt_{\ell_1}-\bt_{\ell_0}, \bt_{\ell_2}-\bt_{\ell_0})$ is the set of validity associated with the path $\sigma$. 
\end{lemma}
}

We now define the {\em mixing } schemes that connect different parts of the object by the ptychographic shifts
in a non-degenerate manner.

{\bf The Mixing Property}. {\em 
\commentout{Let
\[
 D_k(\sigma, \bs_1, \bs_2):= \bigcap_{(u,v) \in \sigma} (\cM^k-u\bs_1-v\bs_2)
\]
for a given path $\sigma$ in the integer-lattice spanned by $\bs_1$ and $\bs_2$.
Let $\Pi(p,-q,\bs_1,\bs_2)$ be the set of the shortest paths connecting $(p,-q)$ to (0,0) in the Manhattan metric. 
}
Let $\{(j^s_i, k^s_i, l^s_i)\}, s=1,2,$ be a non-empty subset of triplets of index such that
for some $p^s_i, q^s_i\in \IZ$
\beq
\label{D22}
&&\IZ^2_n\subseteq \bigcup_{\bt\in \cT}\bigcup_{\sigma}\lt[\bt-\bt_{k^s_i}+D_{i}(\sigma,\bt_{k^s_i}-\bt_{j^s_i},\bt_{l^s_i}-\bt_{k^s_i}) \cap (\cM^{k^s_i}-\ba^s_i)\cap\cM^{k^s_i} \rt]
\eeq
where $\sigma\in\Pi(p^s_i,-q^s_i,\bt_{k^s_i}-\bt_{j^s_i},\bt_{l^s_i}-\bt_{k^s_i})$ and 
$\ba^s_i:= p^s_i(\bt_{k^s_i}-\bt_{j^s_i})-q^s_i(\bt_{l^s_i}-\bt_{k^s_i}).$

Moreover, for some $c^s_i\in \IZ$ 
\beq
\label{mix}
\sum_{i} c^1_i \ba_i^1
=\bu_1,\quad \sum_{i} c^2_i \ba_i^2
=\bu_2
\eeq
 where $\{\bu_1,\bu_2\}$ is a $\IZ^2$-lattice basis. 
}

As seen in Theorems \ref{thm7.4} and \ref{thm7.5}, the most tedious part of the above definition is \eqref{D22} when
the set $D_{i}(\sigma,\bt_{k^s_i}-\bt_{j^s_i},\bt_{l^s_i}-\bt_{k^s_i}) \cap (\cM^{k^s_i}-\ba^s_i)\cap\cM^{k^s_i}$ is not rectangular.
%The definition is designed for generalization to irregular scanning schemes practiced in applications  \cite{circle, optimal, scan1}. The purpose behind the complex and daunting look of the definition is to extend the 3-part interactions to multi-part interactions in a two-tier approach. 

The mixing schemes are so named because the propagation of ambiguity by the ptychographic shifts, according to the phase drift equation \eqref{200.4}-\eqref{200.5'}, is so complete that
a distinct ambiguity profile (affine phase + scaling factor) emerges as a result. 

We can state the global uniqueness theorem for the mixing schemes whose proof is entirely analogous to that
of Theorem \ref{thm7.5}. 
\begin{theorem}\label{thm:mix} Suppose $\supp(f)=\IZ_n^2$.
If $\cT$ satisfies the mixing property, then 
\beq\label{101'}
h(\bn)&=&h(0)+\im \bn\cdot (r_1,r_2)\mod \im 2\pi, \\
\label{472''}
\phi(\bn)&=&\theta_0-\Im[h(0)]-\bn\cdot(r_1,r_2) \mod 2\pi\\
\alpha&=&e^{-\Re[h(0)]}\label{473''}\\
\label{200.117}
\theta_\bt&=&\theta_0
+\bt\cdot(r_1,r_2)\mod 2\pi,\quad\forall \bt\in \cT,
\eeq
for some $ r_1, r_2\in \IR$ and all $\bn\in \IZ_n^2$ where $\alpha$ is given in \eqref{2mask}.

\commentout{then $h(\bn)$ is an affine function
\beq\label{300.1}
h(\bn)=h(0)+\im \bn\cdot\br \mod \im 2\pi,\quad\bn\in \cM,
\eeq
with
\beq
\label{300.2}
\br=\lt[\begin{matrix}
\sum_{i=1}^2  b_{1i}\sum_{(\ell_0,\ell_1,\ell_2)} c^i\lt[p^i_1
(\theta_{\ell_1}-\theta_{\ell_0})-p^i_2 (\theta_{\ell_2}-\theta_{\ell_0})\rt]\\
\sum_{i=1}^2  b_{2i}\sum_{(\ell_0,\ell_1,\ell_2)} c^i\lt[p^i_1
(\theta_{\ell_1}-\theta_{\ell_0})-p^i_2 (\theta_{\ell_2}-\theta_{\ell_0})\rt]
\end{matrix}\rt].
\eeq
 and hence by \eqref{200.1}
\beq
\label{200.116}\label{phase-error}
\alpha(\bn)=e^{-\Re[h(0)]},\quad \phi(\bn)&=&\phi(0)-\bn\cdot\br \mod 2\pi,\quad \forall \bn\in \cM^0.
\eeq
}
\end{theorem}
\commentout{
\begin{proof}
By Lemma \ref{lem1},  \eqref{200.13} and \eqref{D22}  imply   that for $i=1,2,$ \beq
\label{200.115'}
h(\bn+\ba^i)=h(\bn)+\im p^i_1(\theta_{\ell_1}-\theta_{\ell_0})-\im p^i_2(\theta_{\ell_2}-\theta_{\ell_0}) \mod \im 2\pi, \quad\bn\in \IZ_n^2.
\eeq

By repeatedly applying \eqref{200.115'} according to \eqref{mix} (i.e. shifting by $c^i$ units of $\ba^i$ and summing)  we obtain that for $i=1,2,$ 
\beqn
h(\bn+\bu_i)=h(\bn)+\im\sum_{(\ell_0,\ell_1,\ell_2)} c^i\lt[p^i_1
(\theta_{\ell_1}-\theta_{\ell_0})-p^i_2 (\theta_{\ell_2}-\theta_{\ell_0}) \rt]\eeqn
where the periodic boundary condition is used to ensure the validity of \eqref{200.115'} in the reduction procedure. 

Since $u_{11}u_{22}-u_{12}u_{21}=1$, there exist integers $b_{ij},i,j=1,2,$ such that 
\[
b_{11} \bu_1+b_{12}\bu_2=\be_1,\quad
b_{21}\bu_1+b_{22}\bu_2=\be_2. 
\]
Hence for $j=1,2$,
\beq
\label{300.3}
{h(\bn+\be_j)}
&=&h(\bn)+ \im \sum_{i=1}^2  b_{ji}\sum_{(\ell_0,\ell_1,\ell_2)} c^i\lt[p^i_1
(\theta_{\ell_1}-\theta_{\ell_0})-p^i_2(\theta_{\ell_2}-\theta_{\ell_0}) \rt]
\eeq
and the desired result \eqref{300.1} with \eqref{300.2}
 follows from repeatedly applying \eqref{300.3}.  

\end{proof}

By specializing to the case $\ba^1=\be_1, \ba^2=\be_2$ in Theorem \ref{thm:mix}, we have the following corollary. 
\begin{cor}In addition to the assumptions in Theorem \ref{thm:mix}, if,  for some $p^j_1,p_2^j \in \IZ$ and two triplets $\bt_{\ell_0^j},\bt_{\ell_1^j},\bt_{\ell_2^j}\in \cT$, 
\[
p^j_1(\bt_{\ell^j_1}-\bt_{\ell^j_0})-p^j_2(\bt_{\ell^j_2}-\bt_{\ell^j_0})=\lt\{\begin{matrix} \be_1,&j=1\\
\be_2,& j=2
\end{matrix}\rt.
\] and if  
\[
{D_{\ell^j_0}(\ba^j,\bt_{\ell^j_1}-\bt_{\ell^j_0},\bt_{\ell^j_2}-\bt_{\ell^j_0},p^j_1,p^j_2)}=\cM,\quad j=1,2,
\] 
then 
\eqref{300.1} and \eqref{200.116} hold with
\beqn
\br=\lt[\begin{matrix}
p^1_1 \theta_{\ell^1_1}-p^1_2 \theta_{\ell^1_2}+(p^1_2-p^1_1)\theta_{\ell^1_0}\\
p^2_1 \theta_{\ell^2_1}-p^2_2\theta_{\ell^2_2}+(p^2_2-p^2_1)\theta_{\ell^2_0}
\end{matrix}\rt]
\eeqn

\end{cor}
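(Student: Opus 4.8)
The plan is to recognize this corollary as the specialization of Theorem \ref{thm:mix} to the standard basis $\bu_1=\be_1=(1,0)$ and $\bu_2=\be_2=(0,1)$, realized by a single triplet in each coordinate direction with unit coefficient. First I would observe that with this choice the determinant condition \eqref{basis} is automatic, since $[u_{ij}]$ is then the identity matrix, and that the mixing relation \eqref{mix} is met by taking $c^i=1$ on the single triplet $(\ell^i_0,\ell^i_1,\ell^i_2)$ furnished by the hypotheses (for which $\ba^i=\be_i$) and $c^i=0$ on every other triplet. In other words, the two displayed hypotheses of the corollary are nothing but a minimal concrete instance of the mixing property, so that Theorem \ref{thm:mix} applies verbatim.

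Rather than route through the general formula, the most transparent argument is to apply Lemma \ref{lem1} directly to each of the two triplets. Since the hypotheses assert $\ba^j=\be_j$ together with $D_{\ell^j_0}(\ba^j,\dots)=\cM$, Lemma \ref{lem1} immediately yields the two unit-shift identities
\beqn
h(\bn+\be_j)=h(\bn)+\im\big[p^j_1(\theta_{\ell^j_1}-\theta_{\ell^j_0})-p^j_2(\theta_{\ell^j_2}-\theta_{\ell^j_0})\big]\mod\im 2\pi,
\eeqn
valid for all $\bn\in\IZ_n^2$ and each $j=1,2$. Expanding the bracket as $p^j_1\theta_{\ell^j_1}-p^j_2\theta_{\ell^j_2}+(p^j_2-p^j_1)\theta_{\ell^j_0}$ exhibits it as precisely the $j$-th component of the vector $\br$ claimed in the statement.

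With both unit-shift relations in hand, I would iterate them coordinate by coordinate over the torus $\IZ_n^2$, invoking the periodic boundary condition to guarantee the validity of each elementary step, thereby assembling the affine profile $h(\bn)=h(0)+\im\,\bn\cdot\br \mod \im 2\pi$, which is \eqref{300.1}. Finally I would substitute this affine $h$ into \eqref{200.4'}: equating $\im\theta_0-\ln\alpha(\bn)-\im\phi(\bn)$ with $h(0)+\im\,\bn\cdot\br$, fixing $h(0)$ by setting $\bn=0$, and then separating real and imaginary parts yields $\alpha(\bn)=\alpha(0)$ and $\phi(\bn)=\phi(0)-\bn\cdot\br \mod 2\pi$, which is exactly \eqref{200.116}.

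There is no substantive obstacle here; the only point deserving care is to confirm that the general double sum \eqref{300.2} collapses to the simple expression for $\br$ stated in the corollary. This is immediate once one notes that $\bu_i=\be_i$ forces the change-of-basis coefficients to be $b_{ij}=\delta_{ij}$, so the outer sum over $i$ trivializes, while the single surviving triplet with $c^i=1$ reduces the inner sum to one term per coordinate. Beyond this bookkeeping the corollary is a direct transcription of Theorem \ref{thm:mix}, requiring no new estimate.
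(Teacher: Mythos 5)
Your proposal is correct and matches the paper's treatment: the paper offers no separate proof, presenting this corollary as the immediate specialization of Theorem \ref{thm:mix} to $\ba^1=(1,0)$, $\ba^2=(0,1)$ with a single triplet per direction, which is exactly your argument. Your direct route via Lemma \ref{lem1} is just the proof of Theorem \ref{thm:mix} unwound in this special case (where the change-of-basis step trivializes since $b_{ij}=\delta_{ij}$), so it is the same argument in substance.
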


}

\commentout{%10/18/2018
\subsection{Five-part coupling}
Without full generality, we discuss 5-part coupling in
the setting of the perturbed raster scan in Example \ref{ex:mix}. The purpose is to extend Theorem \ref{thm:mix} to the case without the assumption \eqref{78}.

For  the quintuples
\[
(\bt_{k-2,l}, \bt_{k-1,l}, \bt_{kl}, \bt_{k+1,l}, \bt_{k+2,l}),\quad
(\bt_{k,l-2}, \bt_{k, l-1}, \bt_{kl}, \bt_{k,l+1}, \bt_{k, l+2})
\]
we have
\beqn
&&\bt_{k-2,l}+\bt_{k-1,l}+\bt_{k+1,l}+ \bt_{k+2,l}-4\bt_{kl}\\
&&=(\delta^1_{k-2,l}+\delta^1_{k-1,l}+\delta^1_{k+1,l}+\delta^1_{k+2,l}-4\delta^1_{kl}, \delta^2_{k-2,l}+\delta^2_{k-1,l}+\delta^2_{k+1,l}+\delta^2_{k+2,l}-4\delta^2_{kl})\\
&&\bt_{k,l-2}+\bt_{k,l-1}+\bt_{k,l+1}+ \bt_{k,l+2}-4\bt_{kl}\\
&&=(\delta^1_{k,l-2}+\delta^1_{k,l-1}+\delta^1_{k,l+1}+\delta^1_{k,l+2}-4\delta^1_{kl}, \delta^2_{k,l-2}+\delta^2_{k,l-1}+\delta^2_{k,l+1}+\delta^2_{k,l+2}-4\delta^2_{kl}).
\eeqn
The paths of reduction are constructed in the lattice $\IZ^4$ with
the bases 
\[
\{\bt_{k-2,l}-\bt_{kl}, \bt_{k-1,l}-\bt_{kl}, \bt_{k+1,l}-\bt_{kl}, \bt_{k+2,l}-\bt_{kl}\}
\]
and 
\[
\{\bt_{k,l-2}-\bt_{kl}, \bt_{k,l-1}-\bt_{kl}, \bt_{k,l+1}-\bt_{kl}, \bt_{k,l+2}-\bt_{kl}\}
\] for the respective directions. 

}

\section{Conclusion and discussion}\label{sec:con}

\begin{figure}
\centering
\subfigure[Random independent mask]{\includegraphics[width=5.4cm,height=5cm]{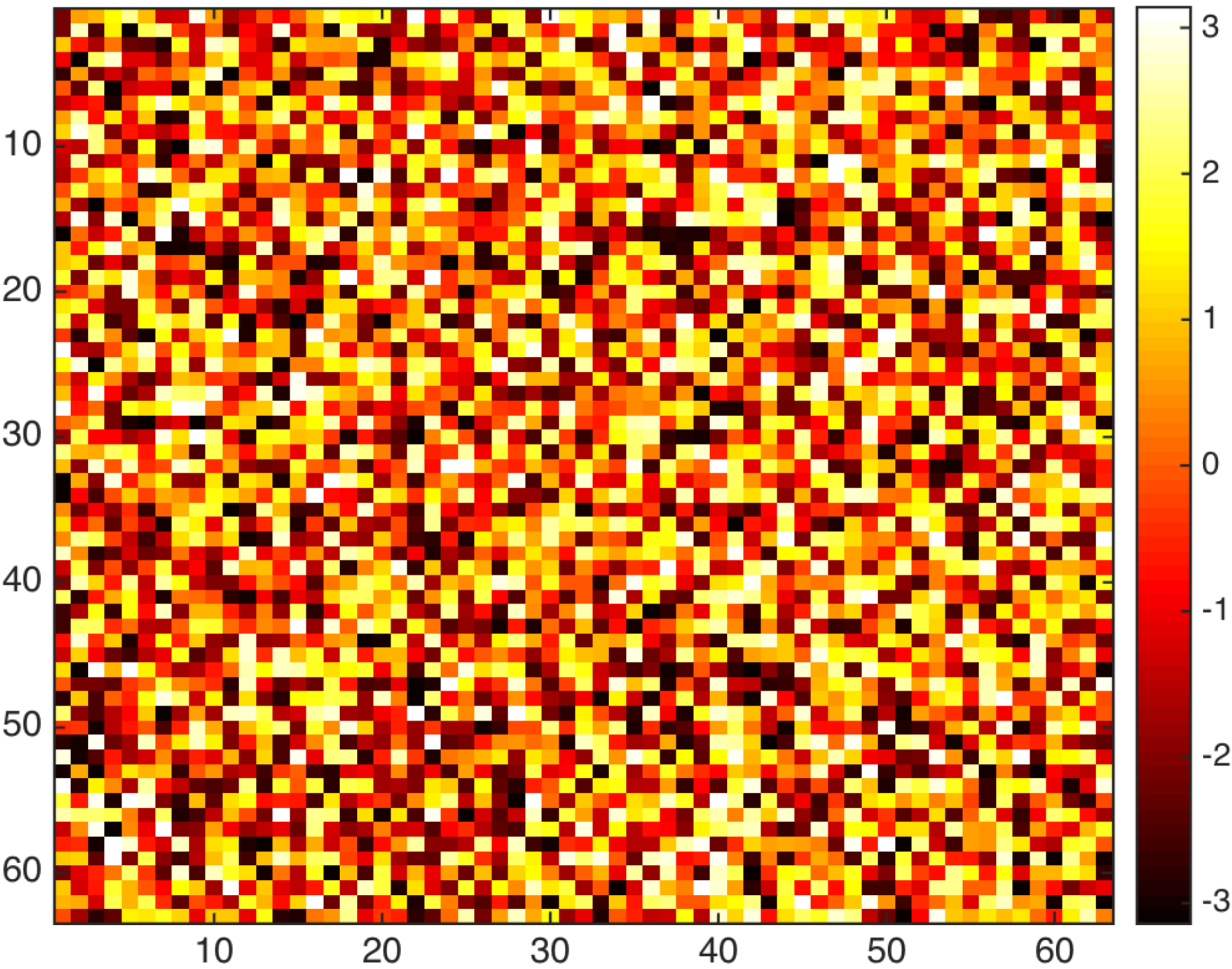}}\hspace{2cm}
%\subfigure[Correlated probe $c=0.4$]{\includegraphics[width=7cm]{figs/correlated-04.pdf}}\\
\subfigure[Correlation length = 0.7 $\times$ mask size]{\includegraphics[width=5.4cm,height=5cm]{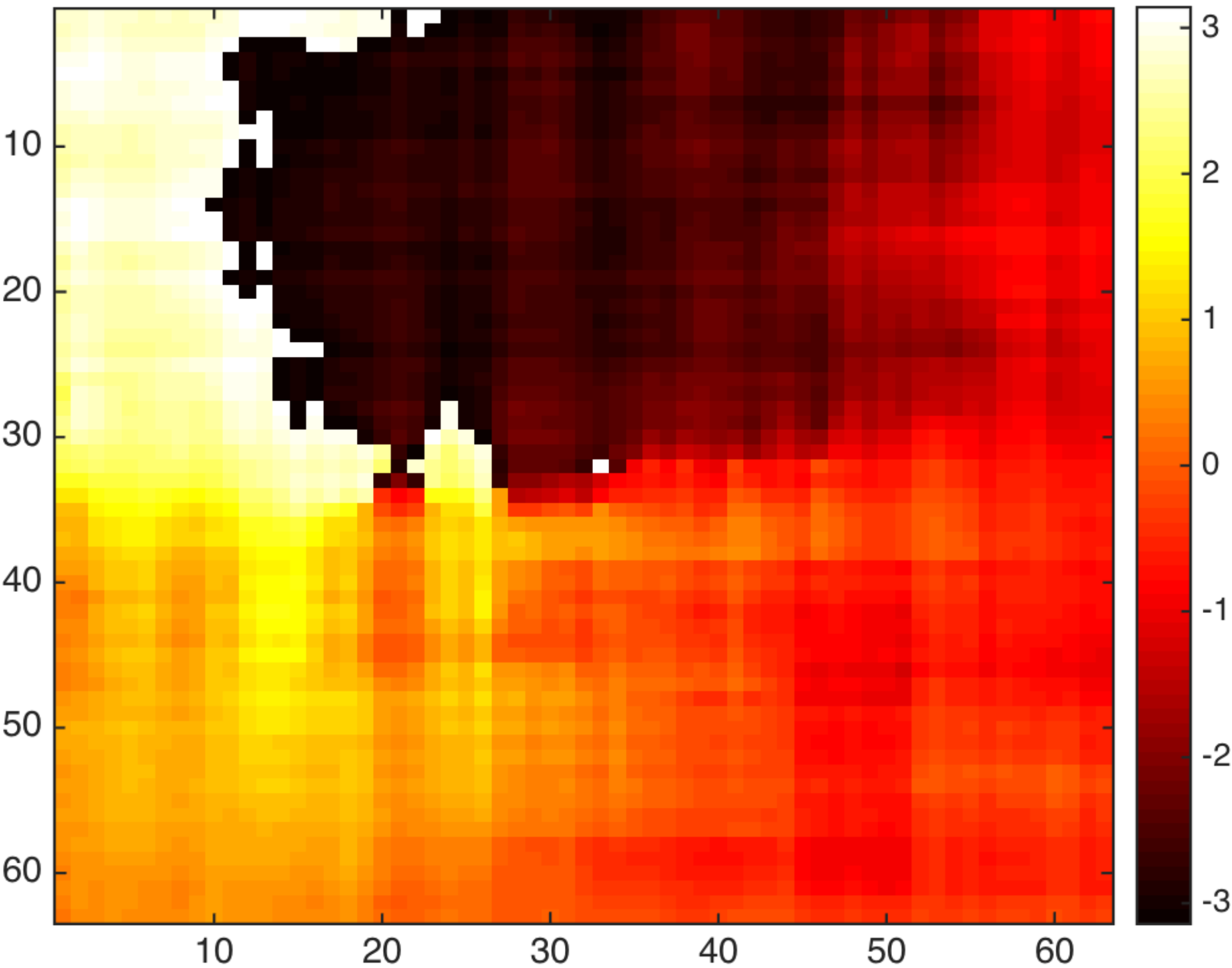}}\quad
%\subfigure[Correlated probe $c=1$]{\includegraphics[width=7cm]{figs/correlated-1.pdf}}
\caption{The phase profile of (a) the random independent mask  and (b) the correlated mask  of correlation length equal to
0.7 mask size.}
\label{fig:probes}
\end{figure}

Under the Mask Phase Constraint (MPC) and the  anchoring assumption, we have proved, for a strongly connected object, the local uniqueness (Theorem \ref{thm:u} and Theorem \ref{thm:many}) manifested as the phase drift equation \eqref{200.4}-\eqref{200.5'}.  We have shown by examples (Examples \ref{ex0} and \ref{ex3.1}) that
both MPC and the anchoring assumption are necessary. 
For the global uniqueness with the exception of inherent ambiguities (scaling factor and affine phase factor),
we have showed that  the mixing schemes  are ptychographically complete (Theorem \ref{thm:mix}), including
the perturbed raster scans (Theorems \ref{thm7.4} and \ref{thm7.5}). 

In addition, for both the mixing schemes and the regular raster scan (Proposition \ref{prop:raster}), we have proved that
their block phases must have an affine profile, $\theta_\bt=\theta_0+\bt\cdot \br$ for some $\br\in \IR^2$. It is unclear
if this holds true for any other schemes without the global uniqueness property.

Our approach to global uniqueness  is based on 3-part coupling designed particularly for analyzing the perturbed raster scans.
Our  theory and Example \ref{ex50} prove that
 the overlap ratio $50\%$ is more or less the minimum requirement for blind ptychography with the irregularly perturbed raster scan (see \eqref{coprime0}, \eqref{805}).

Our theory has several practical implications. First,  the connectivity condition \eqref{r100} suggests that  in the case of a sparse object a higher overlap ratio may be required.  Second,  
 MPC is re-interpretable  in terms of other measurement uncertainties such as scan position errors \cite{Fie08}. 
 The level of scan position off-sets that can be corrected depends on the type of mask used in measurement. 
For a random independent mask (Figure \ref{fig:probes}(a)), MPC corresponds to  correctable position error of about half a pixel;
For a correlated mask, MPC corresponds to correctable position error on the order of the correlation length.

In other words, there is  a trade-off between the mask correlation length and the correctable  level of scan position error. 
Numerical evidence suggests that with the same MPC a highly correlated  mask (Figure \ref{fig:probes}(b)) performs only slightly worse than the random independent mask \cite{DRS-ptych}. As mechanical and thermal vibrations are inevitable, it makes sense to use a mask of a comparable 
correlation length to compensate for scan position offset. On the other hand, a simple regular mask (e.g. Fresnel illumination spot) is often a sub-optimal  choice
as twin-like ambiguities may be present even with perfect knowledge of the mask \cite{ptych-unique}. 
In addition, a random mask has the benefit of producing more diffuse illumination and thus data of lower dynamic-range.

Another implication of MPC is in numerical reconstruction.  MPC is independent of the knowledge about the mask amplitude, meaning that
the knowledge about the mask phase is much more important for blind ptychography. Indeed, 
 MPC turns out to be  an effective method for mask
initialization, yielding geometrically convergent iterations, even when the initialization error (measured in L2 norm) is  large  \cite{DRS-ptych}. 

\begin{figure}
\centering
\subfigure[concentric circles]{
\includegraphics[width=5cm,height=5cm]{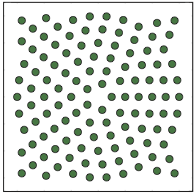}}\hspace{2cm}
\subfigure[Fermat spiral]{
\includegraphics[width=5cm,height=5cm]{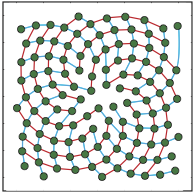}}
\caption{Special scans that have good empirical performances \cite{circle, optimal, scan1}. }
\label{fig4}
\end{figure}

One  is naturally  led to the important question of optimal scan schemes  which  use  the minimum number of diffraction patterns for a given object (i.e. the minimum redundancy in measurement) to be ptychographically complete. The measurement redundancy is more or less proportional to  the product of the number of adjacent blocks and the overlap ratio.   
Our results show that the irregularly perturbed raster scans with overlap ratio slightly over $50\%$ is optimal among
the class of perturbed raster scans. 
For more general scans, the minimum overlap requirement may be lowered and 5-part or higher order coupling must be directly accounted for. For example, the Fermat spiral scan scheme (Figure \ref{fig4}(b)) is claimed to provide a more uniform coverage than the perturbed raster scans and
the concentric circle pattern (Figure \ref{fig4}(a)), thus lowering  the  overlap ratio  \cite{scan1}.  A rigorous theory for general optimal scans, however,  is beyond the scope of the present work and has to be left for future research.

 \appendix
 
 \section{Object support constraint (OSC)}\label{sec:osc}
 Instead of a tight support, an object part may possess various degrees of loose support depending on the scan position
 and size of the block (see Figure \ref{fig:corn}). The looseness of support can be  characterized by a set of admissible shifts $T_0$ as follows. 
  
 {\bf Object Support Constraint (OSC):}
{\em  An object estimate $g^0$ satisfies the Object Support Constraint (OSC) with respect to a given set of shifts $
T_0
$ 
 if  $\mbm\in T_0$ whenever
 \beq
 \label{shifts2}
 %\supp(g^0)\mb{\rm\quad or \quad } \supp(\mb{\rm Twin}(g^0))\subseteq\mb{\rm Box}[\supp(f^0)]-T_0,\\
\commentout{%09/21/2018
 && \supp(g^0)\cup   \supp(\mb{\rm Twin}(g^0))\subseteq\mb{\rm Box}[\supp(f^0)]-T_0
 }
 \supp(g^0) \quad\mb{\rm or}\quad  \supp(\mb{\rm Twin}(g^0)) \subseteq\mb{\rm Box}[\supp(f^0)]-\mbm.
%\label{shifts2'}   \supp(\mb{\rm Twin}(g^0))\subseteq\mb{\rm Box}[\supp(f^0)]-T_0 %\label{shifts2'} \supp(g^\bt)\mb{\rm \quad or \quad} \supp(\mb{\rm Twin}(g^\bt))\subseteq\mb{\rm Box}[\supp(f^\bt)]-\cS',
%\label{shifts2'}\mb{\rm or\quad}&& \supp(g^\bt)\bigcup \supp(\mb{\rm Twin}(g^\bt))\subseteq\mb{\rm Box}[\supp(f^\bt)]-T_1
\eeq
}

We can use OSC to describe the precision of our prior knowledge about $ \mb{\rm Box}[\supp(f^0)]$ when $f^0$ has  a loose support in $\cM^0$. The smaller the set $T_0$ is, the more precise the OSC is. When  $\mb{\rm Box}[\supp(f^0)]=\cM^0$,  we can set $T_0=\{(0,0)\}$ since the condition \eqref{shifts2} becomes
\beq
\label{osc2}
\commentout{%09/21/2018
\supp(g^0)\cup   \supp(\mb{\rm Twin}(g^0))\subseteq \cM^0
}
 \supp(g^0) \quad\mb{\rm or}\quad  \supp(\mb{\rm Twin}(g^0))\subseteq \cM^0
\eeq
which is null and gives  no new information. 

Under OSC, the quantity $s$ in \eqref{3.2.1} is defined instead as
  \beq
 \label{3.2.1'}
s=\min_{\mbm,\mbm'\in T_0}|S_0(\mbm)|\wedge |S'_0(\mbm')|\ge 2
\eeq
where $T_0$ is the set of shifts in OSC and 
 \beq
S_{0}(\mbm)&=&\cM^0\cap \cM^\bt\cap (\supp(f^0)-\mbm)\nn\\
%S_1(\mbm)&=& \cM^0\cap \cM^\bt\cap [\supp(f^\bt)-\mbm)]\nn\\
S'_{0}(\mbm)&=&\cM^0\cap \cM^\bt\cap (\supp( \mb{\rm Twin}(f^0))+\mbm).\nn
%S'_{1}(\mbm)&=&\cM^0\cap \cM^\bt\cap [\supp(\mb{\rm Twin}(f^\bt))+\mbm].\nn
\eeq

The construction in Example \ref{ex3.1} satisfies the OSC \eqref{shifts2} with 
\[
T_0=\Big\{(a,0): a=0,\dots, m/2\Big\}.%\quad \cS'= \Big\{(a,0): a=-m/3,\dots, 0\Big\}  
\]
On the other hand, if  $f^{0}_{1}, f^{1}_{0}$ are non-vanishing, then it can be verified 
that 
$s=0$, consistent with
the fact that the probability for ambiguity is one as shown in the above construction. 

However, if we enhance the precision of the support knowledge by tightening  $T_0$ by any amount $l\ge 1$ as 
\beq
\label{3.30}
T_0= \Big\{(a,0): a=0,\dots, m/2-l\Big\},%\quad \cS'=\Big\{(a,0): a=-m/3+l,\dots, 0\Big\},\quad l\ge 1
\eeq
then the  constructions  would violate the OSC \eqref{shifts2}, and be rejected. Moreover, for \eqref{3.30}, $s=ml$ with nonvanishing $f^{0}_{1}, f^{1}_{0}$ so  the probability of uniqueness is closed to one  for
$m\gg 1$ as predicted by Theorem \ref{thm:u}. 

Although OSC is more general than the anchoring assumption, it is also more complicated and less practical  so we do not
pursue the full proof here. For the interested reader, we refer to
 the preliminary version \cite{blind-ptycho} for the proof of Theorem \ref{thm:u} under the assumption of OSC.

 \section{Proof of Theorem \ref{thm:u}}\label{sec:proof}
%\begin{proof} (Theorem \ref{thm:u})

Let  $\bN=(m,m)$. Applying  Corollary \ref{cor1} to both $\cM^0$ and $\cM^\bt$
we have the following alternatives: For  some $\mbm_1,\mbm_2\in \IZ^2, \theta_0,\theta_\bt\in \IR$.
\beq
\label{1.60}
g^0(\bn)&=& e^{\im \theta_0} f^0(\bn + \mbm_1)\mu^0(\bn + \mbm_1)/\nu^0(\bn) \\
\nn\mb{\rm or\quad Twin}(g^0)(\bn)&=& e^{-\im \theta_0} f^0(\bn+\mbm_1) \mu^0(\bn +\mbm_1)/\mb{\rm Twin}(\nu^0)(\bn),\quad\forall \bn \in \cM^0
\eeq
and
\beq
\label{1.70}
g^\bt(\bn)&=&e^{\im \theta_\bt} f^\bt(\bn + \mbm_2)\mu^\bt(\bn + \mbm_2)/\nu^\bt(\bn) \\
\nn\mb{\rm or\quad Twin}(g^\bt)(\bn)&=& e^{-\im \theta_\bt} f^\bt(\bn+\mbm_2) \mu^\bt(\bn +\mbm_2)/\mb{\rm Twin}(\nu^\bt)(\bn),\quad\forall \bn \in \cM^\bt.
\eeq
Note that $\mbox{\rm Twin}(g^\bt)(\bn)=\bar g^\bt(\bN+2\bt-\bn)$ so  we can rewrite \eqref{1.60} and
\eqref{1.70} as
\beq
\label{1.6}
g^0(\bn)&=& e^{\im \theta_0} f^0(\bn + \mbm_1)\mu^0(\bn + \mbm_1)/\nu^0(\bn) \\
\nn&\mbox{\rm or}& e^{\im \theta_0} \bar f^0(\bN-\bn+\mbm_1)\bar \mu^0(\bN-\bn +\mbm_1)/\nu^0(\bn),\quad\forall \bn \in \cM^0
\eeq
and
\beq
\label{1.7}
g^\bt(\bn)&=&e^{\im \theta_\bt} f^\bt(\bn + \mbm_2)\mu^0(\bn + \mbm_2-\bt)/\nu^0(\bn-\bt) \\
\nn&\mbox{\rm or}& e^{\im \theta_\bt} \bar f^\bt(\bN+2\bt-\bn +\mbm_2)\bar\mu^0(\bN+\bt-\bn +\mbm_2)/\nu^0(\bn-\bt),\quad\forall \bn \in \cM^\bt
\eeq
for  some $\mbm_1,\mbm_2\in \IZ, \theta_0,\theta_\bt\in \IR$ where
we have used the relation $\mu^\bt(\cdot)=\mu^0(\cdot-\bt),\nu^\bt(\cdot)=\nu^0(\cdot-\bt)$. Note that
$\bN$ and $\bN+\bt=(m+t_1, m+t_2)$  are the upper-right corners of $\cM^0$ and $\cM^\bt$, respectively. 
\commentout{
\beq
\label{1.6}
g^0(\bn)&=& e^{\im \theta_0} f^0(\bn + \mbm_1)\mu^0(\bn + \mbm_1)/\nu^0(\bn) \\
\nn\mb{\rm or\quad Twin}(g^0)(\bn)&=& e^{-\im \theta_0} f^0(\bn+\mbm_1) \mu^0(\bn +\mbm_1)/\mb{\rm Twin}(\nu^0)(\bn),\quad\forall \bn \in \cM^0
\eeq
and
\beq
\label{1.7}
g^\bt(\bn)&=&e^{\im \theta_\bt} f^\bt(\bn + \mbm_2)\mu^\bt(\bn + \mbm_2)/\nu^\bt(\bn) \\
\nn\mb{\rm or\quad Twin}(g^\bt)(\bn)&=& e^{-\im \theta_\bt} f^\bt(\bn+\mbm_2) \mu^\bt(\bn +\mbm_2)/\mb{\rm Twin}(\nu^\bt)(\bn),\quad\forall \bn \in \cM^\bt
\eeq
for some $\mbm_1,\mbm_2, \theta_0,\theta_\bt$. 
}

In view of  the anchoring assumption, \eqref{1.60} implies 
$\mbm_1=0$.

We now focus on the intersection $\cM^0\cap \cM^\bt$ where \eqref{1.6} and \eqref{1.7}
both hold. We have then four possible ambiguities from the crossover of the alternatives in \eqref{1.6} and \eqref{1.7}.  

%For ease of notation, let us denote
%\[
%S_0=\cM^0\cap \cM^\bt\cap \supp(f^0), \quad S_0'=\cM^0\cap \cM^\bt\cap \supp( \mb{\rm Twin}(f^0)).
%\]

{\bf Case (i).} The combination of the first alternatives in \eqref{1.6} and \eqref{1.7} imply that for all $\bn\in \cM^0\cap \cM^\bt$
\beq
\label{1.9}
{e^{\im\theta_0}f^0(\bn )\mu^0(\bn)/\nu^0(\bn)}
&=&e^{\im \theta_\bt}  f^\bt(\bn + \mbm_2 )\mu^0(\bn-\bt + \mbm_2)/\nu^0(\bn-\bt)
\eeq
provided that $f^0(\bn)$ and $f^\bt(\bn+\mbm_2)$ are both zero or nonzero. 

We now show that with high probability \eqref{1.9} fails to hold  for some  $\bn\in \cM^0\cap \cM^\bt$.  

Consider any $ \bn\in S_0$ (hence  
$f^0(\bn)\neq 0$) and assume that $f^\bt(\bn+\mbm_2)\neq 0$. Otherwise, 
\eqref{1.9} holds with probability zero. 

We obtain by taking logarithm on both sides of \eqref{1.9} that  \beq
\label{1.10}
&&\ln{\mu^0(\bn)}+\ln{\mu^0(\bn-\bt)}-\ln{\mu^0(\bn-\bt+\mbm_2)}-\ln{\mu^0(\bn)}\\
&=&\im\theta_\bt-\im\theta_0  -\ln{f^0(\bn)}+\ln{f^\bt(\bn+\mbm_2)}+\ln \alpha(\bn)-\ln \alpha(\bn-\bt)\nn\\
&&+\im\phi(\bn)-\im\phi(\bn-\bt)\nn
\eeq
modulo  $\im 2\pi$. We want to show that if $\lt|S_0\rt|$ is sufficiently large
then \eqref{1.10} holds with at most exponentially small probability.

Since $\bn\in \cM^0\cap\cM^\bt$ and $\bn+\mbm_2\in \cM^\bt$,  the points associated with the lefthand side of \eqref{1.10}, $ \bn-\bt, \bn+\mbm_2-\bt$,  belong in $ \cM^0$. 
Hence the random variables on the lefthand side of \eqref{1.10} are well-defined and have a finite value. 

The two points $ \bn-\bt, \bn+\mbm_2-\bt$  can not be identical unless
$\mbm_2=0$. In other words,  if $\mbm_2\neq 0$, then
the imaginary part $\Theta_1$ of the  lefthand side of \eqref{1.10} 
\beq
\label{r3}
\Theta_1:=\theta(\bn-\bt)- \theta(\bn-\bt+\mbm_2)
\eeq
is the sum of two independent random variables and hence the support set of its probability density contains $(-2\gamma\pi,2\gamma\pi]$. 

On the righthand side of \eqref{1.10}, however, as $f^0(\bn)$ and $f^\bt(\bn+\mbm_2)$ are given (hence deterministic), the phase fluctuation is determined by $\phi(\bn)-\phi(\bn-\bt)$ which ranges over the interval $(-2\delta\pi,2\delta\pi]$ due to
the constraint \eqref{u1}. Consequently  \eqref{1.10} holds true with
probability at most 
\beqn
p_1:=\max_{a\in \IR}\mbox{\rm Pr} \{\Theta_1\in (a-2\delta\pi, a+2\delta\pi]\}<1,
\eeqn
for each $\bn$, since $\delta<\min(\gamma,\half)$. 

For all $\bn \in S_0$,  there are at least $|S_0|/2!$ statistically independent  instances, corresponding to the number of  {\em non-intersecting} $
 \{\bn-\bt, \bn+\mbm_2-\bt\}$. 
 Therefore \eqref{1.10} holds true 
 with probability at most $p_1^{|S_0|/2!}$ unless $\mbm_2=0$. 

On the other hand, for $\mbm_2=0$, the desired result \eqref{masked1}-\eqref{masked2} follows directly from the first alternatives in \eqref{1.6} and \eqref{1.7}. 

{\bf Case (ii).}  Consider the combination of the first alternative in \eqref{1.6} and the second alternative in \eqref{1.7} that for $\bn\in \cM^0\bigcap\cM^\bt$ 
\beq\label{1.12'}
g(\bn)&=&e^{\im \theta_0} f^0(\bn )\mu^0(\bn)/\nu^0(\bn)\\
&=&e^{\im \theta_\bt} \bar f^\bt(\bN+2\bt-\bn +\mbm_2)\bar \mu^0 (\bN+\bt-\bn +\mbm_2)/\nu^0(\bn-\bt),\nn
\eeq
provided that $f^0(\bn)$ and $\bar f^\bt(\bN+2\bt-\bn +\mbm_2 )$ are  
both zero or nonzero.

Consider  any $ \bn\in S_0$ (hence  
$f^0(\bn)\neq 0$) and  assume $\bar f^\bt(\bN+2\bt-\bn+\mbm_2)\neq 0$. Otherwise \eqref{1.12'} is false and can be ruled out.

Taking logarithm and rearranging terms in \eqref{1.12'} we have
\beq\label{1.12}
&&{\ln{\mu^0(\bn)}+\ln{\mu^0(\bn-\bt)}-\ln{\bar\mu^0(\bN+\bt-\bn+\mbm_2)}}-  \ln{\mu^0(\bn)}\\
&=&\im\theta_\bt-\im\theta_0-\ln{f^0(\bn)}+\ln{\bar f^\bt(\bN+2\bt-\bn+\mbm_2)}+\ln \alpha(\bn)-\ln \alpha(\bn-\bt)\nn\\
&&+ \im \phi(\bn)-\im\phi(\bn-\bt).\nn
\eeq
The imaginary parts of the lefthand side of \eqref{1.12} 
\beq
\label{1.120}
\Theta_2:=\theta(\bn-\bt)+\theta(\bN+\bt-\bn+\mbm_2)
\eeq
is the sum of two independent random variables unless
\beqn
 \bn=\bt+\half(\bN+\mbm_2), 
\eeqn
in which case $\Theta_2=2\theta(\bn-\bt)$. Since $|S_0|\ge 2$,  there exists some $\bn\in S_0$ such that $\Theta_2$ is the sum of 
two independent random variables and hence the support of its probability density function
contains $(-2\gamma\pi,2\gamma\pi]$. 
By the same argument as above, \eqref{1.12} holds true  with  probability at most $p_2^{|S_0(\mbm_1)|/2!}$ where
\[
p_2:=\max_{a\in \IR}\mbox{\rm Pr} \{\Theta_2\in (a-2\delta\pi, a+2\delta\pi]\}<1 
\]
since $\delta<\min(\gamma,\half)$. 

{\bf Case (iii).}  Consider the combination of the second alternative in \eqref{1.6} and the first alternative in \eqref{1.7}  that for $\bn\in \cM^0\bigcap\cM^\bt$ 
\beq\label{r7'}
g(\bn)&=& e^{\im \theta_0} \bar f^0(\bN-\bn)\bar \mu^0 (\bN-\bn)/\nu^0(\bn)\\
&=&e^{\im \theta_\bt} f^\bt(\bn + \mbm_2)\mu^0(\bn -\bt +\mbm_2)/\nu^0(\bn-\bt)  \nn
\eeq
provided that $f^0(\bn)$ and $f^\bt(\bN+2\bt-\bn +\mbm_2 )$ are  
both zero or nonzero.
Consider any $ \bn\in S'_0$ (hence  
$ \bar f^0(\bN-\bn)\neq 0$) and assume $f^\bt(\bn+\mbm_2)\neq 0$. Otherwise \eqref{r7} can be ruled out.

Taking logarithm and rearranging terms in \eqref{r7'} we have
\beq
\label{r7}
&&{\ln{\bar\mu^0(\bN- \bn)}+\ln{\mu^0(\bn-\bt)}-\ln{\mu^0(\bn-\bt+\mbm_2)}}-  \ln{\mu^0(\bn)}\\
&=&\im\theta_\bt-\im\theta_0-\ln{\bar f^0(\bN-\bn)}+\ln{f^\bt(\bn+\mbm_2)}+\ln \alpha(\bn)-\ln \alpha(\bn-\bt)\nn\\
&&+ \im \phi(\bn)-\im\phi(\bn-\bt).\nn
\eeq
As before, we want to show that if $|S_0'|$ is sufficiently large, then  \eqref{r7} holds with at most exponentially small probability. 

Since $\bn\in \cM^0\cap \cM^\bt$ and $\bn+\mbm_2\in \cM^\bt,$ the four points associated with the lefthand side of \eqref{r7}, $\bN-\bn, \bn-\bt,\bn-\bt+\mbm_2, \bn$, belong in $\cM^0$. Hence
the four random variables on the lefthand side of \eqref{r7} are well-defined. 

 The imaginary parts of the lefthand side of \eqref{r7} given by 
\beq
\label{r8}
\Theta_3:=-\theta(\bN-\bn)+\theta(\bn-\bt)-\theta(\bn-\bt+\mbm_2)-\theta(\bn)
\eeq
is the sum of  two, three or four independent random variables unless
%The two sub-sets $\{\bN-\bn,\bn-\bt\}$ and $\{\bn-\bt+\mbm_2, \bn\}$ can not coincide unless
%\beqn
%\mbm_2=\bt,\quad \bn=\half(\bN+\bt)
%\eeqn
%or
\beq
\label{990}
\mbm_2=0,\quad \bn=\half\bN, 
\eeq
in which case $\Theta_3=2\theta(\bN/2)$. 
  
Since $S'_0\ge 2$,  there exists some $\bn\in S'_0$ such that $\Theta_2$ is the sum of at least
two independent random variables and hence the support of its probability density function
contains $(-2\gamma\pi,2\gamma\pi]$. 

On the righthand side of \eqref{r7}, the phase fluctuation is determined by $\phi(\bn)-\phi(\bn-\bt)$ which ranges over the interval $(-2\delta\pi, 2\delta\pi]$ due to the \mpc \eqref{u1}. 
So \eqref{r7} holds true  with  probability at most 
\[
p_3:=\max_{a\in \IR}\mbox{\rm Pr} \{\Theta_3\in (a-2\delta\pi, a+2\delta\pi]\}<1 
\]
for each $\bn$, since $\delta<\min(\gamma,\half)$. 

For all $\bn\in S_0'$ such that $\bn\neq \bN/2$, there are at least $(|S_0'|-1)/4!$ statistically independent instances, corresponding to the number of non-intersecting $\{\bN-\bn, \bn-\bt,\bn-\bt+\mbm_2, \bn\}$
Therefore, \eqref{r7} holds true with probability at most 
$p_3^{(|S'_0|-1)/4!}$.

{\bf Case (iv).}  Now consider the combination of the second alternatives in \eqref{1.6} and \eqref{1.7} that for $\bn\in \cM^0\bigcap\cM^\bt$ 
\beq
\label{r5}
g(\bn)&=&e^{\im \theta_0} \bar f^0(\bN-\bn)\bar \mu^0 (\bN-\bn)/\nu^0(\bn)\\
&=& e^{\im \theta_\bt} \bar f^\bt (\bN+2\bt-\bn +\mbm_2)\bar \mu^0 (\bN+\bt-\bn +\mbm_2)/\nu^0(\bn-\bt)\nn
\eeq
provided that $\bar f^0(\bN-\bn)$ and $\bar f^\bt(\bN+2\bt-\bn +\mbm_2 )$ are  
both zero or nonzero.

Consider any $ \bn\in S'_0$ (hence  
$ \bar f^0(\bN-\bn)\neq 0$) and assume $ \bar f^\bt(\bN+2\bt-\bn+\mbm_2)\neq 0$. Otherwise \eqref{r5} is ruled out.

 After taking logarithm and rearranging terms for $\bn\in S_0'$  \eqref{r5} becomes
 \beq
\label{1.13}
&&{\ln{\bar\mu^0(\bN-\bn)}+\ln{\mu^0(\bn-\bt)}-\ln{\bar\mu^0(\bN+\bt-\bn+\mbm_2)} -\ln{\mu^0(\bn)}}\\
&=&\im\theta_\bt-\im\theta_0- \ln{\bar f^0(\bN-\bn)}+\ln{\bar f^\bt(\bN+2\bt-\bn+\mbm_2)}\nn\\
&& +\ln \alpha(\bn)-\ln \alpha(\bn-\bt)+\im\phi(\bn)-\im\phi(\bn-\bt).\nn
\eeq
The imaginary part of the lefthand side of \eqref{1.13} 
\beq
\label{1.130}
\Theta_4:=-\theta(\bN-\bn)+\theta(\bn-\bt)+\theta(\bN+\bt-\bn+\mbm_2)-\theta(\bn)
\eeq
 is the sum of two, three or four independent random variables
unless 
\beqn
\bN+\bt-\bn+\mbm_2&=&\bn\\
\bN-\bn&=& \bn-\bt
\eeqn
or equivalently 
\beqn
\mbm_2=0, \quad \bn=\half (\bN+\bt). 
\eeqn
Since $|S'_0|\ge 2$, the support of the probability density of $\Theta_4$
contains  $(-2\gamma\pi,2\gamma\pi]$. 

The same analysis then implies that  \eqref{1.13} holds true 
with  probability at most $p_4^{(|S'_0|-1)/4!}$ where
\[
p_4:=\max_{a\in \IR}\mbox{\rm Pr} \{\Theta_4\in (a-2\delta\pi, a+2\delta\pi]\}<1 
\]
since $\delta<\min(\gamma,\half)$.

In summary, ambiguities (i)-(iv) are present with probability at most $c^s$ and 
hence the desired result \eqref{masked1}-\eqref{masked2} holds true with probability greater than $1-c^s$ where the positive constant $c<1$ depends only on $\delta$ and  the probability density function of the mask phase. 
%\end{proof}

\section{Proof of Theorem \ref{thm:many}}\label{sec:many}

Without loss of generality, we may assume $\ell_0=0$.

Let $\cM^{\ell(k)}$ denote an adjacent block of $\cM^k$ such that
$f^{\ell(k)}$ and $f^k$ are $s-$connected. When the $s$-connected neighbor of $\cM^k$ is not unique, we make an arbitrary selection  $\ell(k)$ such that $\ell(\ell(k))=k$. Let $L_j= \{f^k, f^{\ell(k)}: k=0,\dots, j\}$. 

We prove \eqref{3.100} 
by induction. Suppose that \eqref{3.100} holds for $k=0,\dots,j$. We wish to show that there is another  part, say $f^{j+1}\not\in L_j$, such that \eqref{3.100} holds for $k=0,\dots,j, j+1,$ unless $j=Q-1$. 
Since $\{f^k:k=0,\cdots, Q-1\}$ is $s$-connected,  at least some $f^{j+1}$ is $s$-connected to, say $f^l\in L_j$ if $j<Q-1$. 

Denote $S_0:=\cM^l\cap \cM^{j+1}\cap\supp(f)$.  %and $S_0':=\cM^l\cap \cM^{j+1}\cap \supp(\mb{\rm Twin}(f^l))$.
Applying  Corollary \ref{cor1} to $\cM^{j+1}$
we have the following alternatives: For  some $\mbm\in \IZ, \theta\in \IR$,
\beq
\label{1.70'}
g^{j+1}(\bn)&=&e^{\im \theta} f^{j+1}(\bn + \mbm)\mu^{j+1}(\bn + \mbm)/\nu^{j+1}(\bn) \\
\nn\mb{\rm or\quad Twin}(g^{j+1})(\bn)&=& e^{-\im \theta} f^{j+1}(\bn+\mbm) \mu^{j+1}(\bn +\mbm)/\mb{\rm Twin}(\nu^{j+1})(\bn),\quad\forall \bn \in \cM^{j+1}.
\eeq
Let $\cM^{j+1}=\cM^l+\bt$ for some shift $\bt$.

Consider the first alternative for $\bn\in \cM^l\cap \cM^{j+1}$:
\beq
\label{1.9'}
{e^{\im\theta_l}f^{l}(\bn)\mu^{l}(\bn)/\nu^{l}(\bn)}
&=&e^{\im \theta}  f^{j+1}(\bn + \mbm)\mu^{j+1}(\bn+ \mbm)/\nu^{j+1}(\bn)\\
&=&e^{\im \theta}  f^{j+1}(\bn + \mbm)\mu^{l}(\bn-\bt+ \mbm)/\nu^{l}(\bn-\bt)\nn
\eeq
provided that $f^l(\bn)$ and $f^{j+1}(\bn+\mbm)$ are both zero or nonzero. 

Suppose $ f^l(\bn)\cdot f^{j+1}(\bn+\mbm)\neq 0$. 
We obtain by taking logarithm on both sides of \eqref{1.9'} that  \beq
\label{1.10'}
&&\ln{\mu^l(\bn-\bt)}-\ln{\mu^l(\bn-\bt+\mbm)}\\
&=&\im\theta-\im\theta_l  -\ln{f^l(\bn)}+\ln{f^{j+1}(\bn+\mbm)}+\ln \alpha(\bn)-\ln \alpha(\bn-\bt)\nn+\im\phi(\bn)-\im\phi(\bn-\bt)\nn
\eeq
modulo  $\im 2\pi$. We want to show that if $s$ is sufficiently large
then \eqref{1.10'} holds with at most exponentially small probability unless $\mbm=0$.

Since $\bn\in \cM^l\cap\cM^{j+1}$ and $\bn+\mbm\in \cM^{j+1}$,  $ \bn-\bt$ and $\bn+\mbm-\bt$  belong in $ \cM^l$. 
Hence  the lefthand side of \eqref{1.10'} is well-defined and has a finite value. 

Unless $\mbm=0$, 
the imaginary part $\Theta_1$ of the  lefthand side of \eqref{1.10'} 
\beqn
\Theta_1:=\theta(\bn-\bt)- \theta(\bn-\bt+\mbm)
\eeqn
is the sum of two independent random variables and hence the support set of its probability density contains $(-2\gamma\pi,2\gamma\pi]$. 

On the righthand side of \eqref{1.10'}, however, as $f^l(\bn)$ and $f^{j+1}(\bn+\mbm)$ are deterministic, the phase fluctuation is determined by $\phi(\bn)-\phi(\bn-\bt)$ which is limited to the interval $(-2\delta\pi,2\delta\pi]$ due to
\mpc. Consequently  \eqref{1.10'} holds true with
probability at most 
\beqn
p_1:=\max_{a\in \IR}\mbox{\rm Pr} \{\Theta_1\in (a-2\delta\pi, a+2\delta\pi]\}<1,
\eeqn
for each $\bn$, since $ \delta<\min\{\gamma,1/2\}$.

For all $\bn \in S_0$,  there are at least $|S_0|/2$ statistically independent  instances, corresponding to the number of  {\em non-intersecting} $
 \{\bn-\bt, \bn+\mbm-\bt\}$. 
 Therefore \eqref{1.10'} holds true 
 with probability at most $p_1^{|S_0|/2}$ unless $\mbm=0$. 
On the other hand, for $\mbm=0$, the desired result \eqref{3.100} for $k=j+1$ follows directly from \eqref{1.70'}.

Consider the second alternative  in \eqref{1.70'} and note that
\[
\mbox{\rm Twin}(g^{j+1})(\bn)=\bar g^{j+1}(\bN+2\bt_{j+1}-\bn),\quad \mbox{\rm Twin}(\nu^{j+1})(\bn)=\bar \nu^{j+1}(\bN+2\bt_{j+1}-\bn). 
\]
Rewriting the second alternative  we obtain  for $\bn\in \cM^l\bigcap\cM^{j+1}$ 
\beq\label{1.12''}
\lefteqn{e^{\im \theta_l} f^l(\bn)\mu^l(\bn)/\nu^l(\bn)}\\
&=&e^{\im \theta} \bar f^{j+1}(\bN+2\bt_{j+1}-\bn +\mbm)\bar \mu^{j+1}(\bN+2\bt_{j+1}-\bn +\mbm)/\nu^{j+1}(\bn),\nn\\
&=&e^{\im \theta} \bar f^{j+1}(\bN+2\bt_{j+1}-\bn +\mbm)\bar \mu^{l}(\bN+2\bt_{l}-\bt-\bn +\mbm)/\nu^l(\bn-\bt),\nn
\eeq
provided that $f^l(\bn )$ and $\bar f^{j+1}(\bN+2\bt_{j+1}-\bn +\mbm)$ are  
both zero or nonzero.

Consider  any $ \bn\in S_0$ (hence  
$f^l(\bn)\neq 0$) and  assume $\bar f^{j+1}(\bN+2\bt_{j+1}-\bn+\mbm)\neq 0$. Otherwise \eqref{1.12''} is false and can be ruled out.

Taking logarithm and rearranging terms in \eqref{1.12''} we have
\beq\label{1.12+}
&&\ln{\mu^l(\bn-\bt)}-\ln{\bar\mu^l(\bN+2\bt_l-\bt-\bn+\mbm)}\\
&=&\im\theta-\im\theta_l-\ln{f^l(\bn)}+\ln{\bar f^{j+1}(\bN+2\bt_{j+1}-\bn+\mbm)}+\ln \alpha(\bn)-\ln \alpha(\bn-\bt)\nn\\
&&+ \im \phi(\bn)-\im\phi(\bn-\bt).\nn
\eeq
The imaginary parts of the lefthand side of \eqref{1.12+} 
\beqn
\Theta_2:=\theta(\bn-\bt)+\theta(\bN+2\bt_l-\bt-\bn+\mbm)
\eeqn
is the sum of two independent random variables unless
\beqn
\bn=\bt_l+\half(\bN+\mbm)
\eeqn
in which case $\Theta_2=2\theta(\bn-\bt)$ is not a sum of two independent random variables. Hence the support of the probability density function
of $\Theta_2$ contains $(-2\gamma\pi,2\gamma\pi]$. 
By the same argument as above, \eqref{1.12+} holds true  with  probability at most $p_2^{|S_0|/2}$ where
\[
p_2:=\max_{a\in \IR}\mbox{\rm Pr} \{\Theta_2\in (a-2\delta\pi, a+2\delta\pi]\}<1 
\]
since $\delta<\min(\gamma,\half)$.

Combining the analysis of the two alternatives,  \eqref{3.100} fails for $k=j+1$
with probability at most $p_1^{|S_0|/2}+p_2^{|S_0|/2}\leq 2p^{|S_0|/2}$ conditioned on the event
that \eqref{3.100} holds true for $k=0,\dots, j$ where $p$ is as  given in \eqref{pgamma}. 
Therefore, the desired result \eqref{3.100} holds with probability
at least $
1-2 Q p^{|S_0|/2}
$ after subtracting the failure probability for each additional block.

\section*{Acknowledgments}
The research of A. F. is supported by  the US National Science Foundation  grant DMS-1413373.  The  present work was initiated during a stimulating and fruitful visit of A.F. to National Center for
Theoretical Sciences (NCTS), Taiwan,  in October 2017.  Research of P. C. is supported in part by the grant 
MOST 107-2115-M-005-006-MY3 from Ministry of Science and Technology, Taiwan. \\

\end{document}